\newcommand{\declarecolor}[2]{\definecolor{#1}{RGB}{#2}\expandafter\newcommand\csname #1\endcsname[1]{\textcolor{#1}{##1}}}
\def\prob#1#2{\mbox{Pr}_{#1}\left[ #2 \right]}
\def\var#1#2{\mbox{\bf Var}_{#1}\left[ #2 \right]}
\def\defeq{\stackrel{\mathrm{def}}{=}}
\def\abs#1{\left|#1  \right|}
\newtheorem{theorem}{Theorem}
\newtheorem{lemma}{Lemma}[section]
\newtheorem{corollary}{Corollary}[section]
\newtheorem{proposition}{Proposition}
\newtheorem{definition}{Definition}[section]
\def\calX{\mathcal{X}}
\def\calL{\mathcal{L}}
\def\calT{\mathcal{T}}
\newcommand\R{\boldsymbol{\mathbb{R}}}
\newcommand\xx{\boldsymbol{\mathit{x}}}
\newcommand{\diffp}{\varepsilon}
\newcommand{\cX}{\mathcal{X}}
\newcommand{\expo}{\texttt{Expo}}
\newcommand{\AT}{\texttt{AboveThreshold}}
\newcommand{\dham}{d_{\texttt{ham}}}
\newcommand{\len}[1]{\texttt{len}_{#1}}
\newcommand{\asym}[1]{\texttt{s}_{#1}}
\newcommand{\yt}{t}
\newcommand{\m}{\omega}
\newcommand{\class}{c}
\newcommand{\mse}{\texttt{MSE}}
\newcommand{\mae}{\texttt{MAE}}
\newcommand{\ce}{\texttt{CE}}
\newcommand{\bce}{\texttt{BCE}}
\newcommand{\im}{\texttt{im}}
\begin{document}

\title{Instance-Specific Asymmetric Sensitivity in Differential Privacy}

\author{  David Durfee\\
  Anonym Inc.\\
  \texttt{david@anonymco.com}}

\maketitle
\thispagestyle{empty}

\begin{abstract}

We provide a new algorithmic framework for differentially private estimation of general functions that adapts to the hardness of the underlying dataset.
We build upon previous work that gives a paradigm for selecting an output through the exponential mechanism based upon closeness of the inverse to the underlying dataset, termed the inverse sensitivity mechanism.
Our framework will slightly modify the closeness metric and 
instead give a simple and efficient application of the sparse vector technique.
While the inverse sensitivity mechanism was shown to be instance optimal, it was only with respect to a class of unbiased mechanisms such that the most likely outcome matches the underlying data.
We break this assumption in order to more naturally navigate the bias-variance tradeoff, which will also critically allow for extending our method to unbounded data.
In consideration of this tradeoff, we provide strong intuition and empirical validation that our technique will be particularly effective when the distances to the underlying dataset are asymmetric.
This asymmetry is inherent to a range of important problems 
including fundamental statistics such as variance, as well as commonly used machine learning performance metrics for both classification and regression tasks.
We efficiently instantiate our method in $O(n)$ time for these problems and empirically show that our techniques will give substantially improved differentially private estimations.

\end{abstract}

\section{Introduction}\label{sec:intro}

We consider the general problem of estimating aggregate functions or statistics of a dataset with differential privacy. 
The massive increase in data collection to improve analytics and modelling across industries has made such data computations invaluable, but can also leak sensitive individual information. 
Rigorously measuring such leakage can be achieved through differential privacy, which quantifies the extent that one individual's data can affect the output. 
Much of the focus within the field of differential privacy is upon constructing algorithms that give both accurate output and privacy guarantees by injecting specific types of randomness.
One of the most canonical mechanisms for achieving this considers the maximum effect one individual's data could have upon the output of a given function, referred to as the \textit{sensitivity} of the function, and adds proportional noise to the function output.
In general, the notion of sensitivity plays a central role in many differentially private algorithms, directly affecting the accuracy of the output.

While using the worst-case sensitivity across all potential datasets will ensure privacy guarantees, the utility can be improved by using variants of sensitivity that are specific to the underlying dataset.
This notion was initially considered in ~\cite{nissim2007smooth}, introducing \textit{smooth sensitivity}, an interpolation between worst-case sensitivity and \textit{local sensitivity} of the underlying data, by which noise could be added proportionally.
The smooth sensitivity adapts well to the underlying data and was further extended to other commonly used variants of the original privacy definition \cite{bun2019average}.

More aggressive methods were also considered with a data-independent conjectured sensitivity parameter and more accurate results provided when the underlying data complies with the parameter. 
The propose-test-release methods check that all datasets close to the underlying data have sensitivity below a parameter and add noise proportional when the criteria is met and fail otherwise \cite{dwork2009differential,thakurta2013differentially}. 
Preprocessing methods provide an approximation of the function with sensitivity below a given parameter by which noise can be added proportionally and the approximation is accurate for underlying data with low sensitivity \cite{chen2013recursive,blocki2013differentially,kasiviswanathan2013analyzing,cummings2020individual}.
Clipping techniques, commonly seen in differentially private stochastic gradient descent \cite{abadi2016deep}, are also a more rudimentary and efficient preprocessing method for ensuring sufficiently small sensitivity.
The primary challenge with these approaches is the sensitivity parameter must be specified \textit{a priori} and can add significant bias if the underlying data does not comply with the parameter. 

In contrast, the inverse sensitivity mechanism directly improves upon the smooth sensitivity technique adapting even better to the underlying data.
While several instantiations had been previously known in the literature, it was introduced in it's full generality in \cite{asi2020instance}. 
Specifically, this framework considers all potential outputs based upon the closeness of their inverse to the underlying data and applies the exponential mechanism to select a point accordingly.
This exact methodology can even improve upon adding noise proportional to the local sensitivity of the underlying data, which generally violates differential privacy.
Follow-up work gave approximations of this method that allow for efficient implementations of more complex instantiations \cite{asi2020near}.
For both the exact and approximate versions, the inverse sensitivity mechanism is instance optimal and nearly instance optimal, respectively,
under certain assumptions 
\cite{asi2020instance,asi2020near}.

\subsection{Our techniques}

We build upon the inverse sensitivity mechanism, particularly within the class of functions for which it was shown to be optimal.
However, those guarantees only held for a class of unbiased mechanisms such that the most likely outcome matches the underlying data.
The inverse sensitivity mechanism and smooth sensitivity techniques fit this characterization of unbiased.
The methods that specify a data-independent sensitivity parameter break this assumption by essentially fixing the variance through this parameter but adding significant bias if the variance parameter is set too low.
Our method will also break the unbiased assumption but still adapt well to the underlying data to more naturally navigate the bias-variance tradeoff.

In particular, we similarly consider the distance from the underlying data to the inverse of each possible output, which can be considered the inverse sensitivity.
However, we instead invoke the well-known sparse vector technique, originally introduced in \cite{dwork2009complexity}, to select an output close to the underlying data.
The iterative nature of sparse vector technique will create a slight bias, 
while still adapting well to the underlying data.
By utilizing this iterative technique, we can also better take advantage when the sensitivities are asymmetric that allows us to reduce the variance, and we thusly term our method the \textit{asymmetric sensitivity mechanism}.
In fact, the local sensitivity can be infinite with unbounded data and our technique can still naturally handle this setting for a wide variety of functions including our instantiations.
Additionally, we intuitively formalize the notion of asymmetric sensitivities and empirically verify the significant improved relative performance of our method in these conditions.

Our notion of asymmetric sensitivities is inherent to a range of problems, and we first instantiate our method upon variance, a fundamental property of a dataset that is widely used in statistical analysis.
Likewise, this property will also apply to commonly used machine learning performance metrics: cross-entropy loss, mean squared error (MSE), and mean absolute error (MAE).
Model performance evaluation is an essential part of a machine learning pipeline, particularly for iterative improvement, so accurate and private evaluation is critical.
We instantiate our method upon these functions as well, and
give an extensive empirical study for each instantiation across a variety of datasets and privacy parameters.
We show that our method significantly improves performance of private estimation for these important problems.
We further complement our results with an approximate method that allows for more efficient implementations of general functions while still preserving the asymmetry that we exploit for improved estimations.
This will allow us to give $O(n)$ time implementations for each invocation of our method.

\subsection{Additional related works}

While the most closely related literature was discussed in more detail previously, we provide additional related works here.
Recent work considered instance-optimality but for estimators population quantities \cite{mcmillan2022instance}, which differ from the empirical quantities studied here and in the other previously mentioned work.
Additional work formally considered the bias-variance-privacy trade-off particularly for mean estimation \cite{kamath2023bias}, but considers bias in the more classical sense.
Interestingly, it's also been seen in the work for obtaining (asymptotically) optimal mean estimation for subgaussian distributions \cite{karwa2018finite,bun2019average} and distributions satisfying
bounded moment conditions \cite{barber2014privacy,kamath2020private}, that adding bias was necessary.
This fits with our results where bias, albeit a different type, was needed to improve instance-specific differential privacy.

\subsection{Organization}

In Section~\ref{sec:inverse}, we introduce our asymmetric sensitivity mechanism along with the approximate variant. 
In Section~\ref{sec:asymmetric}, we provide intuition with empirical validation upon our method significantly improving performance when the sensitivities are asymmetric. 
In Section~\ref{sec:basic_stats}, we efficiently instantiate our method for private variance estimation, and provide an extensive empirical study showing significant improvement in accuracy.
In Section~\ref{sec:model_evaluation}, we further invoke our method upon model evaluation for both classification and regression tasks with corresponding empirical studies.
Further analysis from the main body is pushed to the appendix.

\section{Preliminaries}\label{sec:notation}

For simplicity and ease of comparison we will borrow much of the notation from \cite{asi2020instance,asi2020near}.

\begin{definition}\label{def:hamming}
    Let $\xx, \xx'$ be datasets of our data universe $\calX^n$. We define $\dham(\xx,\xx') = |\{i : \xx_i \neq \xx'_i\}|$ to be the Hamming distance between these datasets. 
    If $\dham(\xx, \xx') \leq 1$ then $\xx,\xx'$ are \textit{neighboring} datasets.
    
\end{definition}

Note that we are assuming the \textit{swap} definition of neighboring datasets but will also discuss how our results can apply to the \textit{add-subtract} definition in Appendix~\ref{subsec:add_subtract}. We further define the (global) sensitivity.

\begin{definition}\label{def:sensitivity}
    A function $f: \cX^n \to \R$ has sensitivity $\Delta$ if for any neighboring datasets $|f(\xx) - f(\xx')| \leq \Delta$
\end{definition}

We will be using the classical (pure) differential privacy definition, but will also discuss how our methods apply to other definitions with improved guarantees.

\begin{definition}\label{def:dp}\cite{dwork2006calibrating, dwork2006our}
	A mechanism $M:\cX^n\to \calT$ is $(\diffp, \delta)$-differentially-private (DP) if for any neighboring datasets $\xx,\xx' \in \cX$ and measurable $S \subseteq \calT$:
	\[
		{\Pr[M(\xx) \in S]}\leqslant
		e^{\diffp}{\Pr[M(\xx') \in S]} + \delta.
	\]
 If $\delta = 0$ then $M$ is $\diffp$-DP.
\end{definition}

\subsection{Sparse vector technique}

We define the fundamental sparse vector technique introduced in \cite{dwork2009complexity} and often considered to apply Laplacian noise \cite{lyu2017understanding}. 
However, recent work showed the noise can instead be added from the exponential distribution at the same parameter for improved utility \cite{durfee2023unbounded}.
Let $\expo(b)$ denote a draw from the exponential distribution with scale parameter $b$.
The sparse vector technique iteratively calls the following algorithm.

\begin{algorithm}
\caption{\AT \label{algo:AT}}
\begin{algorithmic}[1]
\REQUIRE Input dataset $\xx$, a stream of queries $\{f_i: \calX^n \to \R\}$ with sensitivity $\Delta$, and a threshold $T$
\STATE Set $\hat{T} = T + \expo(\Delta/\diffp_1)$
\FOR{each query $i$}
\STATE Set $\nu_i = \expo(\Delta/\diffp_2)$
\IF{$f_i(\xx) + \nu_i \geq \hat{T}$} 
\STATE Output $\top$ and \texttt{halt}
\ELSE{} 
\STATE Output $\bot$
\ENDIF
\ENDFOR
\end{algorithmic}
\end{algorithm}

This technique can further see improvement when the queries are monotonic which will apply to most of our instantiations of our method.

\begin{definition}\label{def:monotonic}
We say that stream of queries $\{f_i: \cX^n \to \R\}$ with sensitivity $\Delta$ is \textit{monotonic} if for any neighboring $\xx,\xx' \in \cX^n$ we have either $f_i(\xx) \leq f_i(\xx')$ for all $i$ or $f_i(\xx) \geq f_i(\xx')$ for all $i$.

\end{definition}

This allows for the following differential privacy guarantees from \cite{durfee2023unbounded}.

\begin{proposition}\label{prop:svt}
    Algorithm \ref{algo:AT} is $(\diffp_1 + 2\diffp_2)$-DP for general queries and $(\diffp_1 + \diffp_2)$-DP for monotonic queries
\end{proposition}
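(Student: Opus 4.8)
The plan is to establish privacy by analyzing the output distribution of Algorithm~\ref{algo:AT} directly, tracking the ``privacy cost'' contributed by each of the two noise sources separately. Fix neighboring datasets $\xx, \xx'$ and condition on the threshold noise $\hat T$; the output of the algorithm is the (possibly infinite) index $k$ at which $\top$ is first output, so I would write $\Pr[M(\xx) = k]$ as an integral over the value of $\hat T$ of the probability that queries $1,\dots,k-1$ all stay below $\hat T$ and query $k$ crosses it. The standard trick is to couple the two runs: shift the threshold noise between the $\xx$ and $\xx'$ runs to absorb the change in the threshold comparison for query $k$, and shift the per-query noise $\nu_k$ to absorb the change for that single crossing query. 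Each shift costs a multiplicative $e^{\diffp_1}$ and $e^{\diffp_2}$ respectively, because the exponential distribution with scale $\Delta/\diffp$ has density ratio at most $e^{\diffp}$ under a shift of magnitude $\Delta$, and $|f_i(\xx)-f_i(\xx')|\le \Delta$.

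For the general (non-monotonic) case I would argue that the ``below threshold'' events for the non-crossing queries need a factor of $e^{\diffp_2}$ as well, since shifting $\hat T$ by $\Delta$ in one direction to handle the crossing query could hurt the non-crossing comparisons; the fix in the classical analysis is to instead shift $\hat T$ by $2\Delta$ and simultaneously not touch the non-crossing queries, or equivalently to pay $e^{\diffp_2}$ once more — yielding the total $\diffp_1 + 2\diffp_2$. For the monotonic case, the key observation is that all $f_i(\xx) - f_i(\xx')$ have the same sign, so a single shift of $\hat T$ in the favorable direction simultaneously makes \emph{every} non-crossing comparison only easier to keep below threshold on the $\xx'$ side, and one only needs to pay for the threshold shift ($e^{\diffp_1}$) and the single crossing-query shift ($e^{\diffp_2}$), giving $\diffp_1 + \diffp_2$. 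I would also need to handle the event that the algorithm never outputs $\top$ (output is the empty stream of $\bot$'s), which is the limiting case $k = \infty$ and follows by monotone convergence from the finite-$k$ bounds, and to confirm the exponential-noise version behaves like the Laplace version here, which is where I would cite the mechanics of \cite{durfee2023unbounded}.

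The main obstacle I anticipate is making the coupling argument fully rigorous when the noise is drawn from the \emph{exponential} (one-sided, nonnegative) distribution rather than the Laplace distribution: a shift can push the shifted value outside the support, so the density-ratio bound $e^{\diffp}$ must be argued carefully on the region where both densities are positive, and one must verify that the ``bad'' region (where the shifted draw would be negative) contributes correctly to the comparison events rather than being ignored. Concretely, because $\hat T = T + \expo(\Delta/\diffp_1)$ only takes values $\ge T$, shifting it downward by $\Delta$ is only valid on $\{\hat T \ge T + \Delta\}$, and one has to check that the complementary sliver either helps or is absorbed by the direction of the inequality. I expect this to go through exactly as in \cite{durfee2023unbounded}, so in the write-up I would lean on that reference for the one-sided-noise bookkeeping and focus the original argument on the monotonic-vs-general distinction and the accounting of the $\diffp_1$ versus $\diffp_2$ contributions.
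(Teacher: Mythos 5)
The paper itself does not prove Proposition~\ref{prop:svt}: it is imported directly from \cite{durfee2023unbounded}, so there is no in-paper argument to compare yours against. Your sketch is essentially a reconstruction of the standard \AT{} coupling analysis that the cited work carries out for one-sided exponential noise, and the overall plan --- condition on $\hat T$, shift the threshold noise to absorb the non-crossing comparisons at cost $e^{\diffp_1}$, shift the single crossing query's noise at cost in $\diffp_2$, and handle the exponential distribution's truncated support carefully --- is the right one, so as an approach it is sound.

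One piece of your accounting is muddled and worth fixing before you write it up. In the general case the non-crossing queries must \emph{not} each pay a factor of $e^{\diffp_2}$ (that would give a bound growing with the stream length, defeating the whole point of the sparse vector technique), and shifting $\hat T$ by $2\Delta$ would cost $e^{2\diffp_1}$, not an extra $e^{\diffp_2}$. The correct bookkeeping is: substitute $\hat T \to \hat T + \Delta$ at cost $e^{\diffp_1}$, which makes every non-crossing comparison on the neighboring dataset only easier (since $f_i(\xx') \le f_i(\xx) + \Delta$) at no further cost; the single crossing query must then absorb a displacement of up to $2\Delta$ (the $\Delta$ threshold shift plus the up-to-$\Delta$ change in $f_k$ in the unfavorable direction), and since $\nu_k \sim \expo(\Delta/\diffp_2)$ this costs $e^{2\diffp_2}$, giving $\diffp_1 + 2\diffp_2$. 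In the monotonic case the sign coherence lets you choose the shift direction so that the crossing query only absorbs $\Delta$, giving $\diffp_1 + \diffp_2$ --- your explanation of that saving is correct. Your concern about the one-sided support is the genuinely delicate point for exponential (rather than Laplace) noise, and deferring that bookkeeping to \cite{durfee2023unbounded} is consistent with what this paper does, since it relies on that reference for the proposition in its entirety.
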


\subsection{Inverse sensitivity mechanism}

The inverse sensitivity mechanism had seen several previous instantiations but was introduced in it's full generality in \cite{asi2020near}. We first introduce the exponential mechanism.

\begin{definition}\cite{mcsherry2007mechanism}
The Exponential Mechanism is a randomized mapping $M: \cX^n \to \calT$ such that 

\[
\prob{}{M(\xx) = \yt} \propto \exp\left(\frac{\diffp \cdot q(\xx,\yt)}{2\Delta}\right) 
\]

where $q: \cX^n \times \calT \to \R$ has sensitivity $\Delta$.
    
\end{definition}

\begin{proposition}\label{prop:exp_mech}\cite{mcsherry2007mechanism}
The exponential mechanism is $\diffp$-DP
\end{proposition}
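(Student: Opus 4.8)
The plan is to verify the definition of $\diffp$-DP directly from the formula defining the exponential mechanism, splitting the likelihood ratio into a numerator contribution and a normalization contribution, each of which I will bound by $e^{\diffp/2}$.

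First I would fix neighboring datasets $\xx, \xx' \in \cX^n$ and write out the output distribution explicitly. For a single outcome $\yt \in \calT$,
\[
\prob{}{M(\xx) = \yt} = \frac{\exp\!\left(\frac{\diffp \cdot q(\xx,\yt)}{2\Delta}\right)}{Z(\xx)}, \qquad Z(\xx) = \int_{\calT} \exp\!\left(\frac{\diffp \cdot q(\xx,\yt')}{2\Delta}\right) d\yt',
\]
with the integral replaced by a sum in the discrete case. Then I would form the ratio $\prob{}{M(\xx)=\yt} / \prob{}{M(\xx')=\yt}$ and factor it as the product of $\exp\!\big(\tfrac{\diffp}{2\Delta}(q(\xx,\yt) - q(\xx',\yt))\big)$ and $Z(\xx')/Z(\xx)$.

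Next I would bound each factor. Since $q$ has sensitivity $\Delta$, we have $|q(\xx,\yt) - q(\xx',\yt)| \leq \Delta$, so the first factor is at most $e^{\diffp/2}$. For the normalization factor, the same pointwise bound $\exp\!\big(\tfrac{\diffp}{2\Delta} q(\xx',\yt')\big) \leq e^{\diffp/2}\exp\!\big(\tfrac{\diffp}{2\Delta} q(\xx,\yt')\big)$ holds for every $\yt'$; integrating (or summing) over $\calT$ gives $Z(\xx') \leq e^{\diffp/2} Z(\xx)$, hence $Z(\xx')/Z(\xx) \leq e^{\diffp/2}$. Multiplying the two bounds yields $\prob{}{M(\xx)=\yt} \leq e^{\diffp}\,\prob{}{M(\xx')=\yt}$. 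Finally, to handle an arbitrary measurable $S \subseteq \calT$ rather than a single point, I would integrate this pointwise density inequality over $S$, which preserves the factor $e^{\diffp}$, giving $\Pr[M(\xx) \in S] \leq e^{\diffp}\Pr[M(\xx') \in S]$ with $\delta = 0$.

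There is no real obstacle here; the only points requiring a little care are the symmetry of the sensitivity bound (so that the same argument applies with $\xx,\xx'$ swapped, which is needed because the definition quantifies over ordered neighboring pairs) and ensuring the normalizing constant $Z(\xx)$ is finite and positive so that the ratio manipulations are legitimate — finiteness follows from $\Delta$-sensitivity when $\calT$ has finite measure or the sum is finite, which is the setting in which the mechanism is stated.
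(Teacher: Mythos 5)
Your proof is correct and is the standard argument for the privacy of the exponential mechanism, exactly as in the cited McSherry--Talwar paper; the paper itself states this proposition as a citation without reproducing a proof. The decomposition into a numerator factor and a normalization factor, each bounded by $e^{\diffp/2}$ via $\Delta$-sensitivity of $q$, is precisely the canonical route, and your closing remarks about symmetry over ordered neighboring pairs and finiteness of $Z(\xx)$ are the right technical caveats to flag.
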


We then define the distance of a potential output from the underlying dataset to be the Hamming distance required to change the data such that the new data matches the output.

\begin{definition}\label{def:exact_inverse_sensitivity}
    For a function $f: \calX^n \to \calT$ and $\xx \in \cX^n$, let the $\mathrm{inverse \ sensitivity}$ of $\yt \in \calT$ be

    \[
    \len{f}(\xx;\yt) \defeq \inf_{\xx'} \{ \dham(\xx, \xx') | f(\xx') = \yt \}
    \]
\end{definition}

By construction this distance metric for any output cannot change by more than one between neighboring datasets due to the triangle inequality for Hamming distance.

\begin{corollary}\label{cor:inverse_sensitivity_bound}
    For any neighboring datasets $\xx, \xx' \in \calX^n$ and $\yt \in \im(f)$ where $\im(f) \subseteq \calT$ is the image of the function, we have $|\len{f}(\xx;\yt) - \len{f}(\xx';\yt)| \leq 1$ 
\end{corollary}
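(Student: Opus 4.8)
The plan is to reduce the claim to the triangle inequality for Hamming distance together with the fact that the defining infimum in Definition~\ref{def:exact_inverse_sensitivity} is actually attained. First I would observe that because $\yt \in \im(f)$, the set $\{\xx'' : f(\xx'') = \yt\}$ is nonempty, and since $\dham$ takes values in the finite set $\{0,1,\dots,n\}$, the infimum $\len{f}(\xx;\yt)$ is a minimum and in particular finite; the same holds for $\len{f}(\xx';\yt)$. So there exists some $\zz \in \calX^n$ with $f(\zz) = \yt$ and $\dham(\xx,\zz) = \len{f}(\xx;\yt)$.

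Next, since $\xx$ and $\xx'$ are neighboring we have $\dham(\xx,\xx') \leq 1$, and the triangle inequality for Hamming distance gives $\dham(\xx',\zz) \leq \dham(\xx',\xx) + \dham(\xx,\zz) \leq 1 + \len{f}(\xx;\yt)$. Because $\zz$ is a feasible point for the infimum defining $\len{f}(\xx';\yt)$ (it satisfies $f(\zz)=\yt$), this yields $\len{f}(\xx';\yt) \leq \dham(\xx',\zz) \leq \len{f}(\xx;\yt) + 1$.

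Finally I would note that the roles of $\xx$ and $\xx'$ are symmetric: running the identical argument with a minimizer $\zz'$ for $\len{f}(\xx';\yt)$ gives $\len{f}(\xx;\yt) \leq \len{f}(\xx';\yt) + 1$. Combining the two bounds gives $|\len{f}(\xx;\yt) - \len{f}(\xx';\yt)| \leq 1$, as claimed.

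There is no real obstacle here — the only point that deserves a word of care is the mild one already flagged above, namely that $\len{f}(\cdot;\yt)$ is finite and the infimum attained, which is exactly where the hypothesis $\yt \in \im(f)$ is used; everything else is the metric triangle inequality applied to a single well-chosen witness $\zz$.
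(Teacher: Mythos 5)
Your argument is correct and matches the paper's reasoning: the paper justifies this corollary in one line by appeal to the triangle inequality for Hamming distance, which is exactly the mechanism you spell out via a witness $\zz$ attaining the infimum (attainment being immediate since $\yt\in\im(f)$ makes the feasible set nonempty and the distances are integers in $\{0,\dots,n\}$). No gap; your write-up is simply a more detailed version of the same approach.
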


The \textit{inverse sensitivity mechanism} then draws from the exponential mechanism instantiated upon the distance metric giving the density function

\begin{equation}\label{eq:inverse_sensitivity_mech}
    \pi_{M_{\texttt{inv}}(\xx)} (t) = \frac{e^{-\len{f}(\xx;t) \diffp / 2}}{\int_{\calT} e^{-\len{f}(\xx;s) \diffp / 2} ds} \tag{M.1}
\end{equation}

and mechanism \ref{eq:inverse_sensitivity_mech} is $\diffp$-DP by Proposition~\ref{prop:exp_mech} and Corollary~\ref{cor:inverse_sensitivity_bound}.

\section{Asymmetric Sensitivity Mechanism}\label{sec:inverse}

In this section we introduce our general methodology for instance-specific differentially private estimation, which we term the asymmetric sensitivity mechanism. 
We first give the exact formulation which will fit an extensive class of functions focused upon in \cite{asi2020near}.
Next we provide a simple framework by which our method can be implemented.
The efficiency of this implementation is highly dependent upon the function of interest, but we supplement these results with an approximate method.
This can allow for broader efficient implementations and also extends our methodology to general functions.
While this section will set up and provide the necessary rigor for our techniques, we also point the reader to Section~\ref{sec:asymmetric} for a more intuitive explanation of our approach compared to the inverse sensitivity mechanism.

\subsection{Exact asymmetric sensitivity mechanism}

Our method will similarly consider the distance for each output from the underlying data with the goal being to select an output with distance close to zero.
The inverse sensitivity mechanism does this through the exponential mechanism, but we will instead apply the sparse vector technique.
In order to better apply the sparse vector technique, we will first modify the inverse sensitivity such that it is negative for outputs that are less than output from the underlying data.

\begin{definition}\label{def:relective_sensitivity}
    For a function $f: \calX^n \to \R$ and $\xx \in \cX^n$, let the $\mathrm{reflective \ inverse \ sensitivity}$ of $\yt \in \R$ be

    \[
    \asym{f}(\xx;\yt) \defeq \mathrm{sgn}(\yt - f(\xx)) \left( \len{f}(\xx;\yt) - \frac{1}{2} \right)
    \]
\end{definition}

Intuitively, the goal of applying the sparse vector technique will be to identify when the reflective inverse sensitivity crosses the threshold from negative to positive.
While there are reasonable methods of extending our approach to higher dimensions, it will both become computationally inefficient and the notion of asymmetry, which gives our method the most significant improvement, is less inherent in higher dimensions.
Initially, we focus upon a general class of functions considered in \cite{asi2020near} that was shown to include all continuous functions from a convex domain.

\begin{definition}[Definition 4.1 in \cite{asi2020near}]\label{def:sample_monotone}
Let $f: \calX^n \to \R$. Then $f$ is \textrm{sample-monotone} if for every $\xx \in \calX^n$ and $s,t \in \R$ satisfying $f(\xx) \leq s \leq t$ or $t \leq s \leq f(\xx)$, we have $\len{f}(\xx;s) \leq \len{f}(\xx;t)$
    
\end{definition}

For this class of functions, we show that the reflective inverse sensitivity of an output maintains closeness between neighboring datasets.
Accordingly, we can apply the sparse vector technique to a stream of potential outputs in order to (noisily) identify when the reflective inverse sensitivity crosses the threshold from negative to positive.
This gives the \textit{asymmetric sensitivity mechanism} for a stream of potential outputs $\{\yt_i\}$ that returns

\begin{equation}\label{eq:asymmetric_inverse_sensitivity_general}
    \yt_k :
    \{\bot^{k-1},\top\} \longleftarrow \AT(\xx, \{\asym{f}(\xx;\yt_i)\}, \Delta=1, T=0) \tag{M.2}
\end{equation}

To be effective, this stream of potential outputs should be increasing (or decreasing if we flip the sign of $\asym{f}$) but will still achieve the desired privacy guarantees regardless which is shown in Appendix~\ref{subsec:appendix_exact}.

\begin{theorem}\label{thm:exact_asym_sens}
    Given sample-monotone $f: \cX^n \to \R$ and any stream of potential outputs $\{\yt_i\}$, we have that mechanism \ref{eq:asymmetric_inverse_sensitivity_general} is $(\diffp_1 + 2\diffp_2)$-DP in general and $(\diffp_1 + \diffp_2)$-DP if $\asym{f}$ is monotonic.
\end{theorem}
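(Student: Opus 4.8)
The strategy is to reduce Theorem~\ref{thm:exact_asym_sens} to Proposition~\ref{prop:svt} by verifying that the stream of queries $\{g_i(\xx) \defeq \asym{f}(\xx;\yt_i)\}$ fed into \AT{} satisfies the hypotheses of that proposition, namely that it is a stream with sensitivity $\Delta = 1$ (and is monotonic in the sense of Definition~\ref{def:monotonic} whenever $\asym{f}$ is). Once the sensitivity bound is in hand, the privacy guarantee is immediate: mechanism~\ref{eq:asymmetric_inverse_sensitivity_general} is nothing but a post-processing of the $\top/\bot$ transcript of \AT{} run on these queries with $\Delta = 1$, $T = 0$, so $(\diffp_1 + 2\diffp_2)$-DP in general and $(\diffp_1 + \diffp_2)$-DP in the monotonic case follow directly, using closure of DP under post-processing.

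\textbf{Key steps.} First I would fix neighboring datasets $\xx,\xx' \in \calX^n$ and an arbitrary output $\yt \in \R$, and show $|\asym{f}(\xx;\yt) - \asym{f}(\xx';\yt)| \le 1$. The analysis splits on whether $\yt - f(\xx)$ and $\yt - f(\xx')$ have the same sign. If they do, then $\sgn{\yt - f(\xx)} = \sgn{\yt - f(\xx')} =: \sigma$ and the difference is $\sigma\big(\len{f}(\xx;\yt) - \len{f}(\xx';\yt)\big)$, whose absolute value is at most $1$ by Corollary~\ref{cor:inverse_sensitivity_bound}. The delicate case is when the signs differ (or one of $f(\xx), f(\xx')$ equals $\yt$): here I would use sample-monotonicity to argue that $\yt$ lies ``between'' $f(\xx)$ and $f(\xx')$, so $\len{f}(\xx;\yt)$ and $\len{f}(\xx';\yt)$ are both small --- concretely, since $f(\xx')$ is achievable from $\xx$ by changing one coordinate, and $\yt$ lies between $f(\xx)$ and $f(\xx')$, sample-monotonicity gives $\len{f}(\xx;\yt) \le \len{f}(\xx;f(\xx')) \le 1$, and symmetrically $\len{f}(\xx';\yt) \le 1$; combined with $\len{f}(\xx;\yt) \ge 1$ whenever $\yt \ne f(\xx)$, one pins both reflective sensitivities into $[-\tfrac12,\tfrac12]$ (up to the sign flip), so their difference is at most $1$ in absolute value. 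A small amount of care is needed for the boundary sub-cases $f(\xx) = \yt$ or $f(\xx') = \yt$, where one term is $\pm\tfrac12$ by the $\sgn{0} = 0$ convention... actually $\sgn{0}=0$ makes that term $0$, and the other term lies in $[-\tfrac12,\tfrac12]$, so the bound still holds.

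\textbf{Monotonicity and conclusion.} Next, assuming $\asym{f}$ is monotonic in the sense of Definition~\ref{def:monotonic} --- i.e.\ for each neighboring pair the sign of $\asym{f}(\xx;\yt_i) - \asym{f}(\xx';\yt_i)$ is consistent across all $i$ --- the stronger $(\diffp_1+\diffp_2)$ bound in Proposition~\ref{prop:svt} applies verbatim. Finally I would note that mechanism~\ref{eq:asymmetric_inverse_sensitivity_general} outputs $\yt_k$ where $k$ is the (data-dependent) index at which \AT{} halts with $\top$; since $k$ is a deterministic function of the transcript $\{\bot^{k-1},\top\}$ and the fixed public stream $\{\yt_i\}$, this is post-processing of the \AT{} output, preserving the privacy guarantee. (The claim that the privacy holds \emph{regardless} of whether the stream is increasing, deferred to Appendix~\ref{subsec:appendix_exact}, is covered because the sensitivity bound above made no ordering assumption on the $\yt_i$.)

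\textbf{Main obstacle.} The crux is the sign-change case of the sensitivity bound: showing that when $\yt$ separates $f(\xx)$ from $f(\xx')$, both $\len{f}(\xx;\yt)$ and $\len{f}(\xx';\yt)$ are forced down to $1$ (hence both reflective sensitivities to $\pm\tfrac12$) so that the subtraction cannot exceed $1$ even though the two signs are opposite. This is exactly where sample-monotonicity (Definition~\ref{def:sample_monotone}) is indispensable --- without it, $\len{f}(\xx;\yt)$ could be large for a $\yt$ between the two function values, and the $\pm$ sign flip would then double the gap rather than keep it bounded by $1$. Everything else is a routine appeal to Corollary~\ref{cor:inverse_sensitivity_bound}, Proposition~\ref{prop:svt}, and post-processing.
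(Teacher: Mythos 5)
Your approach is essentially the same as the paper's: you reduce to Proposition~\ref{prop:svt} by bounding $|\asym{f}(\xx;\yt) - \asym{f}(\xx';\yt)|$ for neighboring datasets via a case split on whether $\yt - f(\xx)$ and $\yt - f(\xx')$ agree in sign, using Corollary~\ref{cor:inverse_sensitivity_bound} in the same-sign case and sample-monotonicity (forcing $\len{f} \le 1$ on both sides) in the sign-change case. This is exactly the case analysis in the paper's Lemma~\ref{lem:exact_asym_inv_closeness}, and your handling of the $\yt = f(\xx)$ boundary via $\sgn{0}=0$ is correct and slightly more explicit than the paper's.

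There is, however, one gap you should close: the appeal to Corollary~\ref{cor:inverse_sensitivity_bound} in the same-sign case is only valid when $\yt \in \im(f)$, since the corollary is stated exclusively for points of the image. For $\yt \notin \im(f)$ one has $\len{f}(\xx;\yt) = \len{f}(\xx';\yt) = \infty$ and the difference is indeterminate, so your claimed bound ``at most $1$'' does not follow from the corollary. The paper dispatches this by first proving, as part of Lemma~\ref{lem:exact_asym_inv_closeness}, that sample-monotonicity implies $\im(f)$ is a \emph{convex} set; convexity guarantees that for $\yt \notin \im(f)$ the quantity $\asym{f}(\xx;\yt)$ is the same $+\infty$ or $-\infty$ for \emph{all} $\xx$, so these queries are data-independent and harmless to SVT privacy. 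Your own sign-change argument actually contains the seed of a fix --- if $\yt$ lies strictly between $f(\xx)$ and $f(\xx')$ for any neighboring pair then $\len{f}(\xx;\yt)\le 1$, hence $\yt\in\im(f)$, so $\yt\notin\im(f)$ forces sign agreement across every neighboring pair and both queries collapse to the same constant $\pm\infty$ --- but you should say this explicitly rather than leaving the same-sign, $\yt\notin\im(f)$ case unaddressed. With that patch, the proposal matches the paper's proof in substance.
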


\subsection{Implementation framework}\label{subsec:implementation_framework}

The primary bottleneck in efficiently implementing both our asymmetric sensitivity mechanism and the inverse sensitivity mechanism is the computation of the inverse sensitivity. 
In particular, it will require computing upper and lower output bounds for different Hamming distances from our underlying data.

\begin{definition}\label{def:exact_sensitivity_bounds}
 For a function $f: \calX^n \to \R$, we define the upper and lower output bounds for Hamming distance $\ell$ as 
 \[
U_f^{\ell}(\xx) \defeq \sup_{\xx'} \{f(\xx')
: \dham(\xx, \xx') \leq \ell \} 
\text{ \ \ \ \ \ \ and \ \ \ \ \ \ }
L_f^{\ell}(\xx) \defeq \inf_{\xx'} \{f(\xx')
: \dham(\xx,\xx') \leq \ell \}
\]
    
\end{definition}

The complexity of computing these will be dependent upon the function of interest, 
but we can use the upper and lower output bounds to get the inverse sensitivity with the following lemma proven in Appendix~\ref{subsec:appendix_exact}.

\begin{lemma}\label{lem:bounds_for_sample_monotone}
    If $f$ is sample-monotone then $\len{f}(\xx;\yt) = \inf \{ \ell :   L_f^{\ell}(\xx) \leq \yt \leq U_f^{\ell}(\xx) \}$ for all $\yt \in \R$
\end{lemma}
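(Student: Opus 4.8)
\textbf{Proof plan for Lemma~\ref{lem:bounds_for_sample_monotone}.}

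The plan is to show the two sets $\{\ell : L_f^{\ell}(\xx) \leq \yt \leq U_f^{\ell}(\xx)\}$ and $\{\dham(\xx,\xx') : f(\xx') = \yt\}$ have the same infimum by a two-sided containment-type argument on the witnessing datasets. First I would handle the easy direction: if $\dham(\xx,\xx') = \ell$ and $f(\xx') = \yt$, then by Definition~\ref{def:exact_sensitivity_bounds} we immediately get $L_f^{\ell}(\xx) \leq f(\xx') = \yt \leq U_f^{\ell}(\xx)$, so every $\ell$ attained by an exact inverse witness lies in the bound-set; hence $\inf\{\ell : L_f^{\ell}(\xx) \leq \yt \leq U_f^{\ell}(\xx)\} \leq \len{f}(\xx;\yt)$. (One should note the edge case where no $\xx'$ with $f(\xx')=\yt$ exists, i.e. $\yt \notin \im(f)$; but since $f$ is sample-monotone with $\R$-valued output, I would argue the bound-set is also empty there, or simply restrict attention to $\yt$ for which $\len{f}(\xx;\yt)$ is finite, consistent with how the mechanism is used.)

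The substantive direction is the reverse inequality: if $L_f^{\ell}(\xx) \leq \yt \leq U_f^{\ell}(\xx)$ for some $\ell$, I must produce a dataset $\xx'$ with $\dham(\xx,\xx') \leq \ell$ and $f(\xx') = \yt$, which would give $\len{f}(\xx;\yt) \leq \ell$. The idea is to take $\xx^L$ achieving (or approaching, via a limiting argument if the infimum/supremum in Definition~\ref{def:exact_sensitivity_bounds} is not attained) $L_f^{\ell}(\xx) \leq \yt$ and $\xx^U$ achieving $U_f^{\ell}(\xx) \geq \yt$, both within Hamming distance $\ell$ of $\xx$, and then ``interpolate'' between them one coordinate at a time: starting from $\xx$, change coordinates toward $\xx^U$ until the value exceeds $\yt$, having changed at most $\ell$ coordinates. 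This produces a dataset $\xx''$ with $f(\xx'') \geq \yt$ at Hamming distance $\le \ell$; symmetrically one gets $f$-value $\le \yt$. The crucial point is then to invoke sample-monotonicity (Definition~\ref{def:sample_monotone}): along a path of single-coordinate edits from $\xx$, the inverse-sensitivity values of the intermediate function values are controlled, and more directly, since at each single-coordinate step the value of $f$ changes and we can interpolate, I want to find an intermediate dataset whose $f$-value is exactly $\yt$. This last ``exact hit'' step is where continuity/structure of $f$ — or rather the sample-monotone hypothesis packaged correctly — does the work: I would argue that the set of values $\{f(\xx') : \dham(\xx,\xx') \leq \ell\}$ is an interval containing both an element $\leq \yt$ and an element $\geq \yt$, hence contains $\yt$.

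The main obstacle I anticipate is precisely establishing that $\{f(\xx') : \dham(\xx,\xx')\le \ell\}$ is an \emph{interval} (connected), since the domain $\calX^n$ need not be connected and single-coordinate edits are discrete jumps. The right move is to use the sample-monotone definition as the substitute for genuine continuity: for $s$ between $f(\xx)$ and any attainable value $v$, sample-monotonicity gives $\len{f}(\xx;s) \le \len{f}(\xx;v) \le \ell$, so $s$ is itself attainable within distance $\ell$ — which is exactly the statement that the attainable-value set is an interval around $f(\xx)$. Applying this with $v = U_f^\ell(\xx)$ (resp. $L_f^\ell(\xx)$) and $s = \yt$ closes the argument, modulo the usual care about whether suprema are attained (handled by taking $v$ arbitrarily close to the sup and noting $\yt$ is strictly between, or by a compactness assumption on $\calX$). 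I would also double-check monotonicity of $\ell \mapsto U_f^\ell(\xx)$ and $\ell \mapsto L_f^\ell(\xx)$ so that the infimum over $\ell$ in the claimed formula is well-behaved, though this is routine from the nested constraint $\dham(\xx,\xx') \le \ell$.
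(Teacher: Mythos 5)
Your proposal is correct and follows essentially the same route as the paper: the easy direction (a witness $\xx'$ with $f(\xx')=\yt$ at distance $\ell$ forces $L_f^{\ell}(\xx)\leq \yt\leq U_f^{\ell}(\xx)$) is the paper's contradiction argument rephrased, and your key step for the reverse direction---invoking sample-monotonicity so that any $\yt$ between $f(\xx)$ and a value attainable within distance $\ell$ satisfies $\len{f}(\xx;\yt)\leq\ell$---is exactly how the paper concludes $\len{f}(\xx;\yt)\leq k$ from $L_f^{k}(\xx)\leq \yt\leq U_f^{k}(\xx)$. The coordinate-by-coordinate interpolation you first sketch is an unnecessary detour (you rightly abandon it), and your explicit attention to non-attained suprema and to $\yt\notin\im(f)$ only makes more careful what the paper's terser proof glosses over.
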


If we then assume access to the array $L^n_f(\xx),...,L^1_f(\xx), f(\xx), U^1_f(\xx),...,U^n_f(\xx)$, for any potential output $t_i$ we can compute $\len{f}(\xx;\yt_i)$ in $O(\log (n))$ time with a simple binary search.
Alternatively, we could also take $O(n)$ amortized time by maintaining a pointer and iteratively increasing the index for each new potential output if we assume the stream of potential outputs are non-decreasing.
This gives the general implementation framework:

\begin{enumerate}
    \item Compute upper and lower output bounds $U_f^{\ell}(\xx)$ and $L_f^{\ell}(\xx)$ for all $\ell \in [n]$

    \item Use these bounds to efficiently run $\AT(\xx, \{\asym{f}(\xx;\yt_i)\}, \Delta=1, T=0)$
\end{enumerate}

We further detail in Appendix~\ref{sec:parameters} a simple, general, and robust strategy for selecting potential outputs that provides a reasonable limit on the number of queries.
It is then apparent that the second step is efficient, but computing the upper and lower outputs bounds can be inefficient or even infeasible.
Note that this first step is also necessary for the inverse sensitivity mechanism implementation.
As such we provide an approximation of these bounds that is a slightly more granular version of the approximation method provided in \cite{asi2020instance} and also applies to the inverse sensitivity mechanism.

\begin{definition}\label{def:approx_sensitivity_bounds}
    For a function $f: \calX^n \to \R$, we define approximate upper and lower sensitivity bounding functions of Hamming distance $\ell$ to be $\widebar{U}_f^{\ell}: \cX^n \to \R$ and $\widebar{L}_f^{\ell}: \cX^n \to \R$, if for all $\ell \geq 0$ and any neighboring datasets $\xx, \xx' \in \cX^n$ we have $\widebar{U}_f^{\ell}(\xx) \geq {U}_f^{\ell}(\xx)$ and $\widebar{U}_f^{\ell}(\xx) \leq \widebar{U}_f^{\ell+1}(\xx')$ along with $\widebar{L}_f^{\ell}(\xx) \leq {L}_f^{\ell}(\xx)$  and $\widebar{L}_f^{\ell}(\xx) \geq \widebar{L}_f^{\ell+1}(\xx')$
\end{definition}


In particular, this definition separates the approximation for the upper vs lower bounds as opposed to treating them symmetrically.
This maintains asymmetry which is precisely where our technique excels most.
In the next section we show that these approximate bounds provide an approximate version of our asymmetric sensitivity method that also generalizes to all functions $f: \cX^n \to \R$.

\subsection{Approximate asymmetric sensitivity mechanism}

In this section, we show how to extend our asymmetric sensitivity mechanism to general functions $f: \cX^n \to \R$ and provide more efficient implementations.
In particular, we define a variant of closeness for each output to the underlying data which utilizes the approximate upper and lower bounds from Definition~\ref{def:approx_sensitivity_bounds}.

\begin{definition}\label{def:approx_inv_sens_length}
    For a function $f: \calX^n \to \R$ along with $\widebar{U}_f^{\ell}: \cX^n \to \R$ and $\widebar{L}_f^{\ell}: \cX^n \to \R$, for any 
    $\yt \in \R$, we let 

\[
\widebar{\len{f}}(\xx;\yt) \defeq \inf \{ \ell : \widebar{L}_f^{\ell}(\xx) \leq \yt \leq \widebar{U}_f^{\ell}(\xx) \}
\]
\end{definition}

Outputs can only be closer to the underlying data under this approximate definition which could hurt accuracy, but will still give the desired privacy for inverse sensitivity mechanism.
We then extend this definition equivalently for our reflective inverse sensitivity.

\begin{definition}\label{def:approx_reflective_sensitivity}
    For a function $f: \cX^n \to \R$ along with $\widebar{U}_f^{\ell}: \cX^n \to \R$ and $\widebar{L}_f^{\ell}: \cX^n \to \R$, for any $t \in \R$, we let
    \[
    \widebar{\asym{f}}(\xx;\yt) \defeq \mathrm{sgn}(\yt - f(\xx)) \left( \widebar{\len{f}}(\xx;\yt) - \frac{1}{2} \right)
    \]
\end{definition}

We will then be able to show in general that the approximate reflective inverse sensitivity of an output maintains closeness between neighboring datasets.
This gives the \textit{approximate asymmetric sensitivity mechanism} for a stream of potential outputs $\{\yt_i\}$ that returns

\begin{equation}\label{eq:asymmetric_inverse_sensitivity_approx}
    \yt_k :
    \{\bot^{k-1},\top\} \longleftarrow \AT(\xx, \{\widebar{\asym{f}}(\xx;\yt_i)\}, \Delta=1, T=0) \tag{M.3}
\end{equation}

This will then achieve the same privacy guarantees which is shown in Appendix~\ref{subsec:appendix_approx}.

\begin{theorem}\label{thm:approx_asym_sens}
    Given $f: \cX^n \to \R$ and any stream of potential outputs $\{\yt_i\}$, we have that mechanism \ref{eq:asymmetric_inverse_sensitivity_approx} is $(\diffp_1 + 2\diffp_2)$-DP in general and $(\diffp_1 + \diffp_2)$-DP if $\widebar{\asym{f}}$ is monotonic.
\end{theorem}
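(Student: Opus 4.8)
The plan is to reduce mechanism~\ref{eq:asymmetric_inverse_sensitivity_approx} to a single call of \AT{} and then appeal to Proposition~\ref{prop:svt}. Since the potential outputs $\{\yt_i\}$ are fixed in advance, the output $\yt_k$ of mechanism~\ref{eq:asymmetric_inverse_sensitivity_approx} is a fixed deterministic function of the transcript $\{\bot^{k-1},\top\}$ produced by $\AT(\xx,\{\widebar{\asym{f}}(\xx;\yt_i)\},\Delta=1,T=0)$, so by closure of differential privacy under post-processing it is enough to show that this call of \AT{} is $(\diffp_1+2\diffp_2)$-DP, and $(\diffp_1+\diffp_2)$-DP when $\widebar{\asym{f}}$ is monotonic. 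By Proposition~\ref{prop:svt} this reduces, in turn, to the single claim that the query stream $f_i(\cdot):=\widebar{\asym{f}}(\cdot\,;\yt_i)$ has sensitivity at most $\Delta=1$, i.e.\ $|\widebar{\asym{f}}(\xx;\yt)-\widebar{\asym{f}}(\xx';\yt)|\le 1$ for every $\yt\in\R$ and every pair of neighboring $\xx,\xx'$; note that no sample-monotonicity of $f$ is needed, only the defining properties of $\widebar{U}_f^{\ell},\widebar{L}_f^{\ell}$ from Definition~\ref{def:approx_sensitivity_bounds}, and that the monotonicity hypothesis of the improved bound is exactly the monotonicity of the stream demanded by Proposition~\ref{prop:svt}.

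All the work is thus in the sensitivity bound, which I would assemble from three observations. First, applying Definition~\ref{def:approx_sensitivity_bounds} with $\xx'=\xx$ shows $\widebar{U}_f^{\ell}(\xx)$ is nondecreasing and $\widebar{L}_f^{\ell}(\xx)$ nonincreasing in $\ell$, so the intervals $[\widebar{L}_f^{\ell}(\xx),\widebar{U}_f^{\ell}(\xx)]$ are nested and, since $\widebar{L}_f^{0}(\xx)\le L_f^0(\xx)=f(\xx)=U_f^0(\xx)\le\widebar{U}_f^{0}(\xx)$, nonempty; hence $\widebar{\len{f}}(\xx;\cdot)$ is well defined and $\widebar{\len{f}}(\xx;f(\xx))=0$. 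Second, I would prove the ``shift by one'' estimate $|\widebar{\len{f}}(\xx;\yt)-\widebar{\len{f}}(\xx';\yt)|\le 1$: if $\widebar{\len{f}}(\xx;\yt)=\ell$ then $\widebar{L}_f^{\ell+1}(\xx')\le\widebar{L}_f^{\ell}(\xx)\le\yt\le\widebar{U}_f^{\ell}(\xx)\le\widebar{U}_f^{\ell+1}(\xx')$ by Definition~\ref{def:approx_sensitivity_bounds}, so $\widebar{\len{f}}(\xx';\yt)\le\ell+1$, and symmetrically. Third, whenever $\yt$ lies in the closed interval with endpoints $f(\xx)$ and $f(\xx')$ (the case in which $\sgn{\yt-f(\xx)}$ and $\sgn{\yt-f(\xx')}$ can disagree), both $\widebar{\len{f}}(\xx;\yt)\le 1$ and $\widebar{\len{f}}(\xx';\yt)\le 1$, because $\widebar{U}_f^{1}(\xx)\ge U_f^{1}(\xx)\ge\max\{f(\xx),f(\xx')\}\ge\yt$ and $\widebar{L}_f^{1}(\xx)\le L_f^{1}(\xx)\le\min\{f(\xx),f(\xx')\}\le\yt$, and likewise for $\xx'$.

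Finally I would close with a short case split on $\sigma=\sgn{\yt-f(\xx)}$ and $\sigma'=\sgn{\yt-f(\xx')}$. If $\sigma=\sigma'$ (in particular if $f(\xx)=f(\xx')$), then $|\widebar{\asym{f}}(\xx;\yt)-\widebar{\asym{f}}(\xx';\yt)|=|\sigma|\cdot|\widebar{\len{f}}(\xx;\yt)-\widebar{\len{f}}(\xx';\yt)|\le 1$ by the shift estimate. If $\sigma\ne\sigma'$ then $\yt$ is between $f(\xx)$ and $f(\xx')$, so by the third observation $\widebar{\len{f}}(\xx;\yt),\widebar{\len{f}}(\xx';\yt)\in\{0,1\}$, whence $\widebar{\asym{f}}(\xx;\yt),\widebar{\asym{f}}(\xx';\yt)\in[-\tfrac{1}{2},\tfrac{1}{2}]$ and the gap is at most $1$; the boundary subcases where $\yt$ equals $f(\xx)$ or $f(\xx')$ fall under this and in fact give a gap at most $\tfrac{1}{2}$. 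I expect the sign-flip case to be the only genuine subtlety: reflecting the inverse sensitivity around the threshold $0$ naively looks like it could double the query gap, and the crux is that within Hamming distance one of $\xx$ the reflective inverse sensitivity has magnitude only $\tfrac{1}{2}$ on either side of $f(\xx)$, which is exactly what keeps the sensitivity at $1$. I would also note that the argument survives $\widebar{U}_f^{\ell}=+\infty$ or $\widebar{L}_f^{\ell}=-\infty$ (the unbounded-data regime), since we only ever use the inequalities above and, when $\sigma=\sigma'$ with both inverse sensitivities infinite, the two query values simply coincide.
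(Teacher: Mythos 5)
Your proof is correct and follows essentially the same route as the paper: reduce to a sensitivity bound of $1$ on $\widebar{\asym{f}}(\cdot;\yt)$ via Proposition~\ref{prop:svt}, establish a shift-by-one estimate $|\widebar{\len{f}}(\xx;\yt)-\widebar{\len{f}}(\xx';\yt)|\le 1$ from Definition~\ref{def:approx_sensitivity_bounds} (the paper's Lemma~\ref{lem:approx_inv_sens_closeness}), and handle the sign-flip region $\yt\in[\min\{f(\xx),f(\xx')\},\max\{f(\xx),f(\xx')\}]$ by observing both reflective sensitivities lie in $[-\tfrac12,\tfrac12]$ there (the paper's Lemma~\ref{lem:approx_asym_inv_sens_closeness}). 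Your version is in fact slightly tighter on the edge cases, since your shift argument applies directly for all $\yt$ and handles the $\pm\infty$ regime without the paper's separate claim that $\widebar{\asym{f}}$ is identically $\pm\infty$ outside $[\inf f,\sup f]$.
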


\section{Asymmetric Advantage}\label{sec:asymmetric}

In this section, we first give a more visual explanation of our methodology compared to the inverse sensitivity mechanism. 
We then provide strong intuition upon why our method will substantially improve the estimation accuracy when the sensitivities are asymmetric by more naturally balancing the bias-variance tradeoff. 
We further supplement this intuition with a formal metric that quantifies the asymmetry of the sensitivities, and we empirically validate that increased asymmetry directly corresponds to improved relative performance of our method.

\subsection{Visualization of both methods}

Recall that the inverse sensitivity method considered the distance metric of any output from the underlying data in Definition~\ref{def:exact_inverse_sensitivity}.
We then proposed a variant of that definition better suited to applying the sparse vector technique in Definition~\ref{def:relective_sensitivity}.

\begin{figure}[htbp]
  \centering
  \begin{minipage}{0.3\textwidth}
    \includegraphics[width=\textwidth]{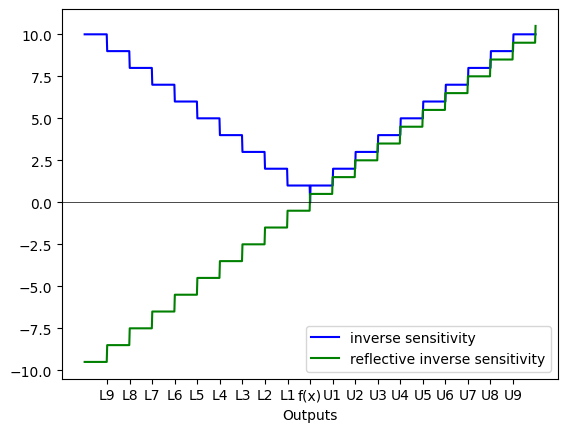}
  \end{minipage}
  \begin{minipage}{0.6\textwidth}
    \raggedleft
    \caption{
    We provide here an informal visualization of the \textit{inverse sensitivity} and \textit{reflective inverse sensitivity}.
    For most functions of interest we can just plot these as step functions using the upper and lower output bounds from Definition~\ref{def:exact_sensitivity_bounds}.
    In the plot, we denote $L_f^{\kappa}(\xx)$ and $U_f^{\kappa}(\xx)$ with $L \kappa$ and $U \kappa$, respectively.
    We will go into further detail in the next section but note that the sensitivities are perfectly symmetric in this example.
    }\label{fig:inv_sens_plot}
  \end{minipage}
\end{figure}

Both of these functions essentially shift between neighboring datasets which allows for maintaining closeness between outputs for each metric.
We further note that unlike the \textit{inverse sensitivity}, a shift in the \textit{reflective inverse sensitivity} will be monotonic because of it's increasing nature, allowing for improved privacy guarantees for many instantiations.
Given the closeness between outputs for neighboring datasets, we can apply the exponential mechanism or sparse vector technique.
Applying our method entails considering an increasing stream of potential output $\{\yt_i\}$ and (noisily) identifying when the reflective inverse sensitivity crosses from negative to positive by calling the sparse vector technique.

\begin{figure}[htbp]
  \centering
  \begin{minipage}{0.3\textwidth}
    \includegraphics[width=\textwidth]{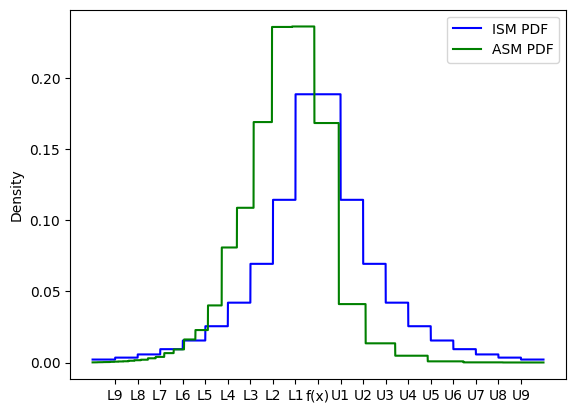}
  \end{minipage}
  \begin{minipage}{0.6\textwidth}
    \raggedleft
    \caption{
    We provide here an informal visualization of the approximate PDFs for inverse sensitivity mechanism (ISM) and our asymmetric sensitivity mechanism (ASM) when the sensitivities are perfectly symmetric.
    We slightly alter our mechanism \ref{eq:asymmetric_inverse_sensitivity_approx} to uniformly draw an output in $[\yt_{k-1}, \yt_k]$ for easier visualization, which implies our PDF will be a step function between the potential outputs.}\label{fig:symmetric_pdf}
  \end{minipage}
\end{figure}

The iterative nature of the sparse vector technique will most often lead to our method being slightly more likely to find an output less than the true output.
This introduces bias into our mechanism but we still remain competitive with inverse sensitivity even in the perfectly symmetric setting.

\subsection{Naturally navigating the bias-variance tradeoff}

In Figure~\ref{fig:symmetric_pdf} we assumed that the sensitivities were perfectly symmetric.
More specifically, for $\ell > 0$ we let $\Delta_{L}^{\ell}(f; \xx) \defeq L_f^{\ell-1}(\xx) - L_f^{\ell}(\xx)$ and $\Delta_{U}^{\ell}(f; \xx) \defeq U_f^{\ell}(\xx) - U_f^{\ell-1}(\xx)$, which are the amount the function can marginally decrease and increase, respectively, by changing the $\ell^{\mathrm{th}}$ individual's data.
Note that for $\ell=1$ these quantities correspond to the local sensitivity.
We then say that the sensitivities are perfectly symmetric if $\Delta_{L}^{\ell}(f; \xx) = \Delta_{U}^{\ell}(f; \xx)$ for all $\ell > 0$, which held by construction for our examples in the previous section.

Informally, we will say that the sensitivities are asymmetric if  $\Delta_{L}^{\ell}(f; \xx) << \Delta_{U}^{\ell}(f; \xx)$ (or the reverse) for most $\ell$ particularly those closer to 0.
We now consider an example in which the upper and lower outputs bounds imply asymmetric sensitivities and compare the approximate PDFs for each method.

\begin{figure}[htbp]
  \centering
  \begin{minipage}{0.3\textwidth}
    \includegraphics[width=\textwidth]{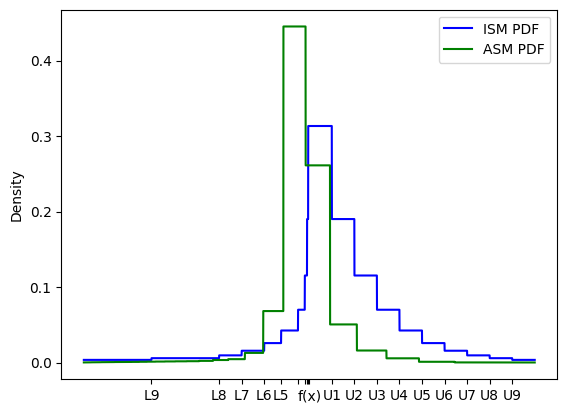}
  \end{minipage}
  \begin{minipage}{0.6\textwidth}
    \raggedleft
    \caption{
    We provide here an informal visualization of the approximate PDFs but with asymmetric sensitivities.
    We remove the labels for $L_f^{1}(\xx)$, $L_f^{2}(\xx)$, and $L_f^{3}(\xx)$ as they become too condensed around $f(\xx)$ but we keep their tick marks on the x-axis.
    We again slightly alter our mechanism \ref{eq:asymmetric_inverse_sensitivity_approx} to uniformly draw an output in $[\yt_{k-1}, \yt_k]$ for easier visualization, which implies our PDF will be a step function between the potential outputs.}
    \label{fig:asymmetric_pdf}
  \end{minipage}
\end{figure}

For a more encompassing discussion on the bias-variance trade-off, we first consider applying the smooth sensitivity framework to this example.
Smooth sensitivity is unbiased in the classical sense (the expected output matches the underlying data) but the noise is added proportional to at least the local sensitivity which would be $\Delta_{U}^{1}(f; \xx)$ here.
The inverse sensitivity mechanism weakens the unbiased definition to better take advantage of the asymmetry from $\Delta_{L}^{\ell}(f; \xx) << \Delta_{U}^{\ell}(f; \xx)$ and we see in Figure~\ref{fig:asymmetric_pdf} that the probability mass of outputs less than $f(\xx)$ becomes much more closely concentrated around $f(\xx)$.
However, using their notion of unbiased still limits the extent that it can improve the private estimation.
In contrast, a variant of the preprocessing method from \cite{cummings2020individual} could be applied here and potentially be both unbiased and have low variance if the \textit{a priori} sensitivity parameter closely matches the $\Delta_{L}^{\ell}(f; \xx)$ for $\ell$ close to zero.
However, applying this same reduced sensitivity parameter, which essentially fixes the variance and must be data independent, would lead to significant bias in Figure~\ref{fig:symmetric_pdf}.

By adding the slight bias towards early stopping from the iterative nature of the sparse vector technique, we can take full advantage of the tighter grouping of the lower output bound to significantly reduce the variance.
More specifically, if we map Figure~\ref{fig:inv_sens_plot} to these lower output bounds, then the reflective inverse sensitivity will be much steeper right below $f(\xx)$, which has the biggest impact upon when sparse vector technique terminates.
In fact, we could set the upper output bounds to be infinite and this would only slightly decay the accuracy of our method.
This allows us to naturally consider unbounded data for a variety functions, and we specifically examine the effects upon variance estimation in Appendix~\ref{subsec:unbounded} with empirical results showing negligible impact upon accuracy.
While our method does still have dependence upon the data-independent stream of potential outputs $\{\yt_i\}$, we provide a simple and general strategy for this selection in Appendix~\ref{sec:parameters} that robustly maintains accuracy across different invocations and datasets.
As a result, the slight bias from our method can utilize the asymmetry for significant improvement while also remaining competitive under perfect symmetry, giving a more inherent optimization of the bias-variance trade-off.

\subsection{Formalizing asymmetry of sensitivities}

We previously defined asymmetric sensitivity to informally be a consistent mismatch between $\Delta_{L}^{\ell}(f; \xx)$ and $\Delta_{U}^{\ell}(f; \xx)$ for most $\ell$.
Additionally, this mismatch has the highest impact when $\ell$ is closest to 0, as the outputs farther away from the underlying data are far less likely to be selected.
However, the level of privacy further affects this likelihood where smaller $\diffp$ implies mismatched $\Delta_{L}^{\ell}(f; \xx)$ and $\Delta_{U}^{\ell}(f; \xx)$ have a higher impact upon symmetry for larger $\ell$.
Therefore, to obtain a formal measurement of asymmetry, we should consider an averaging of $\Delta_{L}^{\ell}(f; \xx)$ vs $\Delta_{U}^{\ell}(f; \xx)$ over all $\ell$ but with higher weight given to smaller $\ell$ and this weighting should be further scaled by the privacy parameter.
We observe that the inverse sensitivity mechanism uniformly draws from $[L_{f}^{\ell},L_{f}^{\ell-1}] $ and $[U_{f}^{\ell-1},U_{f}^{\ell}] $ with probability proportional to $\Delta_{L}^{\ell}(f; \xx) \cdot \exp(-\ell \cdot \diffp / 2)$ and $\Delta_{}^{\ell}(f; \xx) \cdot \exp(-\ell \cdot \diffp / 2)$, respectively, by construction of the exponential mechanism.
This then precisely fits with our desired weighting and the probability that the inverse sensitivity mechanism selects an output greater than $f(\xx)$ is simply a normalization of $\sum_\ell \Delta_{U}^{\ell}(f; \xx) \cdot \exp(-\ell \cdot \diffp / 2)$.
With this connection of our informal notion to the inverse sensitivity mechanism, we then give a formal definition for measuring the asymmetry of the sensitivities.

\begin{definition}

Given a function $f: \cX^n \to \R$, we measure the asymmetry of the sensitivities by

\[
\abs{\prob{}{M_{\texttt{inv}}(\xx) > f(\xx)} - \frac{1}{2}}
\]

based upon the probability distribution from \ref{eq:inverse_sensitivity_mech}.
    
\end{definition}

Accordingly, we have that the sensitivities are symmetric if $f(\xx)$ is the median of the inverse sensitivity mechanism.
But this property does not imply accurate estimation as the variance could still be quite large.
However, it is not surprising that it will perform relatively worse than our method for more asymmetric sensitivities.
We empirically test this conjecture across different levels of sensitivity symmetry, which we also vary by toggling the level of privacy.

\begin{figure}[htbp]
  \centering
  \begin{minipage}{0.4\textwidth}
    \includegraphics[width=\textwidth]{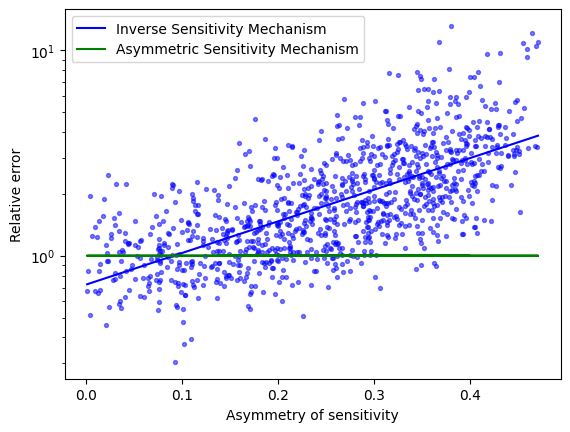}
  \end{minipage}
  \begin{minipage}{0.5\textwidth}
    \raggedleft
    \caption{
    We consider a range of randomly sampled output bounds and privacy parameters and compute the absolute error of our asymmetric sensitivity mechanism (ASM) and the inverse sensitivity mechanism (ISM) over a small number of random draws.
    We plot the (error ISM) / (error ASM) corresponding to the asymmetry of the sensitivities for the given output bounds and privacy parameter.
    }
  \end{minipage}
  \label{fig:scatter_asymmetric}
\end{figure}

For our simulations we uniformly distribute the lower and upper output bounds and we toggle the range of the upper output bounds to vary the level of asymmetry.
We also consider upper output bounds that are more heavy tailed and toggle the level of privacy, thereby determining the impact of the tail, to vary the asymmetry of the sensitivities.
For each simulated output bounds and privacy parameter, we compute the asymmetry of the sensitivities and we make several calls to both methods for private estimations and compute the average absolute error of each. 
While these simulations are not all-encompassing, they empirically validate our provided intuition, and we will further see in our instantiations of functions with inherent asymmetry that our methodology gives substantially improved estimates.

\section{Private Variance Estimation}\label{sec:basic_stats}

In this section, we instantiate our asymmetric sensitivity mechanism upon variance, a fundamental property of a dataset that is widely used in statistical analysis.

\begin{definition}\label{def:var_orig}

Let $\calX = \R$ and for $\xx \in \calX^n$ we let $\bar{\xx} = \frac{1}{n} \sum_{i=1}^n \xx_i$ and define 

\[
\var{}{\xx} \defeq\frac{1}{n}\sum_{i=1}^n (\xx_i - \bar{\xx})^2
\]
    
\end{definition}

There has been extensive work in the privacy literature upon covariance estimation for univariate and multivariate Gaussians with a focus upon optimizing asymptotic performance\footnote{See \cite{karwa2018finite, du2020differentially,biswas2020coinpress,kamath2019privately,bun2019private,aden2021sample,ashtiani2022private,kothari2022private,tsfadia2022friendlycore,liu2022differential,kamath2022private}}.
Our focus here will be practical methods for general data, so a rigorous comparison to all these methods for Gaussians is untenable and outside the scope of this work.

We first show how the variance instantiation of asymmetric sensitivity can be implemented efficiently and give intuition upon why we expect asymmetric sensitivities for this function.
Next we give a detailed empirical study that confirms this intuition, showing that our method will substantially outperform inverse sensitivity on the task of variance estimation.

\subsection{Efficient variance instantiation}

As seen in Section~\ref{subsec:implementation_framework}, we need to efficiently provide upper and lower output bounds in order to achieve an efficient implementation.
We first consider the lower output bounds and can provide the exact bounds from Definition~\ref{def:exact_sensitivity_bounds} which we prove in Appendix~\ref{sec:var_analysis}.

\begin{lemma}\label{lem:var_lower_bounds}
Given $\xx \in \R^n$, if $\xx_1 \leq ... \leq \xx_n$ are ordered then we have lower output bounds

\[
L^\ell_{\mbox{\bf Var}}(\xx) = \frac{n-\ell}{n} \min_{0 \leq i \leq \ell} \var{}{\xx_{[\ell+1-i:n-i]}}
\]

where we let $\xx_{[\ell+1-i:n-i]} \defeq (\xx_{\ell+1-i},..., \xx_{n-i})$
    
\end{lemma}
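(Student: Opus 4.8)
The plan is to characterize the infimum of $\var{}{\xx'}$ over all datasets $\xx'$ with $\dham(\xx,\xx') \le \ell$ by arguing that the optimal perturbation should concentrate the surviving (unchanged) coordinates as tightly as possible while pushing the $\ell$ altered coordinates harmlessly to the mean. First I would record the elementary identity $\var{}{\xx'} = \min_{c}\frac{1}{n}\sum_i (\xx'_i - c)^2$, i.e. that the empirical variance is the minimum mean-squared deviation from any fixed center, with the minimum attained at $c = \bar{\xx'}$. This lets me relax the problem: for any choice of which $\le \ell$ coordinates to change, the adversary can set each changed coordinate equal to the eventual mean $c$ (contributing zero), so the objective becomes $\frac{1}{n}\sum_{i \in S} (\xx_i - c)^2$ where $S$ is the set of at least $n-\ell$ unchanged coordinates. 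Actually one must be slightly careful: moving a coordinate to $c$ changes $\bar{\xx'}$, but I would argue via the $\min_c$ formulation that it suffices to lower-bound $\frac{1}{n}\sum_{i \in S}(\xx_i - c)^2$ over all $c$ and all index sets $S$ with $|S| \ge n-\ell$, and that this lower bound is achieved (set the changed coordinates to the minimizing $c$, then the variance is at most, hence equal to, $\frac{1}{n}\sum_{i\in S}(\xx_i-c)^2 = \frac{n-|S|'}{\cdots}$ — I'd take $|S| = n-\ell$ exactly since shrinking $S$ only helps).

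The core combinatorial step is then: over all subsets $S \subseteq [n]$ with $|S| = n-\ell$, minimize $\min_c \sum_{i\in S}(\xx_i - c)^2 = (n-\ell)\cdot\var{}{\xx_S}$ where $\xx_S$ denotes the subvector on $S$. I would prove that, since the $\xx_i$ are sorted, an optimal $S$ is a \emph{contiguous} block $\{\xx_{j+1},\dots,\xx_{j+(n-\ell)}\}$ for some $0 \le j \le \ell$. The exchange argument: if $S$ is not contiguous, there exist indices $a < b < c$ with $a,c \in S$ but $b \notin S$; replacing $c$ by $b$ (or $a$ by $b$) cannot increase $\min_{c'}\sum(\xx_i-c')^2$, because the sorted values mean $\xx_b$ lies between $\xx_a$ and $\xx_c$ and swapping in a more central value toward the cluster only decreases spread — I'd make this precise by comparing, at the fixed optimal center of the new set, the contribution $(\xx_b - c')^2$ against $(\xx_c - c')^2$. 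Iterating collapses $S$ to contiguous form. Writing the contiguous block as $\xx_{[\ell+1-i\,:\,n-i]}$ for $i$ ranging over $0,\dots,\ell$ recovers exactly the stated formula, with the $\frac{n-\ell}{n}$ factor coming from the $\frac1n$ normalization of $\var{}{\cdot}$ on $n$ versus $n-\ell$ points.

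Finally I would confirm the infimum is attained (so Definition~\ref{def:exact_sensitivity_bounds}'s $\inf$ is a $\min$, matching the $\min$ in the statement): for the optimal block $S$ and its mean $\mu_S$, set $\xx'_i = \xx_i$ for $i \in S$ and $\xx'_i = \mu_S$ for the $\ell$ coordinates outside $S$; then $\bar{\xx'} = \mu_S$ and $\var{}{\xx'} = \frac{1}{n}\sum_{i\in S}(\xx_i - \mu_S)^2 = \frac{n-\ell}{n}\var{}{\xx_S}$, giving a matching upper bound. I expect the main obstacle to be the bookkeeping around the moving mean — ensuring the relaxation $\var{}{\xx'} = \min_c \frac1n\sum_i(\xx'_i-c)^2$ is applied in the right direction for both the lower bound (any feasible $\xx'$ has variance at least the block minimum) and the upper bound (the explicit construction achieves it), rather than the exchange argument itself, which is routine given sortedness.
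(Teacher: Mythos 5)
Your proposal is correct and follows essentially the same route as the paper: first show the $\ell$ altered coordinates are optimally placed at the mean of the retained points (the paper's helper lemma), reducing the problem to minimizing $\frac{n-\ell}{n}\var{}{\xx_S}$ over retained index sets $S$ of size $n-\ell$, and then an exchange argument using sortedness and convexity of the squared-deviation function to show an optimal $S$ is a contiguous block, exactly as in the paper's proof by contradiction. One small fix for the exchange step: compare the swapped-in and swapped-out terms at the optimal center of the \emph{original} set $S$ rather than of the new set, since the chain $\min_{c'}\sum_{i\in S'}(\xx_i-c')^2 \le \sum_{i\in S'}(\xx_i-c^*_S)^2 \le \sum_{i\in S}(\xx_i-c^*_S)^2 = \min_{c'}\sum_{i\in S}(\xx_i-c')^2$ is the direction you need, and with $\xx_a\le\xx_b\le\xx_c$ at least one of the two candidate swaps satisfies the required term-wise inequality depending on whether $c^*_S\le\xx_b$ or $c^*_S>\xx_b$.
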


While the formula above immediately suggests an $O(n^2)$ time computation of all the lower bounds we will further prove in Appendix~\ref{sec:var_analysis} that we can use our approximation to get a more efficient implementation.
In particular, we consider a data independent fixed distance and only compute the exact bounds outputs closer to the underlying data to still ensure accurate estimations.

\begin{lemma}\label{lem:var_lower_bounds_runtime}
    Given $\xx \in \R^n$ and $c \geq 0$, we can compute all approximate output bounds $\widebar{L}^\ell_{\mbox{\bf Var}}(\xx) = L^\ell_{\mbox{\bf Var}}(\xx)$ for $\ell \leq c$ and $\widebar{L}^\ell_{\mbox{\bf Var}}(\xx)=0$ for $\ell > c$ in $O(n + c^2)$ time.
\end{lemma}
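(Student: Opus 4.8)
The plan is to convert Lemma~\ref{lem:var_lower_bounds} into an explicit algorithm with the claimed running time, and then separately argue that padding with zeros past distance $c$ still yields a legitimate approximate lower bounding family in the sense of Definition~\ref{def:approx_sensitivity_bounds}. First I would unpack the formula $L^\ell_{\mbox{\bf Var}}(\xx)=\frac{n-\ell}{n}\min_{0\le i\le\ell}\var{}{\xx_{[\ell+1-i:n-i]}}$ and observe that, once the data is regarded as sorted, $\xx_{[\ell+1-i:n-i]}$ is exactly $\xx$ with its $\ell-i$ smallest and its $i$ largest entries deleted; it therefore has $n-\ell$ entries, and for $\ell\le c$ the number of entries removed from either end is at most $c$.

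Since $\var{}{\cdot}$ and $L^\ell_{\mbox{\bf Var}}$ are permutation invariant, I never need the full sorted order. It suffices to extract the $c$ smallest and the $c$ largest entries of $\xx$ by linear-time selection and sort only those two length-$c$ lists, which costs $O(n+c\log c)$; in addition I would compute the totals $\sum_j\xx_j$ and $\sum_j\xx_j^2$ in $O(n)$ time, together with prefix sums of the values and of the squares along each of the two sorted length-$c$ lists. Using $\var{}{S}=\tfrac{1}{|S|}\sum_{x\in S}x^2-\bigl(\tfrac{1}{|S|}\sum_{x\in S}x\bigr)^2$ and writing the sum and sum of squares over $\xx_{[\ell+1-i:n-i]}$ as the global total minus the contributions of the deleted prefix and suffix, each $\var{}{\xx_{[\ell+1-i:n-i]}}$ becomes an $O(1)$ lookup. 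There are $\sum_{\ell=0}^{c}(\ell+1)=O(c^2)$ pairs $(\ell,i)$ to consider, so taking the inner minimum over $i$ and rescaling by $(n-\ell)/n$ produces $\widebar{L}^\ell_{\mbox{\bf Var}}(\xx)=L^\ell_{\mbox{\bf Var}}(\xx)$ for all $\ell\le c$ in $O(c^2)$ further time; writing $\widebar{L}^\ell_{\mbox{\bf Var}}(\xx)=0$ for $\ell>c$ costs $O(n)$, for a total of $O(n+c\log c+c^2)=O(n+c^2)$. In the degenerate regime $2c>n$ one may simply fully sort $\xx$, which is $O(n\log n)=O(c^2)$ there.

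It then remains to verify the two conditions of Definition~\ref{def:approx_sensitivity_bounds}. For $\widebar{L}^\ell_{\mbox{\bf Var}}(\xx)\le L^\ell_{\mbox{\bf Var}}(\xx)$: the two sides coincide when $\ell\le c$, and for $\ell>c$ the inequality reads $0\le L^\ell_{\mbox{\bf Var}}(\xx)$, which holds because every variance is nonnegative. For $\widebar{L}^\ell_{\mbox{\bf Var}}(\xx)\ge\widebar{L}^{\ell+1}_{\mbox{\bf Var}}(\xx')$ on neighboring $\xx,\xx'$: when $\ell+1\le c$ this is $L^\ell_{\mbox{\bf Var}}(\xx)\ge L^{\ell+1}_{\mbox{\bf Var}}(\xx')$, which follows from the triangle inequality for $\dham$ exactly as in Corollary~\ref{cor:inverse_sensitivity_bound}, since any $\xx''$ within Hamming distance $\ell$ of $\xx$ lies within distance $\ell+1$ of $\xx'$ and hence the infimum defining $L^{\ell+1}_{\mbox{\bf Var}}(\xx')$ is over a superset; when $\ell\ge c$ the right side is $0$ and the left side is either $L^c_{\mbox{\bf Var}}(\xx)\ge 0$ or $0$, so the inequality is immediate.

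I expect the only points that need genuine care to be (i) confirming that the partial-selection preprocessing really answers all $O(c^2)$ sub-array variance queries in $O(1)$, in particular handling the prefix/suffix sums, the varying length $n-\ell$, and the $(n-\ell)/n$ scaling correctly, and (ii) the boundary bookkeeping at $\ell=c$ in the Definition~\ref{def:approx_sensitivity_bounds} checks; the counting of $O(c^2)$ pairs and the nonnegativity facts are routine.
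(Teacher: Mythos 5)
Your proposal is correct and follows essentially the same approach as the paper: verify the two conditions of Definition~\ref{def:approx_sensitivity_bounds} via nonnegativity of variance and the containment-of-Hamming-balls argument, then achieve the runtime by extracting the $c$ smallest and largest entries in $O(n+c\log c)$ time, precomputing global sums of values and squares in $O(n)$ time, and evaluating each of the $O(c^2)$ sub-array variances in $O(1)$ time via the sum/sum-of-squares formula. The only cosmetic differences are that you use prefix sums where the paper maintains running sums while iterating, and you add an explicit remark about the degenerate case $2c>n$.
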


Next we consider the upper bounds, but if the data is unbounded then we must have ${U}^{1}_{\mbox{\bf Var}}(\xx) = \infty$.
It is precisely for this reason that asymmetric sensitivities are inherent for variance.
Our method can naturally handle this setting and we show in Appendix~\ref{subsec:unbounded} that unbounded upper bounds barely affects our accuracy.
However, applying the inverse sensitivity mechanism requires reasonable bounds upon each data point that should be data-independent and also sufficiently loose to not add bias from clipping data points.

\begin{lemma}\label{lem:var_approx_upper_bounds}
    If we restrict all values to the interval $[a,b]$ then given $\xx \in [a,b]^n$ we can give approximate upper output bounds

    \[
    \widebar{U}^{\ell}_{\mbox{\bf Var}}(\xx) = \var{}{\xx} + \frac{\ell (b-a)^2}{n}
    \]
\end{lemma}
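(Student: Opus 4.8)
\textbf{Proof proposal for Lemma~\ref{lem:var_approx_upper_bounds}.}

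The plan is to verify directly that the proposed functions $\widebar{U}^{\ell}_{\mbox{\bf Var}}(\xx) = \var{}{\xx} + \ell(b-a)^2/n$ satisfy the two conditions required of an approximate upper sensitivity bounding function in Definition~\ref{def:approx_sensitivity_bounds}, namely (i) $\widebar{U}_f^{\ell}(\xx) \geq U_f^{\ell}(\xx)$ and (ii) $\widebar{U}_f^{\ell}(\xx) \leq \widebar{U}_f^{\ell+1}(\xx')$ for all neighboring $\xx, \xx' \in [a,b]^n$. Condition (ii) is nearly immediate: since $\xx, \xx'$ are neighboring datasets in $[a,b]^n$, a single coordinate change can move the mean by at most $(b-a)/n$ and each squared deviation term is bounded, so $|\var{}{\xx} - \var{}{\xx'}|$ is at most $(b-a)^2/n$ (this is the standard global-sensitivity bound for variance on a bounded domain, which I would either cite or establish in one line by writing $\var{}{\xx} = \frac{1}{n}\sum \xx_i^2 - \bar\xx^2$ and bounding each piece). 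Hence $\widebar{U}^{\ell}_{\mbox{\bf Var}}(\xx) = \var{}{\xx} + \ell(b-a)^2/n \leq \var{}{\xx'} + (b-a)^2/n + \ell(b-a)^2/n = \widebar{U}^{\ell+1}_{\mbox{\bf Var}}(\xx')$, giving (ii).

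For condition (i), I would argue that changing at most $\ell$ coordinates of $\xx$ (each staying within $[a,b]$) can increase the variance by at most $\ell(b-a)^2/n$. The cleanest route is induction on $\ell$: changing one coordinate increases the variance by at most the single-coordinate sensitivity $(b-a)^2/n$ just discussed, and iterating this bound $\ell$ times (through a chain of datasets each neighboring the next, from $\xx$ to the maximizer $\xx'$ achieving $U^\ell_{\mbox{\bf Var}}(\xx)$) yields $U^\ell_{\mbox{\bf Var}}(\xx) \leq \var{}{\xx} + \ell(b-a)^2/n$. Alternatively, one could give a direct combinatorial argument bounding $\var{}{\xx'} - \var{}{\xx}$ for any $\xx'$ with $\dham(\xx,\xx') \leq \ell$, but the telescoping/inductive version is shorter and reuses the $\ell=1$ case.

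The main (and only mild) obstacle is pinning down the exact constant in the single-coordinate variance sensitivity on $[a,b]^n$ and confirming it is at most $(b-a)^2/n$ rather than something slightly larger; a careless bound can give $(b-a)^2$ without the $1/n$ factor. I would handle this by writing $\var{}{\xx} = \frac{1}{n}\sum_i \xx_i^2 - \left(\frac{1}{n}\sum_i \xx_i\right)^2$, noting this equals $\frac{1}{n^2}\sum_{i<j}(\xx_i - \xx_j)^2$, and observing that changing one coordinate alters exactly $n-1$ of the $\binom{n}{2}$ pairwise terms, each by at most $(b-a)^2$ in absolute value, so the change in $\var{}{\xx}$ is at most $(n-1)(b-a)^2/n^2 \leq (b-a)^2/n$. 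With the $\ell=1$ bound secured, both conditions of Definition~\ref{def:approx_sensitivity_bounds} follow as above, completing the proof. Full details are deferred to Appendix~\ref{sec:var_analysis}.
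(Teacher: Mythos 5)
Your proposal is correct and follows essentially the same route as the paper: establish the one-coordinate variance sensitivity bound $(b-a)^2/n$ via the pairwise decomposition $\var{}{\xx}=\frac{1}{n^2}\sum_{i<j}(\xx_i-\xx_j)^2$ (the paper's Definition~\ref{def:var_alternative}), deduce condition (ii) of Definition~\ref{def:approx_sensitivity_bounds} directly from it, and obtain condition (i) by telescoping the one-step bound over a chain of neighboring datasets. The only cosmetic difference is that you track the slightly sharper constant $(n-1)(b-a)^2/n^2$ while the paper bounds the changed terms by $n(b-a)^2/n^2$; and for (i) you should phrase the telescoping over an arbitrary $\xx'$ with $\dham(\xx,\xx')\leq\ell$ rather than a "maximizer," since the supremum need not be attained — exactly as the paper does.
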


To our knowledge, there is no efficient method for computing the exact upper bounds for general data (contained in a range), so to maintain practicality we provide approximate bounds, proven in Appendix~\ref{sec:var_analysis}.

\begin{algorithm}
\caption{Variance instantiation of asymmetric sensitivity mechanism \label{algo:var_asymmetric}}
\begin{algorithmic}[1]
\REQUIRE Input dataset $\xx$, and parameter $\beta > 1$
\STATE Compute all $\widebar{L}^\ell_{\mbox{\bf Var}}(\xx)$ with $c = 100$ (Lemma~\ref{lem:var_lower_bounds_runtime})
\STATE Compute all $\widebar{U}^{\ell}_{\mbox{\bf Var}}(\xx)$ if domain is restricted to $[a,b]^n$ (Lemma~\ref{lem:var_approx_upper_bounds})
\STATE $\{\bot^{k-1},\top\} \longleftarrow \AT(\xx, \{ \widebar{\asym{f}}(\xx;\beta^i  - 1)\}, \Delta=1, T=0)$
\RETURN $\beta^k - 1$
\end{algorithmic}
\end{algorithm}

\begin{theorem}\label{thm:var_instant}
    Algorithm~\ref{algo:var_asymmetric} is $(\diffp_1 + 2 \diffp_2)$-DP and has a runtime of $O(n + q)$ where $q$ is the number of queries in $\AT$
\end{theorem}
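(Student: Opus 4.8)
\textbf{Proof proposal for Theorem~\ref{thm:var_instant}.}

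The plan is to decompose the claim into its privacy component and its runtime component, handling each separately. For privacy, the key observation is that Algorithm~\ref{algo:var_asymmetric} is an instantiation of mechanism~\ref{eq:asymmetric_inverse_sensitivity_approx} with $f = \mbox{\bf Var}$ and the specific stream of potential outputs $\yt_i = \beta^i - 1$, using the approximate upper and lower bounds from Lemmas~\ref{lem:var_approx_upper_bounds} and~\ref{lem:var_lower_bounds_runtime}. So first I would verify that the functions $\widebar{L}^\ell_{\mbox{\bf Var}}$ and $\widebar{U}^\ell_{\mbox{\bf Var}}$ actually satisfy Definition~\ref{def:approx_sensitivity_bounds}, i.e. that they sandwich the true output bounds ($\widebar{U}^\ell \geq U^\ell$, $\widebar{L}^\ell \leq L^\ell$) and that they shift correctly between neighboring datasets ($\widebar{U}^\ell_f(\xx) \leq \widebar{U}^{\ell+1}_f(\xx')$ and $\widebar{L}^\ell_f(\xx) \geq \widebar{L}^{\ell+1}_f(\xx')$). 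The sandwiching should follow directly from Lemmas~\ref{lem:var_lower_bounds}, \ref{lem:var_lower_bounds_runtime}, and~\ref{lem:var_approx_upper_bounds} (the approximate lower bound equals the exact bound for $\ell \le c$ and is $0 \le L^\ell$ beyond; the approximate upper bound dominates the true supremum because moving $\ell$ points anywhere in $[a,b]$ can increase each squared deviation by at most $(b-a)^2$). The neighbor-shift conditions for $\widebar{U}$ reduce to the inequality $\mbox{\bf Var}(\xx) + \ell(b-a)^2/n \le \mbox{\bf Var}(\xx') + (\ell+1)(b-a)^2/n$, which holds since $|\mbox{\bf Var}(\xx) - \mbox{\bf Var}(\xx')| \le (b-a)^2/n$ for neighbors; the condition for $\widebar{L}$ needs a short monotonicity-of-infimum argument plus the corresponding Lipschitz bound. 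Once the bounds are validated, Theorem~\ref{thm:approx_asym_sens} immediately yields $(\diffp_1 + 2\diffp_2)$-DP, and the return value $\beta^k - 1$ is just a post-processing relabeling of the index output by $\AT$, which preserves privacy.

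For the runtime, I would break the computation into the three lines of the algorithm. Line 1 costs $O(n + c^2) = O(n)$ by Lemma~\ref{lem:var_lower_bounds_runtime} with the fixed constant $c = 100$ (this includes the $O(n \log n)$ or $O(n)$ sort implicit in Lemma~\ref{lem:var_lower_bounds}; I should note that sorting is $O(n\log n)$ strictly, so either absorb it or remark that radix/counting arguments or the stated lemma already account for it — most likely the paper intends $O(n\log n)$ to be folded into $\Otil$ or simply treats the sort as part of the $O(n + c^2)$ budget). Line 2 costs $O(q)$ since each $\widebar{U}^\ell$ is computed by the closed-form expression of Lemma~\ref{lem:var_approx_upper_bounds} in $O(1)$ time once $\mbox{\bf Var}(\xx)$ is known (and we need only as many values of $\ell$ as queries). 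Line 3 is the execution of $\AT$ on $q$ queries: using the implementation framework of Section~\ref{subsec:implementation_framework}, each query $\yt_i = \beta^i - 1$ requires computing $\widebar{\asym{\mbox{\bf Var}}}(\xx;\yt_i)$, which by Definition~\ref{def:approx_inv_sens_length} is a threshold search over the precomputed bound arrays; since the stream $\{\beta^i - 1\}$ is increasing, the amortized pointer-advancement trick gives $O(1)$ amortized per query after the $O(n + c^2)$ preprocessing, for a total of $O(q)$. Summing, the runtime is $O(n + q)$.

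The main obstacle I anticipate is the careful verification of the neighbor-shift conditions in Definition~\ref{def:approx_sensitivity_bounds} for the approximate lower bounds, since $\widebar{L}^\ell_{\mbox{\bf Var}}$ is defined piecewise (exact for $\ell \le c$, zero beyond) and one must check the condition both within each piece and across the boundary $\ell = c$. Within the exact regime this should reduce to properties of the true output bounds $L^\ell_{\mbox{\bf Var}}$ (which themselves satisfy a triangle-inequality-style relation by construction from Definition~\ref{def:exact_sensitivity_bounds}), and across the boundary one uses that $\widebar{L}^{c}_{\mbox{\bf Var}}(\xx) = L^c_{\mbox{\bf Var}}(\xx) \ge 0 = \widebar{L}^{c+1}_{\mbox{\bf Var}}(\xx')$ since variance is nonnegative. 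A secondary subtlety is ensuring that the Hamming/neighbor relation used ($\dham \le 1$ under the swap definition) is consistent between the variance-specific lemmas and the generic machinery; I would state this dependence explicitly. Everything else — the reduction to Theorem~\ref{thm:approx_asym_sens}, the post-processing argument, and the additive runtime accounting — is routine.
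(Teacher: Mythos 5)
Your proposal takes essentially the same route as the paper: decompose into a privacy claim and a runtime claim, validate the approximate bounds against Definition~\ref{def:approx_sensitivity_bounds} via Lemmas~\ref{lem:var_lower_bounds_runtime} and~\ref{lem:var_approx_upper_bounds}, reduce to Theorem~\ref{thm:approx_asym_sens}, and add up the per-line costs. Two small points to tidy up. First, your worry about the sort in Line~1 is a non-issue: the proof of Lemma~\ref{lem:var_lower_bounds_runtime} already avoids a full $O(n\log n)$ sort by observing that the ordering only matters for the largest and smallest $c$ entries, which can be extracted in $O(n + c\log c)$ time, so the $O(n + c^2)$ budget is genuine and no $\Otil$ hiding is needed. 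Second, and more substantively, you omit the unbounded branch of Algorithm~\ref{algo:var_asymmetric}: Line~2 is conditional on the domain being restricted to $[a,b]^n$, and when it is not, the algorithm simply sets $\widebar{U}^{1}_{\mbox{\bf Var}}(\xx) = \infty$. The paper's proof explicitly splits into these two cases and notes that the all-infinity upper bound trivially satisfies Definition~\ref{def:approx_sensitivity_bounds}, so Theorem~\ref{thm:approx_asym_sens} still applies; you should add this branch to make the privacy argument cover the algorithm as actually stated. Your observation that the final $\beta^k - 1$ is post-processing is correct but is already implicit in the mechanism~\ref{eq:asymmetric_inverse_sensitivity_approx} framing, and your careful breakdown of the neighbor-shift conditions for $\widebar{L}$ across the $\ell = c$ boundary is exactly the content of the proof of Lemma~\ref{lem:var_lower_bounds_runtime}, so you would in practice just cite that lemma rather than reprove it.
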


\begin{proof}
    If we assume the domain is restricted to $[a,b]^n$ then the privacy guarantees follow from Lemma~\ref{lem:var_lower_bounds_runtime} and Lemma~\ref{lem:var_approx_upper_bounds} applied to Theorem~\ref{thm:approx_asym_sens}. If not then we apply Lemma~\ref{lem:var_lower_bounds_runtime} and $\widebar{U}^{1}_{\mbox{\bf Var}}(\xx) = \infty$ to Theorem~\ref{thm:approx_asym_sens} to get our privacy guarantees.

    For the runtime, computing all $\widebar{L}^\ell_{\mbox{\bf Var}}(\xx)$ is $O(n)$ time by Lemma~\ref{lem:var_lower_bounds_runtime} and fixing $c = 100$, and computing all $\widebar{U}^{\ell}_{\mbox{\bf Var}}(\xx)$ is $O(n)$ time.
    Finally, we can run $\AT$ in $O(n + q)$ time as seen in Section~\ref{subsec:implementation_framework}
\end{proof}

In our implementations we'll more reasonably assume $\beta \geq 1.001$, so $\beta^{50000} > 10^{21}$ and we'll simply terminate $\AT$ after at most 50,000 queries for all datasets without affecting the privacy guarantees.
This then gives a runtime of $O(n)$.

\subsection{Empirical study of variance estimation}

For our instantiations of machine learning model evaluation we will be using the following datasets for regression tasks: Diamonds dataset containing diamond prices and related features \cite{diamonds}; Abalone dataset containing age of abalone and related feature \cite{misc_abalone_1}; and Bike dataset containing number of bike rentals and related feature \cite{misc_bike_sharing_dataset_275}.  
We will also use the labels from these datasets to test our variance invocation.
We also use the Adult dataset, \cite{misc_adult_2}, for model evaluation of classification tasks so we will borrow two of the features, age and hours worked per week, to test our variance invocation.

While our method does not require any bounds on the data to still maintain high accuracy (see Appendix~\ref{subsec:unbounded}), it is necessary for the inverse sensitivity mechanism.
All of our data is inherently non-negative and some data has innate upper bounds.
If not, we use reasonable upper bounds, which should be data independent, to avoid adding bias from clipping the data.
We use the following bounds: [0,50000] for diamond prices; [0,50] for abalone ages; [0,5000] for city bike rentals; [0,168] for hours; and [0,125] for age. 
We also fix our parameter $\beta = 1.005$ and will maintain this consistency across all experiments.
We further show in Appendix~\ref{subsec:beta} that our method is robustly accurate across reasonable $\beta$ choices.

\begin{figure}[htbp]
  \centering
  \begin{subfigure}[b]{0.3\textwidth}
    \includegraphics[width=\textwidth]{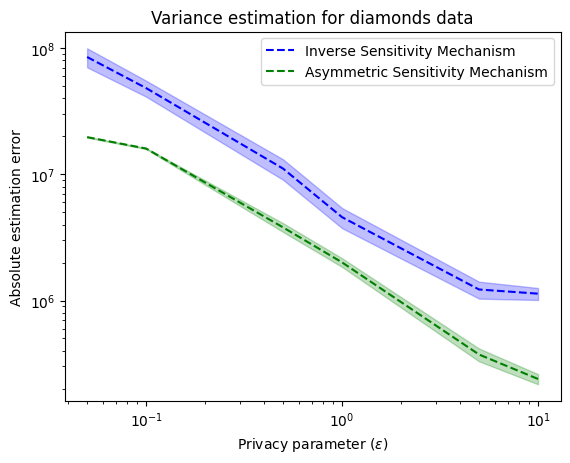}
  \end{subfigure}
  \begin{subfigure}[b]{0.3\textwidth}
    \includegraphics[width=\textwidth]{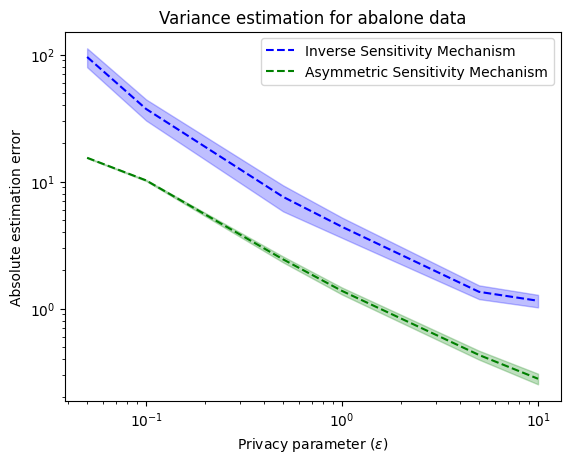}
  \end{subfigure}
  \begin{subfigure}[b]{0.3\textwidth}
    \includegraphics[width=\textwidth]{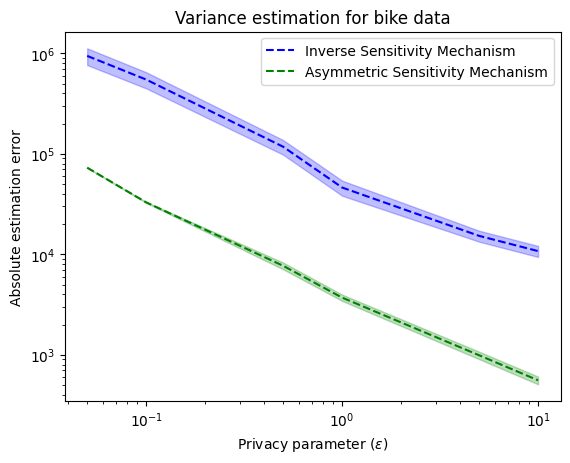}
  \end{subfigure} 
  \\
  \begin{subfigure}[b]{0.3\textwidth}
    \includegraphics[width=\textwidth]{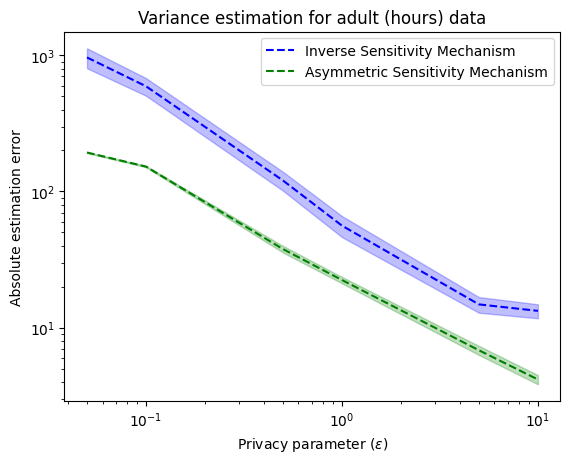}
  \end{subfigure}
  \begin{subfigure}[b]{0.3\textwidth}
    \includegraphics[width=\textwidth]{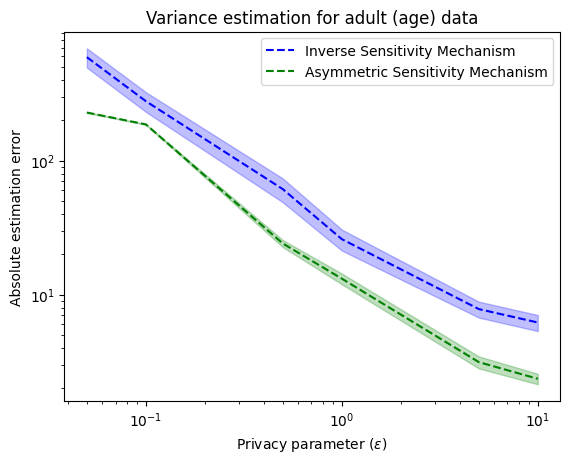}
  \end{subfigure} 
  \label{fig:var}
\end{figure}

For each privacy parameter, we repeat 100 times: sample 1,000 datapoints from the dataset and call each mechanism on the sampled data for estimates. 
We then plot the average absolute error for both methods along with confidence intervals of 0.9.
As expected, we see that our approach of variance estimation sees substantially less error across privacy parameters and datasets.

\section{Private Machine Learning Model Evaluation}\label{sec:model_evaluation}

In this section, we invoke our asymmetric sensitivity mechanism upon commonly used metrics for machine learning model evaluation for both classification and regression tasks. 
In particular, we consider the cross-entropy loss for classification tasks, and the mean squared error (MSE) and mean absolute error (MAE) for regression tasks.
Note that combining our improved estimation for variance in Section~\ref{sec:basic_stats} with the improved estimation for MSE also implies an improved estimation of the coefficient of determination, $R^2$, which is also commonly used for evaluating regression performance.

We now define our datasets as $(\xx, y) \in \calX^n \times \R^n$ and we set $\dham((\xx,y),(\xx',y')) = |\{i: \xx_i \neq \xx'_i \text{ or } y_i \neq y'_i\}|$ to be the Hamming distance between datasets. 
We will be considering binary classification and multi-class classification, so we define cross-entropy loss for both.

\begin{definition}

Given machine learning model for binary-classification $\m: \calX \to \R$ and assume all $y_i \in \{0,1\}$, let

\[
\bce_{\m}(\xx,y) = \sum_{i=1}^n - y_i\log\left(\frac{1}{1 + e^{-\m(\xx_i)}}\right) - (1 - y_i)\log\left(\frac{e^{-\m(\xx_i)}}{1 + e^{-\m(\xx_i)}}\right)
\]

Similarly, given machine learning model for multi-classification $\m: \calX \to \R^\class$ with $\class$ classes and assume all $y_i \in [\class]$, let
\[
\ce_{\m}(\xx,y) = -  \sum_{i=1}^n \log\left(\frac{e^{\m(\xx_i)_{y_i}}}{\sum_{j=1}^\class e^{\m(\xx_i)_j}}\right)
\]

\end{definition}

We also define the mean squared error and mean absolute error.

\begin{definition}
Given a dataset $(\xx,y)$ and machine learning model $\m: \calX \to \R$, we define  

\[
\mse_\m(\xx,y) = \frac{\sum_{i=1}^n (\m(\xx_i) - y_i)^2}{n} 
\text{ \ \ \ \ \ \ \ \ \ and \ \ \ \ \ \ \ \ \ }
\mae_\m(\xx,y) = \frac{\sum_{i=1}^n |\m(\xx_i) - y_i|}{n}
\]

\end{definition}

The efficient instantiation of each along with intuition upon why we expect asymmetric sensitivities will be structurally similar to the variance implementation, so we leave the details to Appendix~\ref{sec:class_analysis}.
Furthermore, we generalize this analysis to all loss functions of a similar form and prove that these types of functions also provide monotonic reflective inverse sensitivities which improves the privacy guarantees.

\subsection{Empirical study of model evaluation for classification}

For our empirical study of model evaluation for classification tasks we will consider two tabular datasets with binary labels, the Adult dataset \cite{misc_adult_2} and Diabetes dataset \cite{diabetes_dataset}, along with two computer vision tasks with 10 classes, the mnist dataset \cite{lecun-mnisthandwrittendigit-2010} and cifar10 dataset \cite{cifar10}.
Our focus here is upon the accuracy of our evaluation, not optimizing the quality of the model itself. 
As such, we will be using reasonable choices for models for simplicity but certainly not state-of-the-art models .

\begin{figure}[htbp]
    \centering
    \begin{subfigure}[b]{0.23\textwidth}
        \includegraphics[width=\linewidth]{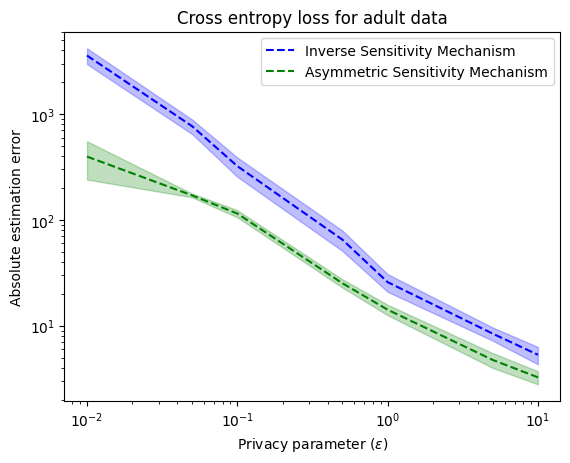}
    \end{subfigure}
    \begin{subfigure}[b]{0.23\textwidth}
        \includegraphics[width=\linewidth]{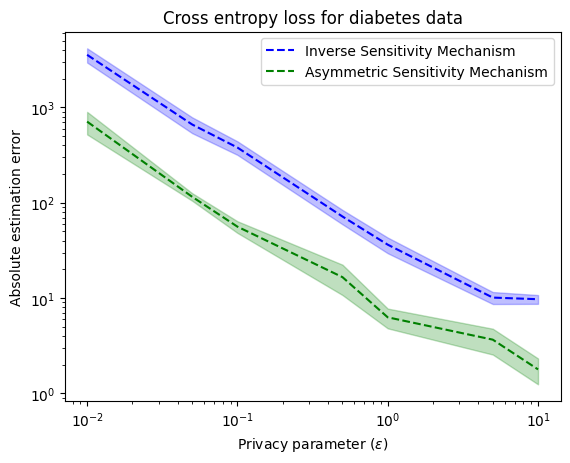}
    \end{subfigure}
    \begin{subfigure}[b]{0.23\textwidth}
        \includegraphics[width=\linewidth]{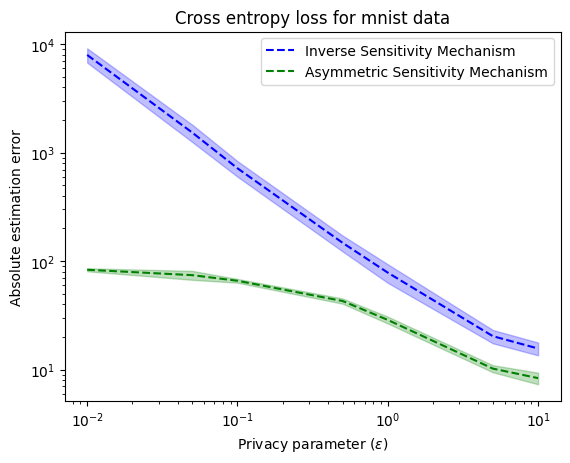}
    \end{subfigure}
    \begin{subfigure}[b]{0.23\textwidth}
        \includegraphics[width=\linewidth]{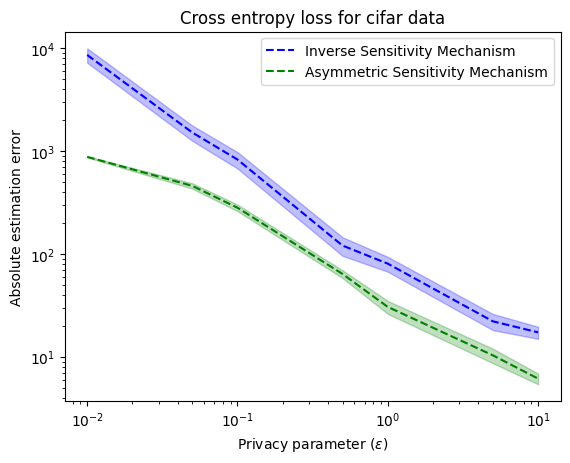}
    \end{subfigure}
    \caption{Plots comparing our method and inverse sensitivity mechanisms for estimating cross entropy loss. For each privacy parameter we both sample 1,000 datapoints from the test set and call each mechanism 100 times, plotting the average absolute error with 0.9 confidence intervals.}
    \label{fig:ce}
\end{figure}

For the tabular data, we partition into train and test with an 80/20 split and train with an xgboost classifier with the default parameters.
For the mnist data, which is already partitioned, we use a simple MLP with one inner dense layer of 128 neurons and relu activation, and the final layer of 10 neurons has a softmax activation.
We train this model for 5 epochs.
For the cifar10 data, which is already partitioned, we use a relatively small CNN with several pooling and convolutional layers, and several dense layers at the end with relu activation and final layer with softmax activation. 
We train this model for 10 epochs.

Again our method does not require any bounds on the data to maintain high accuracy, but it is necessary for the inverse sensitivity mechanism.
Given the softmax activation for our models, the outputs are unbounded, but we will provide reasonable bounds.
We will use the bounds [-10,10] of the model output for binary classification tabular data, and bounds $[-25,25]^{10}$ of the model output for the multi-classification vision data.
Once again, we fix our parameter $\beta = 1.005$.

\subsection{Empirical study of model evaluation for regression}

As discussed in Section~\ref{sec:basic_stats}, our machine learning model evaluations for regression will use the following datasets: Diamonds dataset containing diamond prices and related features \cite{diamonds}; Abalone dataset containing age of abalone and related feature \cite{misc_abalone_1}; and Bike dataset containing number of bike rentals and related feature \cite{misc_bike_sharing_dataset_275}.
We also use the same parameters from Section~\ref{sec:basic_stats} for these datasets.
Once again, our goal here is to accurately assess the quality of the model and not optimize performance.
As such we will simply train with the xgboost regressor under default parameters after partitioning each dataset into train and test with an 80/20 split.

\begin{figure}[htbp]
  \centering
  \begin{subfigure}[b]{0.3\textwidth}
    \includegraphics[width=\textwidth]{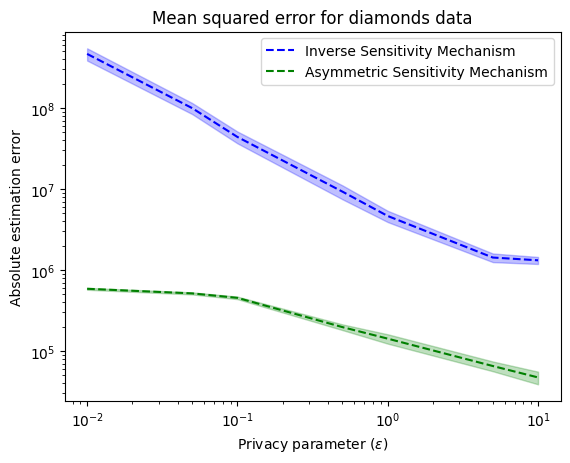}
  \end{subfigure}
  \begin{subfigure}[b]{0.3\textwidth}
    \includegraphics[width=\textwidth]{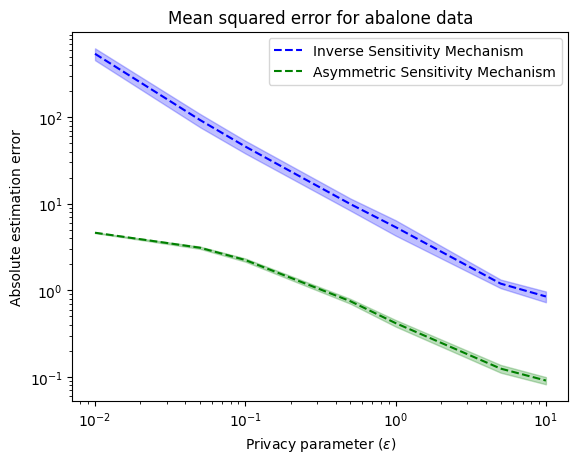}
  \end{subfigure}
  \begin{subfigure}[b]{0.3\textwidth}
    \includegraphics[width=\textwidth]{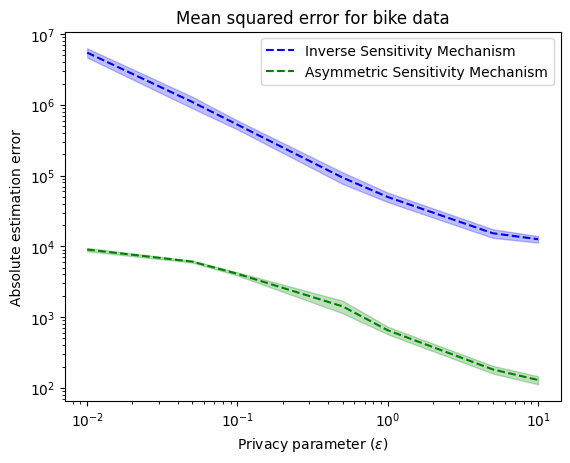}
  \end{subfigure} 
  \label{fig:mse}
\end{figure}

We first consider mean squared error estimation and repeat 100 times for each privacy parameter: draw 1,000 datapoints from the test set and call each mechanism 100 times for estimates.
We then plot the average absolute error along with confidence intervals of 0.9.
We repeat this process for mean absolute error.

\begin{figure}[htbp]
  \centering
  \begin{subfigure}[b]{0.3\textwidth}
    \includegraphics[width=\textwidth]{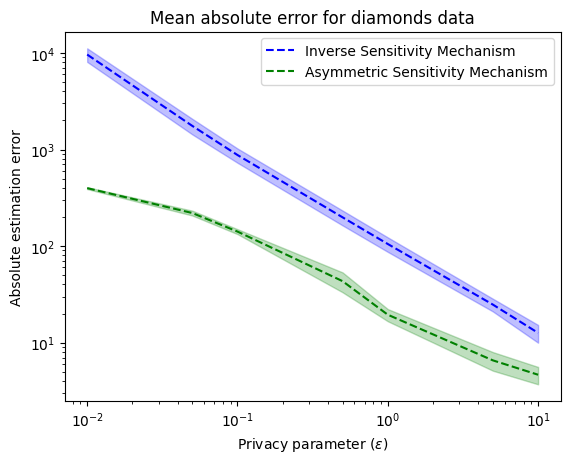}
  \end{subfigure}
  \begin{subfigure}[b]{0.3\textwidth}
    \includegraphics[width=\textwidth]{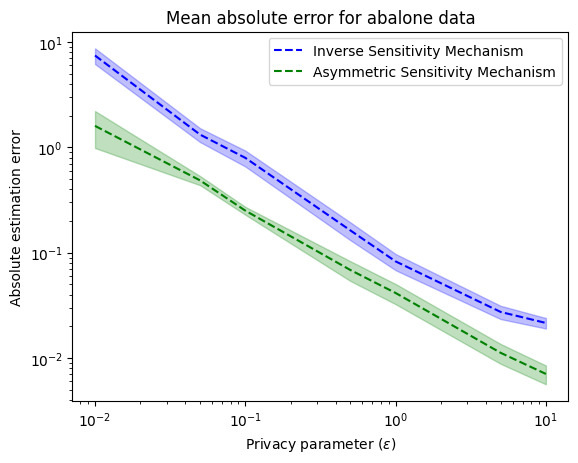}
  \end{subfigure}
  \begin{subfigure}[b]{0.3\textwidth}
    \includegraphics[width=\textwidth]{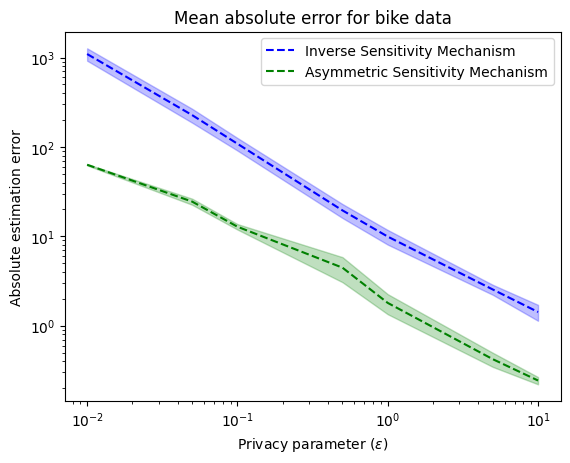}
  \end{subfigure} 
  \label{fig:mae}
\end{figure}

\bibliography{references}

\appendix

\section{Supplemental Results}\label{sec:parameters}

In this section we provide supplemental results and experiments to our main results.
We first show that the asymmetric sensitivity mechanism naturally handles unbounded data for our instantiations with negligible accuracy loss.
Next we provide a simple, general, and robust strategy for selecting potential outputs for our method and provide a corresponding empirical study.
Finally, we discuss how our methods can also apply to the add-subtract definition of neighboring datasets.

\subsection{Naturally handling unbounded data}\label{subsec:unbounded}

As previously discussed in our instantiations from Sections~\ref{sec:basic_stats} and \ref{sec:model_evaluation}, the functions considered will have infinite upper output bounds if the data is unbounded.
Given the iterative nature of the sparse vector technique, we will be able to naturally handle this setting with negligible accuracy loss.
In particular, Algorithm~\ref{algo:AT} outputs the first query above the threshold, and we see from our Definition~\ref{def:approx_reflective_sensitivity} that even if $U_f^{\ell}(\xx) = \infty$ then we will have that the reflective inverse sensitivity is 1/2 for all possible outputs greater than $f(\xx)$.
Thus each query of potential outputs greater than $f(\xx)$ is more likely than not to terminate the algorithm.
The probability of termination increases even more if the reflective inverse sensitivity is greater than 1/2 but will have minimal effect.

We test this upon our variance instantion by using the bounds from Section~\ref{sec:basic_stats} and also considering the unbounded case. 
We also use the same parameters and datasets from Section~\ref{sec:basic_stats}.
From Figure~\ref{fig:unbounded_var}, we see that the difference in performance for the unbounded setting is slim and thus our method can inherently consider unbounded data.

\begin{figure}[htbp]
  \centering
  \begin{subfigure}[b]{0.3\textwidth}
    \includegraphics[width=\textwidth]{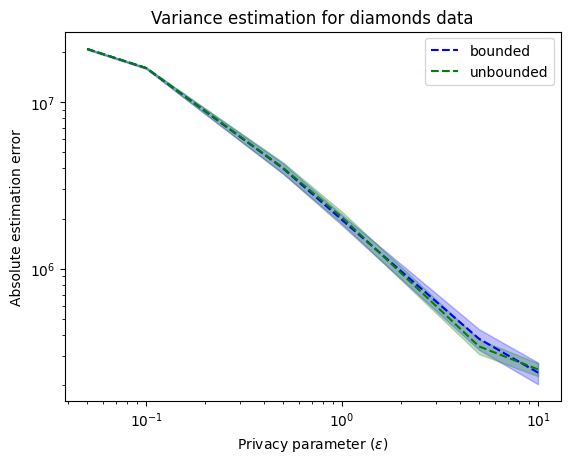}
  \end{subfigure}
  \begin{subfigure}[b]{0.3\textwidth}
    \includegraphics[width=\textwidth]{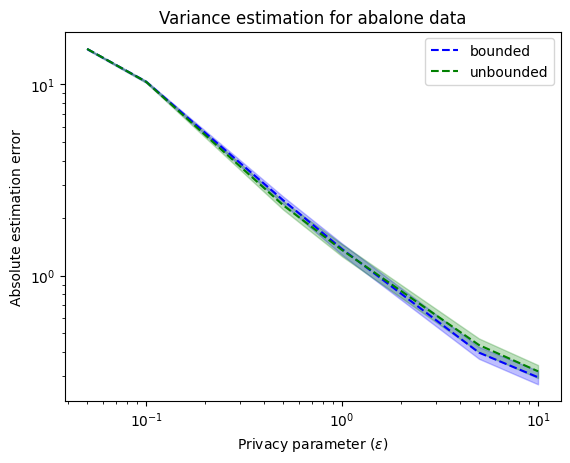}
  \end{subfigure}
  \begin{subfigure}[b]{0.3\textwidth}
    \includegraphics[width=\textwidth]{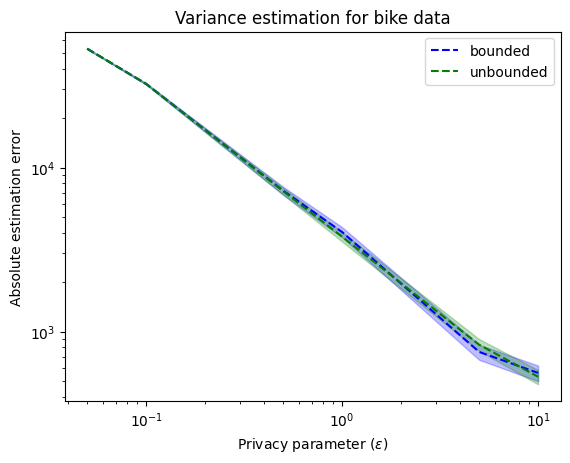}
  \end{subfigure} 
  \\
  \begin{subfigure}[b]{0.3\textwidth}
    \includegraphics[width=\textwidth]{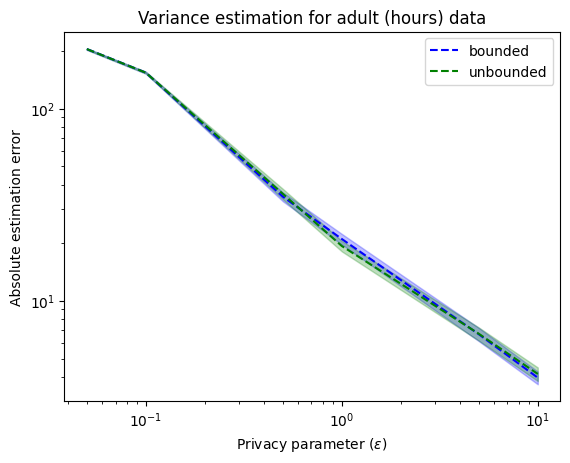}
  \end{subfigure}
  \begin{subfigure}[b]{0.3\textwidth}
    \includegraphics[width=\textwidth]{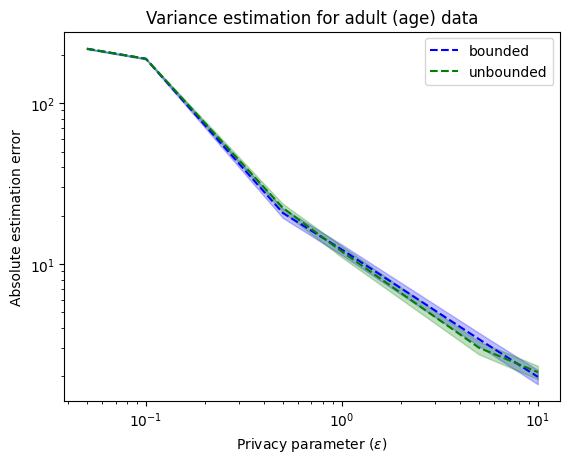}
  \end{subfigure} 
  \caption{Plots of the absolute error for variance estimation with both reasonable bounds on the data and unbounded data.}
  \label{fig:unbounded_var}
\end{figure}

While this works for our instantiations, this is aided by each function being non-negative by construction, and applying our method generally to unbounded data will often require an innate lower or upper bound on the function output.
In contrast, efficient implementations of the inverse sensitivity mechanism require both upper and lower bounds on the function output, and often further require bounded data for reasonable accuracy such as in our instantiations.
More specifically, the inverse sensitivity mechanism uses the upper and lower output bounds from Definition~\ref{def:exact_sensitivity_bounds} to construct intervals and draws an interval from the exponential mechanism and uniformly selects a point from the chosen interval.
But this requires setting a data independent upper and lower bound from the function for efficient implementation.
This efficient approach was first seen in an instantiation of a close variant of the inverse sensitivity mechanism for privately computing quantiles \cite{smith2011privacy}.

\subsection{Robust and efficient potential output selection}\label{subsec:beta}


In order to handle both fully unbounded and partially unbounded functions, we provide an efficient and robust potential output selection method that also borrows from the private quantile literature \cite{durfee2023unbounded}, which instantiates a close variant of our asymmetric sensitivity mechanism.
In Algorithm~\ref{algo:var_asymmetric}, we provided the explicit approach for selecting potential outputs when the outputs are non-negative.
This same approach can be shifted to handle any other lower bounded functions and symmetrically applied to upper bounded functions.
The exponential nature of the potential outputs implies that they will become incredibly large or small within a reasonable number of queries which limits the running time.
Specifically, if we set reasonably assume $\beta \geq 1.001$, then $\beta^{50000} > 10^{21}$ and we can simply terminate $\AT$ after at most 50,000 queries for all datasets without affecting the privacy guarantees.

If the function has no innate bounds then we can call sparse vector technique with two iterations, searching through the positive numbers first, and then searching through the negatives if the first iteration immediately terminated. 
If the function has both innate upper and lower bounds then we can uniformly select the potential outputs from these bounds.


\begin{figure}[htbp]
  \centering
  \begin{subfigure}[b]{0.3\textwidth}
    \includegraphics[width=\textwidth]{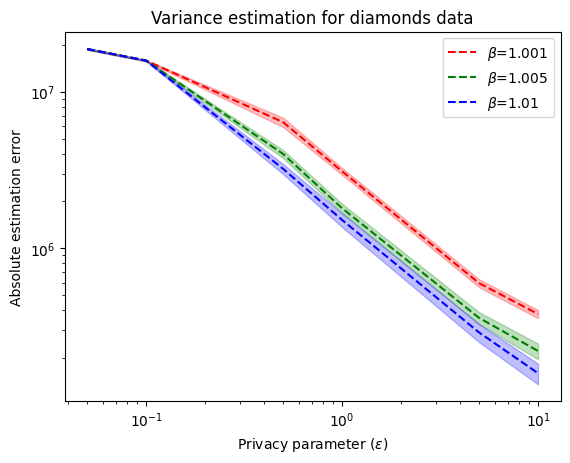}
  \end{subfigure}
  \begin{subfigure}[b]{0.3\textwidth}
    \includegraphics[width=\textwidth]{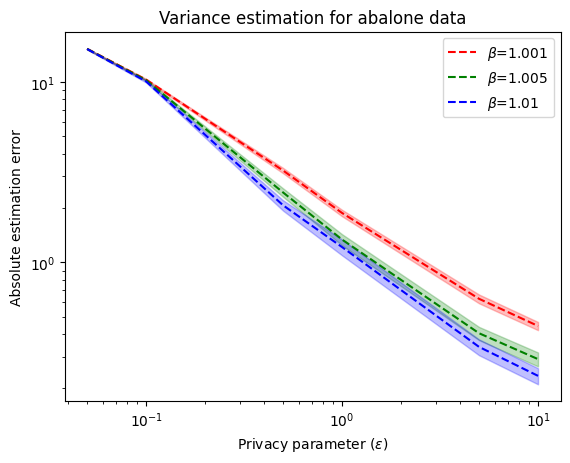}
  \end{subfigure}
  \begin{subfigure}[b]{0.3\textwidth}
    \includegraphics[width=\textwidth]{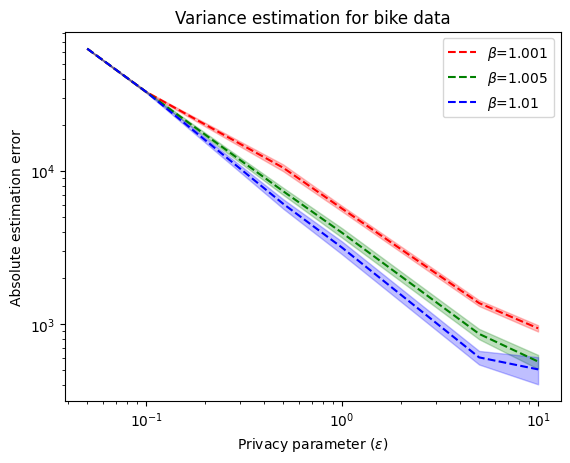}
  \end{subfigure} 
  \\
  \begin{subfigure}[b]{0.3\textwidth}
    \includegraphics[width=\textwidth]{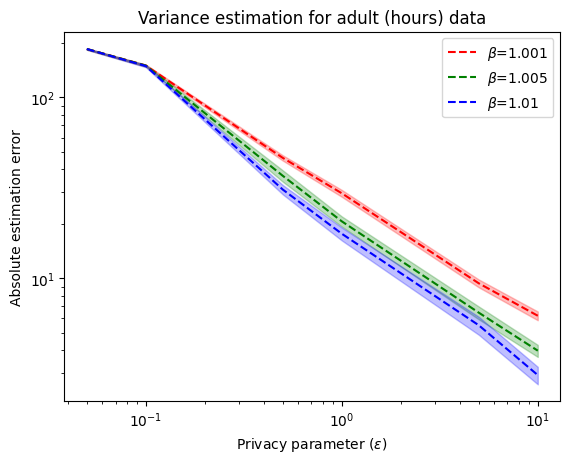}
  \end{subfigure}
  \begin{subfigure}[b]{0.3\textwidth}
    \includegraphics[width=\textwidth]{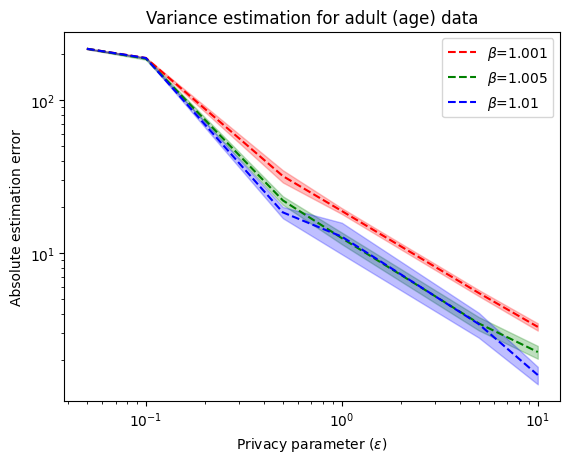}
  \end{subfigure} 
  \caption{Plot of variance estimation using our method for different $\beta$ parameters}
  \label{fig:beta_var}
\end{figure}

While this methodology introduces a new parameter $\beta$ that must be selected independent of the underlying data, we were able to fix $\beta = 1.005$ as a default and still see high performance across different invocations and datasets. 
Furthermore, we show here that we could consider other reasonable settings of $\beta$ and still see robustly accurate performance from our methodology.
In fact, from our experiments we can see that our empirical results could have been further improved by setting $\beta = 1.01$

\subsection{Add-subtract neighboring}\label{subsec:add_subtract}

In Definition~\ref{def:hamming}, we made the notion of neighboring datasets follow the \textit{swap} definition.
Another commonly used notion of neighboring datasets in the differential privacy literature is the \textit{add-subtract} definition where a users data is added or removed between neighbors.
Our asymmetric sensitivity mechanism can naturally extend to this notion of Hamming distance as well.

The primary difference would be that all datasets in the data universe would no longer be at most distance $n$ from one another as the size of the dataset could vary.
As such, the list of upper and lower output bounds could be infinite, but we can circumvent this with minimal practical impact.
Potential outputs far from the underlying data are already incredibly unlikely to be selected so relaxing the bounds will barely affect the output distribution.
As such, we can set the approximate upper and lower bounds to be positive and negative infinity, respectively, which is essentially identical to what was done in Lemma~\ref{lem:var_lower_bounds_runtime}.
To account for this changed definition we would also need to update the upper and lower output bounds for our instantiations.

\section{Additional Analysis of Asymmetric Sensitivity Mechanism}\label{sec:asym_analysis}

In this section we provide the missing proofs from Section~\ref{sec:inverse}.

\subsection{Analysis for exact asymmetric sensitivity mechanism}\label{subsec:appendix_exact}

We first prove a helper lemma regarding the closeness of the reflective inverse sensitivities for neighboring datasets.

\begin{lemma}\label{lem:exact_asym_inv_closeness}
    Given sample-monotone $f: \cX^n \to \R$, we must have $\im(f)$ is a convex set, and for any neighboring datasets $\xx, \xx' \in \calX^n$ and $\yt \in \im(f)$ we have $|\asym{f}(\xx;\yt) - \asym{f}(\xx';\yt)| \leq 1$
\end{lemma}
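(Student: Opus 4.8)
The statement has two parts: first, that $\im(f)$ is convex for sample-monotone $f$; second, the closeness bound $|\asym{f}(\xx;\yt) - \asym{f}(\xx';\yt)| \leq 1$ for neighboring $\xx,\xx'$ and $\yt \in \im(f)$. For the convexity claim, I would argue that if $s,t \in \im(f)$ with $s < t$, then for any $u \in [s,t]$ we have $u \in \im(f)$: pick $\xx$ with $f(\xx) = s$; since $t \in \im(f)$, we have $\len{f}(\xx;t) < \infty$, and by sample-monotonicity (Definition~\ref{def:sample_monotone}) applied to $f(\xx) = s \leq u \leq t$, we get $\len{f}(\xx;u) \leq \len{f}(\xx;t) < \infty$, so $u \in \im(f)$. (The case $f(\xx)$ lying between or outside $[s,t]$ is handled symmetrically, or one just picks the endpoint-attaining dataset.) Hence $\im(f)$ is an interval, i.e.\ convex.

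For the main closeness bound, I would split on the sign of $\yt - f(\xx)$ and $\yt - f(\xx')$, using Corollary~\ref{cor:inverse_sensitivity_bound} ($|\len{f}(\xx;\yt) - \len{f}(\xx';\yt)| \leq 1$) as the key input together with the fact that $|f(\xx) - f(\xx')| \leq \Delta$ (here the relevant structure is that $\len{f}(\xx;f(\xx')) \leq 1$ since $\xx,\xx'$ are neighbors, so $f(\xx')$ is ``one step'' from $f(\xx)$). Write $\ell = \len{f}(\xx;\yt)$, $\ell' = \len{f}(\xx';\yt)$, $\sigma = \sgn{\yt - f(\xx)}$, $\sigma' = \sgn{\yt - f(\xx')}$, so that $\asym{f}(\xx;\yt) = \sigma(\ell - 1/2)$ and $\asym{f}(\xx';\yt) = \sigma'(\ell' - 1/2)$. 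If $\sigma = \sigma'$, then $|\asym{f}(\xx;\yt) - \asym{f}(\xx';\yt)| = |\ell - \ell'| \leq 1$ directly. The interesting case is $\sigma \neq \sigma'$, say $\sigma = +1$ (so $\yt > f(\xx)$) and $\sigma' = -1$ (so $\yt < f(\xx')$), which forces $f(\xx) < \yt < f(\xx')$; then I need $|(\ell - 1/2) + (\ell' - 1/2)| = |\ell + \ell' - 1| \leq 1$, i.e.\ $\ell + \ell' \leq 2$. Since $\yt$ lies strictly between $f(\xx)$ and $f(\xx')$ and $\xx'$ is a neighbor of $\xx$, one can change a single coordinate of $\xx$ to reach $\xx'$ and then, because $f$ is continuous-in-the-relevant-sense along that path (or more directly: $\len{f}(\xx';\yt) = 1$ would follow if moving to $\xx'$ already brackets $\yt$), argue that each of $\ell, \ell'$ is at most $1$, giving $\ell + \ell' \leq 2$. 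Concretely: $f(\xx) < \yt < f(\xx')$ means $\len{f}(\xx;\yt) \leq \len{f}(\xx;f(\xx'))$ by sample-monotonicity (since $\yt$ is between $f(\xx)$ and $f(\xx')$), and $\len{f}(\xx;f(\xx')) \leq \dham(\xx,\xx') \leq 1$, so $\ell \leq 1$; symmetrically $\ell' \leq \len{f}(\xx';f(\xx)) \leq 1$. Also $\sigma = \sigma' = 0$ is trivial and the mixed zero cases ($\yt = f(\xx)$ or $\yt = f(\xx')$) reduce to the $|\ell - \ell'| \leq 1$ computation since then one of $\ell,\ell'$ is $0$.

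The step I expect to be the main obstacle is the case $\sigma \neq \sigma'$: one must correctly exploit sample-monotonicity to conclude $\len{f}(\xx;\yt) \leq 1$ when $\yt$ sits strictly between $f(\xx)$ and $f(\xx')$ for neighboring $\xx,\xx'$, and handle the boundary subtleties (whether $\len{f}(\xx;f(\xx'))$ equals $0$ or $1$, and whether $\yt$ can equal an endpoint). Everything else is bookkeeping on $|\ell - \ell'| \leq 1$ from Corollary~\ref{cor:inverse_sensitivity_bound}. I would also double-check the edge case where $\len{f}(\xx;\yt) = 0$, meaning $\yt = f(\xx)$, so $\sgn{\yt - f(\xx)} = 0$ and $\asym{f}(\xx;\yt) = 0$; then I need $|\asym{f}(\xx';\yt)| \leq 1$, which holds because $\len{f}(\xx';\yt) \leq \len{f}(\xx';f(\xx)) + $ (nothing)$\, = \len{f}(\xx';f(\xx)) \leq 1$ so $|\sigma'(\len{f}(\xx';\yt) - 1/2)| \leq 1/2 \leq 1$. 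This confirms the bound is in fact not tight in several regimes, and the worst case is exactly $\sigma \neq \sigma'$ with both lengths equal to $1$.
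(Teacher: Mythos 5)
Your proof is correct and follows essentially the same route as the paper: convexity of $\im(f)$ by the same "pick a preimage and apply sample-monotonicity to intermediate points" argument, and the closeness bound by splitting on whether $\yt$ lies outside $[\min(f(\xx),f(\xx')), \max(f(\xx),f(\xx'))]$ (where signs agree and Corollary~\ref{cor:inverse_sensitivity_bound} bounds $|\ell - \ell'|$) versus between them (where $\len{f}(\xx;\yt) \le 1$ and $\len{f}(\xx';\yt) \le 1$ both follow from the neighboring relation plus sample-monotonicity). The only cosmetic difference is that you break out the boundary sign-$0$ subcases explicitly, whereas the paper absorbs them into the single "between, inclusive" case with the same $\len \le 1$ argument.
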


\begin{proof}
    We first show the image is convex. For any $a,b \in \im(f)$, there must be datasets $\xx_a, \xx_b \in \cX^n$ such that $f(\xx_a) = a$ and $f(\xx_b) = b$. Furthermore, by Definition~\ref{def:hamming} we know $\dham(\xx_a,\xx_b) \leq n$, so $\len{f}(\xx_a;b) \leq n$. Without loss of generality assume $a \leq b$, then by Definition~\ref{def:sample_monotone} for any $c \in [a,b]$ we must have $\len{f}(\xx_a;c) \leq n$ which implies $c \in \im(f)$.
    
    Next we show the closeness between neighboring datasets. First consider the case when $\yt > \max \{f(\xx), f(\xx') \}$ or $\yt < \min\{f(\xx), f(\xx')\}$. This implies $\mathrm{sgn}(\yt - f(\xx)) = \mathrm{sgn}(\yt - f(\xx'))$ and the bound follows from Corollary~\ref{cor:inverse_sensitivity_bound}.
    Without loss of generality, assume $f(\xx) \geq f(\xx')$ and consider the other case when $f(\xx) \geq \yt \geq f(\xx')$. Due to the fact that they're neighboring $\len{f}(\xx;f(\xx')) \leq 1$ and $\len{f}(\xx';f(\xx)) \leq 1$. Applying Definition~\ref{def:sample_monotone}, $\len{f}(\xx;\yt) \leq 1$ and $\len{f}(\xx';\yt) \leq 1$ which implies $|\asym{f}(\xx;\yt)| \leq 1/2$ and $|\asym{f}(\xx';\yt)| \leq 1/2$ giving the desired bound.
\end{proof}

With this lemma we can prove Theorem~\ref{thm:exact_asym_sens}.

\begin{proof}[Proof of Theorem~\ref{thm:exact_asym_sens}]
    For $\yt_i \in \im(f)$ we know the sensitivity is at most 1 from Lemma~\ref{lem:exact_asym_inv_closeness}. Further if $\yt_i \notin \im(f)$ then by the convexity of $\im(f)$ from Lemma~\ref{lem:exact_asym_inv_closeness} we know that either $\yt_i < f(\xx)$ and so $\asym{f}(\xx,\yt) = - \infty$ for all $\xx \in \cX^n$ or $\yt_i > f(\xx)$ and so $\asym{f}(\xx,\yt) = \infty$ for all $\xx \in \cX^n$. The privacy guarantees then follow from Proposition~\ref{prop:svt}.
\end{proof}

We also provide a proof of Lemma~\ref{lem:bounds_for_sample_monotone} connecting the inverse sensitivities to the upper and lower output bounds for sample-monotone functions.

\begin{proof}[Proof of Lemma~\ref{lem:bounds_for_sample_monotone}]
    First consider the case when $\yt < L_f^n(\xx)$ or $\yt > U_f^n(\xx)$, which implies  $\inf \{ \ell :   L_f^{\ell}(\xx) \leq \yt \leq U_f^{\ell}(\xx) \} = \infty$ because the infimum of the empty set is infinity. This also implies that $t \notin \im(f)$ so $\len{f}(\xx;\yt) = \infty$ for the same reason. 
    Next, consider the other case when $L_f^{n}(\xx) \leq \yt \leq U_f^{n}(\xx)$ and let $k = \inf \{ \ell :   L_f^{\ell}(\xx) \leq \yt \leq U_f^{\ell}(\xx) \}$. If there exists $\xx'$ such that $f(\xx') = \yt$ and $\dham(\xx,\xx') = k' < k$, then this would imply $L_f^{k'}(\xx) \leq \yt \leq U_f^{k'}(\xx)$, so $\inf \{ \ell :   L_f^{\ell}(\xx) \leq \yt \leq U_f^{\ell}(\xx) \} < k$ giving a contradiction.
    Thus we must have $\len{f}(\xx;\yt) \geq k$. Furthermore, the sample-monotone definition implies that $\len{f}(\xx;\yt) \leq k$ because $L_f^{k}(\xx) \leq \yt \leq U_f^{k}(\xx)$.
    Therefore, we also have $\len{f}(\xx;\yt) = k$.

\end{proof}

\subsection{Analysis for approximate asymmetric sensitivity mechanism}\label{subsec:appendix_approx}

We again prove a helper lemma regarding the closeness of the approximate inverse sensitivities for neighboring datasets.

\begin{lemma}\label{lem:approx_inv_sens_closeness}
    Given $f: \cX^n \to \R$, for any neighboring datasets $\xx,\xx' \in \cX^n$ and $\inf_{\xx} \{f(\xx)\} \leq \yt \leq \sup_{\xx} \{f(\xx)\}$ we have $\abs{\widebar{\len{f}}(\xx;\yt) - \widebar{\len{f}}(\xx';\yt)} \leq 1$
\end{lemma}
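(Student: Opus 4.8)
The plan is to mimic the structure of the proof of Lemma~\ref{lem:exact_asym_inv_closeness}, but now working with the approximate bounds $\widebar{U}_f^{\ell}$ and $\widebar{L}_f^{\ell}$ in place of the exact ones, and with $\widebar{\len{f}}$ from Definition~\ref{def:approx_inv_sens_length} in place of $\len{f}$. The two defining properties of the approximate bounds (Definition~\ref{def:approx_sensitivity_bounds}) — namely $\widebar{U}_f^{\ell}(\xx) \leq \widebar{U}_f^{\ell+1}(\xx')$ and $\widebar{L}_f^{\ell}(\xx) \geq \widebar{L}_f^{\ell+1}(\xx')$ for neighboring $\xx,\xx'$ — are precisely the monotone-shift conditions that should let an index-shift argument go through. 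I would first record the elementary monotonicity facts I need: for fixed $\xx$, the interval $[\widebar{L}_f^{\ell}(\xx), \widebar{U}_f^{\ell}(\xx)]$ is nested increasingly in $\ell$ (this follows by taking $\xx' = \xx$, which is a legitimate "neighbor" under $\dham(\xx,\xx)=0\le 1$), so $\widebar{\len{f}}(\xx;\yt)$ is well-defined as an infimum and the set of valid $\ell$ is an up-set.

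The core step is the one-sided bound $\widebar{\len{f}}(\xx';\yt) \leq \widebar{\len{f}}(\xx;\yt) + 1$, from which the symmetric statement follows by swapping $\xx$ and $\xx'$. Let $k = \widebar{\len{f}}(\xx;\yt)$. If $k = \infty$ there is nothing to prove, so assume $k < \infty$, meaning $\widebar{L}_f^{k}(\xx) \leq \yt \leq \widebar{U}_f^{k}(\xx)$. Applying the two inequalities from Definition~\ref{def:approx_sensitivity_bounds} with the neighboring pair $(\xx,\xx')$ and level $k$ gives $\widebar{U}_f^{k}(\xx) \leq \widebar{U}_f^{k+1}(\xx')$ and $\widebar{L}_f^{k}(\xx) \geq \widebar{L}_f^{k+1}(\xx')$, hence $\widebar{L}_f^{k+1}(\xx') \leq \widebar{L}_f^{k}(\xx) \leq \yt \leq \widebar{U}_f^{k}(\xx) \leq \widebar{U}_f^{k+1}(\xx')$, so $k+1$ is a valid level for $\xx'$ and $\widebar{\len{f}}(\xx';\yt) \leq k+1$. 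That is the entire argument; swapping roles gives $\widebar{\len{f}}(\xx;\yt)\le \widebar{\len{f}}(\xx';\yt)+1$, and together these yield $|\widebar{\len{f}}(\xx;\yt)-\widebar{\len{f}}(\xx';\yt)|\le 1$.

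The one subtlety — and the place I'd be most careful — is the role of the hypothesis $\inf_{\xx}\{f(\xx)\} \leq \yt \leq \sup_{\xx}\{f(\xx)\}$ (I am reading $\inf_\xx, \sup_\xx$ as ranging over all $\xx \in \cX^n$, i.e. these are the global extreme values of $f$). The shifting argument above shows only that if $k := \widebar{\len{f}}(\xx;\yt)$ is \emph{finite} then $\widebar{\len{f}}(\xx';\yt) \le k+1$; but if $k = \infty$ while $\widebar{\len{f}}(\xx';\yt)$ is finite, symmetry breaks and the claim fails. So I need to show that the range hypothesis on $\yt$ forces $\widebar{\len{f}}(\xx;\yt) \le n$ for every $\xx$, ruling out the infinite case entirely. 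For this I'd argue: pick $\xx^-$ with $f(\xx^-)$ arbitrarily close to $\inf_\xx f(\xx)$ and $\xx^+$ with $f(\xx^+)$ close to $\sup_\xx f(\xx)$; since $\dham(\xx,\xx^\pm)\le n$, the definition of the true bounds gives $L_f^n(\xx) \le f(\xx^-)$ and $U_f^n(\xx)\ge f(\xx^+)$, and $\widebar{L}_f^n(\xx)\le L_f^n(\xx)$, $\widebar{U}_f^n(\xx)\ge U_f^n(\xx)$; a small limiting argument then gives $\widebar{L}_f^n(\xx)\le \yt \le \widebar{U}_f^n(\xx)$, so $\widebar{\len{f}}(\xx;\yt)\le n < \infty$. (If $f$ attains its extrema this is immediate; otherwise one takes $\yt$ strictly inside the open range or handles endpoints via the $\le$ in the hypothesis with a short continuity/limit remark — this is the only genuinely fiddly point.) Once both $\widebar{\len{f}}(\xx;\yt)$ and $\widebar{\len{f}}(\xx';\yt)$ are known to be finite, the index-shift bound applies in both directions and the lemma follows.
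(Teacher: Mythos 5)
Your proof is correct and takes essentially the same route as the paper's: establish finiteness of both $\widebar{\len{f}}$ values via the level $n$ (the paper observes directly that $L_f^n(\xx)=\inf_{\xx'}\{f(\xx')\}$ and $U_f^n(\xx)=\sup_{\xx'}\{f(\xx')\}$ since every dataset is within Hamming distance $n$, so with $\widebar{L}_f^n(\xx)\le L_f^n(\xx)\le \yt \le U_f^n(\xx)\le\widebar{U}_f^n(\xx)$ the hypothesis forces $\widebar{\len{f}}(\xx;\yt)\le n$), then shift the index by one using the two inequalities in Definition~\ref{def:approx_sensitivity_bounds}. Your worry about the extrema not being attained is unnecessary — the $\inf$ in Definition~\ref{def:exact_sensitivity_bounds} already gives $L_f^n(\xx)=\inf_{\xx'}f(\xx')$ without any limiting argument — but that caution does not affect correctness.
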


\begin{proof}
    By construction, $L_f^n(\xx) = \inf_{\xx} \{f(\xx)\}$ and $U_f^n(\xx) = \sup_{\xx} \{f(\xx)\}$ because the Hamming distance between any datasets is always at most $n$. Thus we must have $\widebar{\len{f}}(\xx;\yt) \leq n$ and $\widebar{\len{f}}(\xx';\yt) \leq n$.
    Without loss of generality, assume $\widebar{\len{f}}(\xx;\yt) \leq \widebar{\len{f}}(\xx';\yt)$ and $\widebar{\len{f}}(\xx;\yt) = \ell$. By Definition~\ref{def:approx_sensitivity_bounds} we then have  $\widebar{L}_f^{\ell+1}(\xx') \leq \yt \leq \widebar{U}_f^{\ell+1}(\xx')$ so $\widebar{\len{f}}(\xx';\yt) \leq \ell + 1$, which implies our desired inequality.
    
\end{proof}

We then extend this closeness to the approximate reflective inverse sensitivities for neighboring datasets.

\begin{lemma}\label{lem:approx_asym_inv_sens_closeness}
    Given $f: \cX^n \to \R$, for any neighboring datasets $\xx,\xx' \in \cX^n$ and $\inf_{\xx} \{f(\xx)\} \leq \yt \leq \sup_{\xx} \{f(\xx)\}$ we have $|\widebar{\asym{f}}(\xx;\yt) - \widebar{\asym{f}}(\xx';\yt)| \leq 1$
\end{lemma}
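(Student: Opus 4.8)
The statement to prove is Lemma~\ref{lem:approx_asym_inv_sens_closeness}: for general $f$, neighboring $\xx,\xx'$, and $\yt$ in the global range of $f$, the approximate reflective inverse sensitivities differ by at most $1$. This is the approximate analogue of Lemma~\ref{lem:exact_asym_inv_closeness}, and I expect the proof to mirror that one closely, with Lemma~\ref{lem:approx_inv_sens_closeness} playing the role that Corollary~\ref{cor:inverse_sensitivity_bound} played in the exact case. The plan is a case split on where $\yt$ sits relative to $f(\xx)$ and $f(\xx')$.

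First I would handle the \emph{outer case}: $\yt > \max\{f(\xx),f(\xx')\}$ or $\yt < \min\{f(\xx),f(\xx')\}$. Here $\mathrm{sgn}(\yt - f(\xx)) = \mathrm{sgn}(\yt - f(\xx'))$, so $\widebar{\asym{f}}(\xx;\yt) - \widebar{\asym{f}}(\xx';\yt) = \pm(\widebar{\len{f}}(\xx;\yt) - \widebar{\len{f}}(\xx';\yt))$, and the bound follows directly from Lemma~\ref{lem:approx_inv_sens_closeness} (noting $\yt$ lies in the global range by hypothesis, so that lemma applies). Second, the \emph{inner case}: without loss of generality $f(\xx) \geq \yt \geq f(\xx')$. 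I want to argue that both $\widebar{\len{f}}(\xx;\yt)$ and $\widebar{\len{f}}(\xx';\yt)$ are at most $1$, so that each reflective value has absolute value at most $1/2$ and hence they differ by at most $1$. To see $\widebar{\len{f}}(\xx;\yt) \leq 1$: since $\xx,\xx'$ are neighboring, $\widebar{L}_f^{1}(\xx) \leq L_f^{1}(\xx) \leq f(\xx')$ (as $\xx'$ is within Hamming distance $1$ of $\xx$) and similarly $\widebar{U}_f^{1}(\xx) \geq U_f^1(\xx) \geq f(\xx) \geq \yt$; combined with $f(\xx') \leq \yt$ this gives $\widebar{L}_f^1(\xx) \leq \yt \leq \widebar{U}_f^1(\xx)$, hence $\widebar{\len{f}}(\xx;\yt) \leq 1$. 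The symmetric argument with the roles of $\xx$ and $\xx'$ swapped (using $f(\xx') \leq \yt \leq f(\xx) \leq U_f^1(\xx')$ and $L_f^1(\xx') \leq f(\xx) $... wait, I need $\widebar{L}_f^1(\xx') \leq \yt$, which follows from $\widebar{L}_f^1(\xx') \leq L_f^1(\xx') \leq f(\xx)$ — but I need it $\leq \yt$, and $\yt \leq f(\xx)$, so I instead use $\widebar{L}_f^1(\xx') \leq L_f^1(\xx') \leq f(\xx)$ is the wrong direction; rather use that $\xx$ is a neighbor of $\xx'$ so $L_f^1(\xx') \leq f(\xx)$ and separately $\widebar{L}_f^1(\xx')$ could still exceed $\yt$). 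Let me reconsider: actually the clean way is $\widebar{L}_f^1(\xx') \leq L_f^1(\xx') = \inf\{f(\xx'') : \dham(\xx',\xx'')\leq 1\} \leq f(\xx') \leq \yt$, using that $\xx'$ itself realizes distance $0 \leq 1$. So $\widebar{L}_f^1(\xx') \leq \yt$, and $\widebar{U}_f^1(\xx') \geq U_f^1(\xx') \geq f(\xx') $... I need $\geq \yt$; but $U_f^1(\xx') \geq f(\xx)$ since $\xx$ is within distance $1$ of $\xx'$, and $f(\xx) \geq \yt$. Good, so $\widebar{\len{f}}(\xx';\yt) \leq 1$ as well.

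\textbf{Main obstacle.} The only subtlety is making sure the approximate bounds $\widebar{U},\widebar{L}$ genuinely sandwich $\yt$ at distance $1$ — this uses both the defining inequalities of Definition~\ref{def:approx_sensitivity_bounds} ($\widebar{L}_f^\ell \leq L_f^\ell$, $\widebar{U}_f^\ell \geq U_f^\ell$) and the elementary facts $L_f^1(\xx) \leq f(\xx') \leq U_f^1(\xx)$ for neighbors, which hold because $\xx'$ is a feasible point in the definitions of $L_f^1(\xx)$ and $U_f^1(\xx)$. I should double-check the edge case where one of $\widebar{\len{f}}(\xx;\yt), \widebar{\len{f}}(\xx';\yt)$ is $0$ (i.e. $\yt = f(\xx)$ or $\yt = f(\xx')$ and the trivial bounds already contain $\yt$), but then the reflective value is $\mathrm{sgn}(0)\cdot(-1/2) = 0$ or handled by the convention $\mathrm{sgn}(0) = 0$, and the difference bound still holds since the other term is in $[-1/2,1/2]$. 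Modulo pinning down these sign/edge conventions, the proof is a routine two-case argument and I do not anticipate a genuine difficulty.
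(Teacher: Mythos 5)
Your proof is correct and follows the same two-case decomposition as the paper's: the outer case via Lemma~\ref{lem:approx_inv_sens_closeness}, and the inner case by showing both approximate inverse sensitivities are at most $1$ so both reflective values lie in $[-1/2,1/2]$. You actually spell out the sandwich inequalities ($\widebar{L}_f^1 \leq L_f^1 \leq f(\cdot) \leq \cdot \leq U_f^1 \leq \widebar{U}_f^1$, using that a neighbor is within Hamming distance $1$) more explicitly than the paper, which condenses this to a one-line appeal to the definitions.
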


\begin{proof}
    First consider the case when $\yt > \max \{f(\xx), f(\xx') \}$ or $\yt < \min\{f(\xx), f(\xx')\}$. This implies $\mathrm{sgn}(\yt - f(\xx)) = \mathrm{sgn}(\yt - f(\xx'))$ and the bound follows from Lemma~\ref{lem:approx_inv_sens_closeness}.
    Without loss of generality, assume $f(\xx) \geq f(\xx')$ and consider the other case when $f(\xx) \geq \yt \geq f(\xx')$. Due to the fact that they're neighboring and Definition~\ref{def:approx_inv_sens_length} we have $\widebar{\len{f}}(\xx;\yt) \leq 1$ and $\len{f}(\xx';\yt \leq 1$. This implies $|\widebar{\asym{f}}(\xx;\yt)| \leq 1/2$ and $|\widebar{\asym{f}}(\xx';\yt)| \leq 1/2$ giving the desired bound.
\end{proof}

With these lemmas we can now prove our Theorem~\ref{thm:approx_asym_sens}.

\begin{proof}[Proof of Theorem~\ref{thm:approx_asym_sens}]
    For $\yt$ such that $\inf_{\xx} \{f(\xx)\} \leq \yt \leq \sup_{\xx} \{f(\xx)\}$ we know the sensitivity is at most 1 from Lemma~\ref{lem:approx_asym_inv_sens_closeness}. Otherwise either $\yt_i < f(\xx)$ and so $\widebar{\asym{f}}(\xx,\yt) = - \infty$ for all $\xx \in \cX^n$ or $\yt_i > f(\xx)$ and so $\widebar{\asym{f}}(\xx,\yt) = \infty$ for all $\xx \in \cX^n$. The privacy guarantees then follow from Proposition~\ref{prop:svt}.
\end{proof}

\section{Additional Analysis of Variance Invocation}\label{sec:var_analysis}

In this section we provide the missing proofs for applying our method to private variance estimation.
Most of the analysis could be considered folklore properties of variance as it is such a well-studied statistical property, but we duplicate some of these properties for completeness.
We first provide a known alternative definition of variance.

\begin{definition}\label{def:var_alternative}
    Let $\calX = \R$ and for $\xx \in \calX^n$, we can equivalently define variance as 
\[
\var{}{\xx} \defeq \frac{1}{n^2} \sum_{i=1}^n \sum_{j=1}^n \frac{1}{2} (x_i - x_j)^2
\]
    
\end{definition}

We will also add some helpful notation where for any $S \subseteq [n]$ we let $\xx_{S} = \{\xx_i : i \in S\}$ and similarly for any $1 \leq a \leq b \leq n$ we let $\xx_{[a:b]} = (\xx_a,...,\xx_b)$.

\subsection{Analysis for Lemma~\ref{lem:var_lower_bounds}}

We first prove a helper lemma that if $\ell$ individuals data is changed in order to minimize variance, then those values should be set to be the mean of the remaining values.

\begin{lemma}\label{lem:var_lower_bounds_helper}

For any $\xx \in \cX^n$ and any $S \subset [n]$, we have that 

\[
\inf_{\xx_S} \var{}{\xx} = \frac{n-|S|}{n}  \var{}{\xx_{[n]\setminus S}}
\]
    
\end{lemma}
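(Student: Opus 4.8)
The plan is to fix $\xx \in \cX^n$ and $S \subset [n]$, write $m = |S|$ and $T = [n] \setminus S$, and directly minimize $\var{}{\xx}$ over the free coordinates $\xx_S$ while the coordinates $\xx_T$ stay fixed. Using Definition~\ref{def:var_orig}, $n \cdot \var{}{\xx} = \sum_{i=1}^n (\xx_i - \bar{\xx})^2$. The cleanest route is to recall the standard fact that for any real vector $\zz$ and any constant $c$, $\sum_i (\zz_i - c)^2 = \sum_i (\zz_i - \bar{\zz})^2 + n(\bar{\zz} - c)^2 \geq \sum_i (\zz_i - \bar{\zz})^2$, i.e. the mean minimizes sum of squared deviations. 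First I would argue that at any minimizer all the free coordinates can be taken equal to a single common value $t$: holding the sum $\sum_{i \in S} \xx_i$ fixed, the quantity $\sum_{i \in S}(\xx_i - \bar{\xx})^2$ (with $\bar{\xx}$ then also fixed) is minimized by making the $\xx_i$, $i \in S$, all equal, by convexity of $z \mapsto (z - \bar\xx)^2$ (Jensen / the same sum-of-squares identity applied to the subvector). So it suffices to minimize over a single scalar $t$ with $\xx_i = t$ for all $i \in S$.

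\textbf{Next steps.} With $\xx_i = t$ for $i \in S$, let $\mu = \bar{\xx}_T = \frac{1}{n-m}\sum_{i \in T} \xx_i$ be the mean of the fixed part. Then $\bar\xx = \frac{1}{n}\big(mt + (n-m)\mu\big)$, and I would expand
\[
n \cdot \var{}{\xx} = \sum_{i \in T} (\xx_i - \bar\xx)^2 + m (t - \bar\xx)^2 .
\]
Applying the sum-of-squares identity to the $T$-block around its own mean $\mu$, $\sum_{i \in T}(\xx_i - \bar\xx)^2 = \sum_{i \in T}(\xx_i - \mu)^2 + (n-m)(\mu - \bar\xx)^2 = (n-m)\var{}{\xx_T} + (n-m)(\mu - \bar\xx)^2$. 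The remaining $t$-dependence sits entirely in the two terms $(n-m)(\mu - \bar\xx)^2 + m(t - \bar\xx)^2$; since $\bar\xx$ is an affine function of $t$ interpolating between $\mu$ (at $t = \mu$) and... more precisely, a short computation shows $\mu - \bar\xx = \frac{m}{n}(\mu - t)$ and $t - \bar\xx = \frac{n-m}{n}(t - \mu)$, so this combination equals $\big(\frac{m(n-m)^2}{n^2} + \frac{m^2(n-m)}{n^2}\big)(t-\mu)^2 = \frac{m(n-m)}{n}(t-\mu)^2 \geq 0$, vanishing exactly at $t = \mu$. Hence the infimum (in fact minimum) is attained at $t = \mu$, giving $n \cdot \inf_{\xx_S}\var{}{\xx} = (n-m)\var{}{\xx_T}$, i.e. $\inf_{\xx_S}\var{}{\xx} = \frac{n-m}{n}\var{}{\xx_{[n]\setminus S}}$, as claimed.

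\textbf{Main obstacle.} The statement is essentially folklore and the computation is routine, so there is no deep difficulty; the one place to be careful is the reduction to a single scalar $t$ — one must justify that making the free coordinates equal does not increase the variance, which requires noting that changing the $\xx_i$ ($i\in S$) while fixing their sum leaves $\bar\xx$ unchanged, so that $\sum_{i\in S}(\xx_i - \bar\xx)^2$ is genuinely a convex function minimized at the common value. After that, it is a one-variable quadratic minimization, and I would just track the algebra of $\mu - \bar\xx$ and $t - \bar\xx$ carefully to land on the clean coefficient $\frac{m(n-m)}{n}$. Alternatively, one could run the whole argument through the pairwise formula in Definition~\ref{def:var_alternative}, which makes the convexity in $\xx_S$ even more transparent, but the mean-centered computation above is the shortest.
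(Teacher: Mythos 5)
Your proposal is correct and rests on the same core fact as the paper's proof — the mean minimizes the sum of squared deviations — and it identifies the same minimizer, namely setting every free coordinate equal to $\bar{\xx}_{[n]\setminus S}$. The only difference is in execution: the paper gives a short two-sided bound (the $[n]\setminus S$ block is lower bounded by recentering at $\bar{\xx}_{[n]\setminus S}$, the $S$ block by $0$, and both bounds are attained simultaneously), whereas you reduce to a common value $t$ and minimize the explicit quadratic $(n-m)\var{}{\xx_{[n]\setminus S}} + \tfrac{m(n-m)}{n}(t-\mu)^2$, which is slightly longer but yields the exact $t$-dependence; both are valid.
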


\begin{proof}
    By least squares minimization we know that $\sum_{i \in [n] \setminus S} (\xx_i - y)^2$ is minimized by setting 
    \[y =  \frac{1}{n-|S|} \sum_{i \in [n] \setminus S} \xx_i \defeq \bar{\xx}_{[n]\setminus S} \]
    
    If we set $\xx_i = \bar{\xx}_{[n]\setminus S}$ for all $i \in S$, then we have $\bar{\xx} = \bar{\xx}_{[n]\setminus S}$ and we have $\sum_{i \in S} (\xx_i - \bar{\xx})^2 = 0$ which must be minimal by the non-negativity of squared error.
    Combining these properties we get

    \[
    \inf_{\xx_S} \var{}{\xx} = \frac{1}{n} \sum_{i \in [n] \setminus S} (\xx_i - \bar{\xx}_{[n]\setminus S})^2 
    \]

    We can then apply our Definition~\ref{def:var_orig} to get our desired result.
    
\end{proof}

With this helper lemma we can then provide the proof of our lower output bounds through a proof by contradiction.

\begin{proof}[Proof of Lemma~\ref{lem:var_lower_bounds}]

By Lemma~\ref{lem:var_lower_bounds_helper} we have that

\[
L^\ell_{\mbox{\bf Var}}(\xx) = \min_{S \subset [n] : |S| = \ell} \frac{n-\ell}{n}  \var{}{\xx_{[n] \setminus S}}
\]

We will then give our desired claim through a proof by contradiction. Suppose this is minimized by $S \subset [n]$ with some $i \in S$ such that there exists $j,k \notin S$ where $\xx_j < \xx_i < \xx_k$. Let $S_j = (S \setminus i) \cup j$ and $S_k = (S \setminus i) \cup k$. To obtain our contradiction it then suffices to show that 

\[
\min\big\{\var{}{\xx_{[n] \setminus S_j}}, \var{}{\xx_{[n] \setminus S_k}}\big\} < \var{}{\xx_{[n] \setminus S}}
\]

Applying our alternative formulation of variance from 
Definition~\ref{def:var_alternative} we have

\[
\var{}{\xx_{[n] \setminus S}} = \frac{1}{n-\ell} \sum_{a \in [n] \setminus S} \sum_{b \in [n] \setminus S} \frac{1}{2} (\xx_a - \xx_b)^2
\]

Through cancellation of like terms we have

\[
\var{}{\xx_{[n] \setminus S_j}} < \var{}{\xx_{[n] \setminus S}} \iff \sum_{a \in [n] \setminus (S_j \cup i)} (\xx_a - \xx_i)^2 < \sum_{a \in [n] \setminus (S \cup j)} (\xx_a - \xx_j)^2
\]

By construction we have $(S_j \cup i) = (S \cup j)$, so by the convexity of least squares minimization and the fact that $\xx_j < \xx_i$, we have that this inequality holds if $\xx_i \leq \bar{\xx}_{[n] \setminus (S \cup j)}$.
Equivalently, we have

\[
\xx_i \geq \bar{\xx}_{[n] \setminus (S \cup k)} \Rightarrow \var{}{\xx_{[n] \setminus S_k}} < \var{}{\xx_{[n] \setminus S}} 
\]

Further, we know that $\bar{\xx}_{[n] \setminus (S \cup j)} > \bar{\xx}_{[n] \setminus (S \cup k)}$ because $\xx_j < \xx_k$.
Therefore we must have either $\xx_i \leq \bar{\xx}_{[n] \setminus (S \cup j)}$ or $\xx_i \geq \bar{\xx}_{[n] \setminus (S \cup k)} $ which implies 

\[
\min\big\{\var{}{\xx_{[n] \setminus S_j}}, \var{}{\xx_{[n] \setminus S_k}}\big\} < \var{}{\xx_{[n] \setminus S}}
\]

and gives our desired contradiction.

\end{proof}


    

\subsection{Analysis for Lemma~\ref{lem:var_lower_bounds_runtime}}

In this section we prove that the approximate lower output bounds fit Definition~\ref{def:approx_sensitivity_bounds} and can be efficiently computed.
We note that these lower output bounds will still maintain high accuracy because the bounds further away from the underlying data have far less effect upon the mechanism and accordingly, loose bounds will have little effect.

\begin{proof}[Proof of Lemma~\ref{lem:var_lower_bounds_runtime}]

Recall that we assumed a fixed constant, $c$ and defined $\widebar{L}^\ell_{\mbox{\bf Var}}(\xx) = L^\ell_{\mbox{\bf Var}}(\xx)$ for $\ell \leq c$ and $\widebar{L}^\ell_{\mbox{\bf Var}}(\xx)=0$ for $\ell > c$.
By construction, we know that variance is non-negative, so we must have $L^\ell_{\mbox{\bf Var}}(\xx) \geq 0$ for all $\ell$.
This then implies $\widebar{L}_f^{\ell}(\xx) \leq {L}_f^{\ell}(\xx)$ for all $\ell$.
Furthermore, if $\ell > c$ then we immediately have $\widebar{L}_f^{\ell}(\xx) \geq \widebar{L}_f^{\ell+1}(\xx')$. 
Otherwise ${L}_f^{\ell}(\xx) = \widebar{L}_f^{\ell}(\xx)$ and we know by definition that ${L}_f^{\ell}(\xx) \geq {L}_f^{\ell+1}(\xx')$ which implies $\widebar{L}_f^{\ell}(\xx) \geq \widebar{L}_f^{\ell+1}(\xx')$

For the runtime, by Lemma~\ref{lem:var_lower_bounds} we see that it suffices to compute $\var{}{\xx_{[\ell+1-i:n-i]}}$ for all $0 \leq i \leq \ell \leq c$, where we assume $\xx_1 \leq ... \leq \xx_n$.
However, this ordering only matters the largest and smallest $c$ values, so we compute these in $O(n + c \log(c))$ time.
Further, we compute $\sum_{i=1}^n \xx_i$ and $\sum_{i=1}^n \xx_i^2$ in $O(n)$ time and will utilize the well-known fact that variance can be computed in $O(1)$ time with these quantities.
We can then just iterate through all $i, \ell$ such that $0 \leq i \leq \ell \leq c$ updating the sum of variables and squares to compute each $\var{}{\xx_{[\ell+1-i:n-i]}}$ in $O(1)$ time taking a total of $O(c^2)$ time.

\end{proof}






    

\subsection{Analysis for Lemma~\ref{lem:var_approx_upper_bounds}}

\begin{proof}[Proof of Lemma~\ref{lem:var_approx_upper_bounds}]
    We first show that for any neighboring datasets $\xx,\xx'$ we have $\widebar{U}_{\mbox{\bf Var}}^{\ell}(\xx) \leq \widebar{U}_{\mbox{\bf Var}}^{\ell+1}(\xx')$. 
    By construction, this reduces to showing that $\var{}{\xx} \leq \var{}{\xx'} + (b-a)^2 / n$.
    Let $i$ be the index such that $\xx_i \neq \xx'_i$.
    Applying the alternative variance formulation in Definition~\ref{def:var_alternative} and cancelling like terms reduces this to showing 

    \[
    \frac{1}{n^2}\sum_{j \neq i} (\xx_i - \xx_j)^2 \leq \frac{1}{n^2}\sum_{j \neq i} (\xx'_i - \xx_j)^2 + \frac{(b-a)^2}{n}
    \]

    We assumed that all datasets were restricted to $[a,b]^n$ so we must then have $\sum_{j \neq i} (\xx_i - \xx_j)^2 \leq n(b-a)^2$, which then implies our desired inequality by the non-negativity of squared error.

    Next we show that ${U}^{\ell}_{\mbox{\bf Var}}(\xx) \leq \widebar{U}^{\ell}_{\mbox{\bf Var}}(\xx)$. 
    It suffices to show that for an arbitrary $\xx'$ such that $\dham(\xx,\xx') = \ell$ we have $\var{}{\xx'} \leq  \var{}{\xx} + \frac{\ell (b-a)^2}{n}$.
    We previously showed that $\var{}{\xx} \geq \var{}{\xx'} + (b-a)^2 / n$ for any $\xx, \xx'$ such that $\dham(\xx,\xx') = 1$, so our claim then follows inductively.

    Therefore, our construction of $\widebar{U}^{\ell}_{\mbox{\bf Var}}$ satisfies Definition~\ref{def:approx_sensitivity_bounds} as desired.
\end{proof}

\section{Additional Analysis of Model Evaluation Invocations}\label{sec:class_analysis}

In this section, we provide the implementation details for applying our method to machine learning model evaluation functions.
We further unify this analysis by considering applying our methodology to linearly separable functions.

\subsection{Application to linearly separable functions}

In this section we consider all \textit{linearly separable functions} $f:\cX^n \to \R$ such that

\[
f(\xx) = \sum_{i=1}^n \calL(\xx_i)
\]

where $\calL: \cX \to \R$.
We could also extend our results here to $\calL$ that are specific to the index, but for simplicity we will restrict our consideration. 
Without loss of generality assume the indices are ordered such that $\calL(\xx_i) \leq \calL(\xx_{i+1})$.
We first provide lower output bounds for these functions.

\begin{lemma}\label{lem:lin_sep_lower}
    Given a linearly separable function $f:\cX \to \R$, then 

    \[
    L_f^{\ell}(\xx) = \sum_{i=1}^{n-\ell} \calL(\xx_i) + \ell \cdot \inf_{\xx_k \in \cX} \{\calL(\xx_k)\}
    \]
\end{lemma}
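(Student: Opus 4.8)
The plan is to directly characterize which $\ell$-modifications of $\xx$ minimize the linearly separable function $f$. Since $f(\xx) = \sum_{i=1}^n \calL(\xx_i)$ decomposes as a sum over coordinates and each coordinate contributes independently, changing coordinate $i$ from $\xx_i$ to some $\xx_i' \in \cX$ changes $f$ by exactly $\calL(\xx_i') - \calL(\xx_i)$. To minimize $f(\xx')$ subject to $\dham(\xx,\xx') \leq \ell$, we should (i) only modify coordinates where the change is beneficial, i.e.\ where $\calL(\xx_i) > \inf_{\xx_k \in \cX}\{\calL(\xx_k)\}$, and (ii) among those, modify the $\ell$ coordinates with the largest current contribution $\calL(\xx_i)$, each being pushed down toward the infimum $\inf_{\xx_k \in \cX}\{\calL(\xx_k)\}$.

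First I would note that by the ordering assumption $\calL(\xx_i) \leq \calL(\xx_{i+1})$, the $\ell$ coordinates with the largest values are precisely $i \in \{n-\ell+1,\dots,n\}$. Modifying exactly these and leaving $\calL(\xx_i)$ for $i \leq n-\ell$ untouched, and driving each modified coordinate's contribution to (arbitrarily close to) $\inf_{\xx_k \in \cX}\{\calL(\xx_k)\}$, gives the value $\sum_{i=1}^{n-\ell} \calL(\xx_i) + \ell \cdot \inf_{\xx_k \in \cX}\{\calL(\xx_k)\}$, which establishes the upper bound $L_f^{\ell}(\xx) \leq \sum_{i=1}^{n-\ell}\calL(\xx_i) + \ell\cdot\inf_{\xx_k}\{\calL(\xx_k)\}$ (taking an infimum over the choice, so this is an inf, consistent with Definition~\ref{def:exact_sensitivity_bounds}). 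For the matching lower bound, I would take any $\xx'$ with $\dham(\xx,\xx') \leq \ell$, say differing on a set $S$ with $|S| \leq \ell$; then
\[
f(\xx') = \sum_{i \notin S} \calL(\xx_i) + \sum_{i \in S} \calL(\xx_i') \geq \sum_{i \notin S} \calL(\xx_i) + |S| \cdot \inf_{\xx_k \in \cX}\{\calL(\xx_k)\}.
\]
It then remains to argue this is at least $\sum_{i=1}^{n-\ell}\calL(\xx_i) + \ell\cdot\inf_{\xx_k}\{\calL(\xx_k)\}$. Writing $m = \inf_{\xx_k}\{\calL(\xx_k)\}$ and using $|S| \leq \ell$, this reduces to $\sum_{i \notin S}\calL(\xx_i) - \sum_{i=1}^{n-\ell}\calL(\xx_i) \geq (\ell - |S|)\cdot m$; since $[n]\setminus S$ has at least $n-\ell$ elements and each element $i \notin S$ has $\calL(\xx_i) \geq m$, removing down to any $n-\ell$ of them costs at least $m$ per removed element, and the minimal sum over $n-\ell$ coordinates is exactly $\sum_{i=1}^{n-\ell}\calL(\xx_i)$ by the ordering — so the inequality holds.

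The main obstacle, and the part that needs the most care, is the bookkeeping when $\inf_{\xx_k \in \cX}\{\calL(\xx_k)\}$ is not attained (an infimum rather than a minimum) and when some of the smallest coordinates $\calL(\xx_i)$ for $i \leq n-\ell$ might already be below values that a modification could reach — but since $\calL(\xx_i) \geq \inf_{\xx_k}\{\calL(\xx_k)\}$ always holds, there is nothing to reconcile there, and the infimum is handled cleanly because $L_f^\ell$ is itself defined as an infimum. I would be careful to state the argument so that the two inequalities (achievability via a specific near-optimal $\xx'$, and optimality via the lower bound on arbitrary $\xx'$) together pin down $L_f^\ell(\xx)$ exactly. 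The ordering assumption $\calL(\xx_i) \leq \calL(\xx_{i+1})$ does the essential work of identifying which $n-\ell$ contributions survive.
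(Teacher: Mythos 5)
Your proof is correct and takes essentially the same approach as the paper: exploit the coordinate-wise decomposition of $f$ to reduce to a greedy choice, modify the $\ell$ largest contributions toward the infimum for achievability, and compare against an arbitrary $\ell$-modification using the ordering for optimality. The paper gives this as a two-sentence sketch; you simply spell out the upper/lower bound directions and the handling of the infimum explicitly, which is the right level of care but not a different idea.
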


\begin{proof}
    Changing any individual's data can only change their contribution to the sum by the linearly separable property.
    Therefore, decreasing the $\ell$ individuals with the highest contribution to the sum must minimize the function for datasets with Hamming distance $\ell$ from the underlying data.
\end{proof}

Similarly, we provide upper output bounds for linearly separable functions.

\begin{lemma}\label{lem:lin_sep_upper}
    Given a linearly separable function $f:\cX \to \R$, then 

    \[
    U_f^{\ell}(\xx) = \sum_{i=\ell}^{n} \calL(\xx_i) + \ell \cdot \sup_{\xx_k \in \cX} \{\calL(\xx_k)\}
    \]
\end{lemma}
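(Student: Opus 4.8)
\textbf{Proof proposal for Lemma~\ref{lem:lin_sep_upper}.} The plan is to mirror the argument for the lower output bounds in Lemma~\ref{lem:lin_sep_lower}, exploiting the fact that for a linearly separable $f(\xx) = \sum_{i=1}^n \calL(\xx_i)$, changing the data of any single individual $i$ only affects the single summand $\calL(\xx_i)$ and leaves the other $n-1$ summands untouched. So if $\xx'$ satisfies $\dham(\xx,\xx') \le \ell$, then $\xx$ and $\xx'$ agree on at least $n-\ell$ coordinates, and hence $f(\xx') = \sum_{i=1}^n \calL(\xx'_i)$ can differ from $f(\xx)$ only through the at most $\ell$ coordinates on which they disagree.

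First I would fix $\xx$ with indices ordered so that $\calL(\xx_1) \le \cdots \le \calL(\xx_n)$, and observe that to maximize $f(\xx')$ over all $\xx'$ with $\dham(\xx,\xx') \le \ell$, it is optimal to select the $\ell$ coordinates with the \emph{smallest} values of $\calL(\xx_i)$ — namely indices $1,\dots,\ell$ — and reset each of them to a value achieving $\sup_{\xx_k \in \cX}\{\calL(\xx_k)\}$. This replaces $\sum_{i=1}^{\ell}\calL(\xx_i)$ by $\ell \cdot \sup_{\xx_k}\{\calL(\xx_k)\}$ while keeping $\sum_{i=\ell+1}^{n}\calL(\xx_i)$, which exactly gives $\sum_{i=\ell}^{n}\calL(\xx_i) + \ell\cdot\sup_{\xx_k}\{\calL(\xx_k)\}$ once one notes the stated formula's summation index; I would double-check the index bookkeeping, since the claimed expression sums from $i=\ell$ rather than $i=\ell+1$ and I want to confirm that is a deliberate (possibly off-by-one-absorbing) choice or flag it. The optimality of this choice is an exchange argument: given any competing $\xx'$ differing from $\xx$ on a set $S$ with $|S| \le \ell$, replacing each changed coordinate by the supremizing value only increases $f$, and then swapping any modified index $j$ with an unmodified index $i$ where $\calL(\xx_i) < \calL(\xx_j)$ (i.e. ``spending'' the modification budget on a smaller original value) also only increases the total, since the gain $\sup_{\xx_k}\{\calL(\xx_k)\} - \calL(\xx_i)$ exceeds $\sup_{\xx_k}\{\calL(\xx_k)\} - \calL(\xx_j)$.

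For the two monotonicity-type conditions needed to qualify as honest output bounds (and matching what Definition~\ref{def:exact_sensitivity_bounds} requires of $U_f^{\ell}$), the ``$\sup$'' definition gives $U_f^{\ell}(\xx) \ge U_f^{\ell'}(\xx)$ for $\ell \ge \ell'$ automatically since the feasible set only grows, so I mainly need the formula to be a genuine supremum, which the exchange argument above establishes. The main obstacle, and the only place requiring care, is the case where $\sup_{\xx_k \in \cX}\{\calL(\xx_k)\} = \infty$ (unbounded data): then $U_f^{\ell}(\xx) = \infty$ for every $\ell \ge 1$, and the formula still holds under the convention $\ell \cdot \infty = \infty$, but I would remark that this is exactly the unbounded-upper-bound regime discussed in Appendix~\ref{subsec:unbounded} that the asymmetric sensitivity mechanism handles gracefully. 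A secondary subtlety is whether the supremum over $\xx_k \in \cX$ is attained; if not, the argument produces a sequence of datasets whose $f$-values approach the claimed bound, which is all that is needed since $U_f^{\ell}$ is itself defined as a supremum.
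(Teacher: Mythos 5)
Your proof takes essentially the same route as the paper, which dispatches the lemma with the one-line remark that it follows equivalently to Lemma~\ref{lem:lin_sep_lower} (i.e., the exchange argument: change the $\ell$ coordinates with the smallest $\calL$-contribution to a supremizing value). You are also right to be suspicious of the index bookkeeping: under the ordering $\calL(\xx_1) \le \cdots \le \calL(\xx_n)$, keeping the $n-\ell$ largest contributions means the sum should run from $i=\ell+1$ to $n$, not $i=\ell$, and the paper's own instantiations for cross-entropy and MSE indeed use $\sum_{i=\ell+1}^n$, confirming the lemma statement contains an off-by-one typo. One minor note: the monotonicity-type conditions you invoke belong to the \emph{approximate} bounds of Definition~\ref{def:approx_sensitivity_bounds}; Definition~\ref{def:exact_sensitivity_bounds} is a bare supremum and imposes no such conditions, so nothing further needs verifying there beyond the exchange argument itself.
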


\begin{proof}
    Follows equivalently to the proof of Lemma~\ref{lem:lin_sep_lower}
\end{proof}

We then provide approximate relaxations of these bounds that will allow for easier application.

\begin{lemma}\label{lem:lin_sep_approx}
    Given a linearly separable function $f:\cX \to \R$, then approximate upper and lower sensitivity bounding functions $\widebar{U}_f^{\ell}: \cX^n \to \R$ and $\widebar{L}_f^{\ell}: \cX^n \to \R$ satisfy Definition~\ref{def:approx_sensitivity_bounds} for 

    \[
    \widebar{L}_f^{\ell}(\xx) = \sum_{i=1}^{n-\ell} \calL(\xx_i) + \ell \cdot a 
    \text{\ \ \ \ \ \ \ \ \ \ \ \ and \ \ \ \ \ \ \ \ \ \ \ \ }
    \widebar{U}_f^{\ell}(\xx) = \sum_{i=\ell}^{n} \calL(\xx_i) + \ell \cdot b 
    \]

    if $a \leq \inf_{\xx_k \in \cX} \{\calL(\xx_k)\}$ and $b \geq \sup_{\xx_k \in \cX} \{\calL(\xx_k)\}$
\end{lemma}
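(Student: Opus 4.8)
The plan is to verify directly the four inequalities demanded by Definition~\ref{def:approx_sensitivity_bounds}, grouping them into the two \emph{sandwiching} conditions ($\widebar{L}_f^{\ell}(\xx)\le L_f^{\ell}(\xx)$ and $\widebar{U}_f^{\ell}(\xx)\ge U_f^{\ell}(\xx)$) and the two \emph{neighboring-monotonicity} conditions ($\widebar{L}_f^{\ell}(\xx)\ge\widebar{L}_f^{\ell+1}(\xx')$ and $\widebar{U}_f^{\ell}(\xx)\le\widebar{U}_f^{\ell+1}(\xx')$). The sandwiching conditions are immediate from the exact formulas already in hand: Lemma~\ref{lem:lin_sep_lower} gives $L_f^{\ell}(\xx)=\sum_{i=1}^{n-\ell}\calL(\xx_i)+\ell\cdot\inf_{\xx_k\in\cX}\calL(\xx_k)$, and since $a\le\inf_{\xx_k\in\cX}\calL(\xx_k)$ we get $\widebar{L}_f^{\ell}(\xx)\le L_f^{\ell}(\xx)$ by comparing the two expressions term-by-term; symmetrically, Lemma~\ref{lem:lin_sep_upper} together with $b\ge\sup_{\xx_k\in\cX}\calL(\xx_k)$ yields $\widebar{U}_f^{\ell}(\xx)\ge U_f^{\ell}(\xx)$.

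The substance is the neighboring-monotonicity condition for $\widebar{L}$. Writing $\sigma_j(\zz)$ for the sum of the $j$ smallest among $\calL(\zz_1),\dots,\calL(\zz_n)$, we have $\widebar{L}_f^{\ell}(\xx)=\sigma_{n-\ell}(\xx)+\ell a$, so the claim $\widebar{L}_f^{\ell+1}(\xx')\le\widebar{L}_f^{\ell}(\xx)$ reduces, on setting $k=n-\ell-1$, to $\sigma_k(\xx')+a\le\sigma_{k+1}(\xx)$. Since $\xx,\xx'$ are neighboring they share a common multiset $T$ of $n-1$ of the $\calL$-values, with $\xx$ contributing one extra value $u$ and $\xx'$ one extra value $u'$, both of which lie in $[\,a,\infty)$ because they are $\calL$-values of points of $\cX$ and $a\le\inf_{\xx_k\in\cX}\calL(\xx_k)$. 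Two observations then close it: (i) $\sigma_k(\xx')=\sigma_k(T\cup\{u'\})\le\sigma_k(T)$, since the $k$ smallest elements of $T$ already form a size-$k$ subset of $T\cup\{u'\}$; and (ii) $\sigma_{k+1}(\xx)=\sigma_{k+1}(T\cup\{u\})\ge\sigma_k(T)+a$, which I would check by casing on whether $u$ is among the $k+1$ smallest of $T\cup\{u\}$ — if so, $\sigma_{k+1}(T\cup\{u\})=\sigma_k(T)+u\ge\sigma_k(T)+a$; if not, $\sigma_{k+1}(T\cup\{u\})=\sigma_{k+1}(T)=\sigma_k(T)+(\text{the }(k{+}1)\text{-th smallest of }T)\ge\sigma_k(T)+a$, using that every element of $T$ is at least $a$. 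Combining (i) and (ii) gives $\sigma_k(\xx')+a\le\sigma_k(T)+a\le\sigma_{k+1}(\xx)$, as needed.

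For the $\widebar{U}$ monotonicity I would avoid repeating the argument and instead apply the $\widebar{L}$ result to the linearly separable function $g(\xx)=\sum_{i=1}^n(-\calL)(\xx_i)=-f(\xx)$: then $U_f^{\ell}(\xx)=-L_g^{\ell}(\xx)$, the prescribed $\widebar{U}_f^{\ell}$ corresponds up to sign to $\widebar{L}_g^{\ell}$ with the constant $-b$ playing the role of $a$, and $-b\le\inf_{\xx_k\in\cX}(-\calL)(\xx_k)$ precisely when $b\ge\sup_{\xx_k\in\cX}\calL(\xx_k)$; so $\widebar{U}_f^{\ell}(\xx)\ge U_f^{\ell}(\xx)$ and $\widebar{U}_f^{\ell}(\xx)\le\widebar{U}_f^{\ell+1}(\xx')$ follow from the two $\widebar{L}$ facts already established for $g$. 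The only real obstacle is the order-statistic bookkeeping in step (ii) — getting the case split and the direction of each inequality right — together with the degenerate ranges of $\ell$ (e.g.\ $\ell=0$ or $\ell$ near $n$), where $|T|$ can be smaller than $k+1$ and only one of the two cases can occur; this needs care but is not deep, and is simply the place where an error would most easily slip in.
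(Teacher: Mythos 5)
Your proof is correct and follows essentially the same route as the paper: both verify the four conditions of Definition~\ref{def:approx_sensitivity_bounds} directly, with the neighboring-monotonicity of $\widebar{L}_f^{\ell}$ reduced to comparing sums of the smallest $\calL$-values and absorbing the changed coordinate via $a \leq \calL(\xx_j)$, and your case split on whether $u$ lies among the $k+1$ smallest of $T\cup\{u\}$ mirrors the paper's split on whether the differing index $j$ belongs to the minimizing subset $S_{\xx}$. The only cosmetic differences are your order-statistic bookkeeping (which, as a bonus, makes explicit the monotonicity in $\ell$ that the paper's first case uses implicitly) and your sign-flip reduction $g=-f$ for the $\widebar{U}_f^{\ell}$ conditions, where the paper simply asserts that the upper bounds follow symmetrically.
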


\begin{proof}
    We show this the approximate lower output bounds and the upper follow equivalently.
    We first have $\widebar{L}_f^{\ell}(\xx) \leq {L}_f^{\ell}(\xx)$ by construction.
    Next, for neighboring $\xx,\xx'$, let $j$ be the index at which $\xx_j \neq \xx'_j$.
    We assumed an ordering to the indices for simplicity, but we equivalently have 

    \[
    \widebar{L}_f^{\ell}(\xx) = \min_{S \subseteq [n] : |S| = n-\ell} \big\{\sum_{i\in S} \calL(\xx_i) \big\} + \ell \cdot a 
    \]

    For some given $\ell$, let $S_{\xx}$ be the subset of indices that minimizes this for $\widebar{L}_f^{\ell}(\xx)$. If $j \notin S_{\xx} $ then we have  $\widebar{L}_f^{\ell}(\xx) \geq \widebar{L}_f^{\ell}(\xx')$, so $\widebar{L}_f^{\ell}(\xx) \geq \widebar{L}_f^{\ell+1}(\xx')$. Otherwise, we know that 

    \[
    \widebar{L}_f^{\ell+1}(\xx') = \min_{S \subseteq [n] : |S| = n-(\ell + 1} \big\{\sum_{i\in S} \calL(\xx'_i) \big\} + (\ell + 1) \cdot a  \leq \sum_{i\in S_{\xx} \setminus j} \calL(\xx_i)  + (\ell + 1) \cdot a
    \]

    and because $a \leq \calL(\xx_j)$ then this implies $\widebar{L}_f^{\ell}(\xx) \geq \widebar{L}_f^{\ell+1}(\xx')$.

\end{proof}

We further show that our approximate bounds allow for monotonic reflective inverse sensitivities which implies improved privacy guarantees.

\begin{lemma}\label{lem:lin_sep_monotonic}
    Given a linearly separable function $f:\cX \to \R$, along with approximate upper and lower sensitivity bounding functions $\widebar{U}_f^{\ell}: \cX^n \to \R$ and $\widebar{L}_f^{\ell}: \cX^n \to \R$ such that 

    \[
    \widebar{L}_f^{\ell}(\xx) = \sum_{i=1}^{n-\ell} \calL(\xx_i) + \ell \cdot a 
    \text{\ \ \ \ \ \ \ \ \ \ \ \ and \ \ \ \ \ \ \ \ \ \ \ \ }
    \widebar{U}_f^{\ell}(\xx) = \sum_{i=\ell}^{n} \calL(\xx_i) + \ell \cdot b 
    \]
    where $a \leq \inf_{\xx_k \in \cX} \{\calL(\xx_k)\}$ and $b \geq \sup_{\xx_k \in \cX} \{\calL(\xx_k)\}$.
    For any neighboring datasets $\xx,\xx'$ we have that either $\widebar{\asym{f}}(\xx;\yt) \leq \widebar{\asym{f}}(\xx';\yt) $ for all $\yt \in \R$ or $\widebar{\asym{f}}(\xx;\yt) \geq \widebar{\asym{f}}(\xx';\yt) $ for all $\yt \in \R$

\end{lemma}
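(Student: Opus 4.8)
\emph{Plan and setup.} The statement is precisely the monotonicity condition of Definition~\ref{def:monotonic} for the query stream $\{\widebar{\asym{f}}(\xx;\yt_i)\}$, and I would prove the stronger pointwise version as stated. If the two neighbors coincide there is nothing to show, so assume $\xx,\xx'$ differ in exactly one coordinate $j$. Since $f(\xx)-f(\xx') = \calL(\xx_j) - \calL(\xx'_j)$, assume without loss of generality $\calL(\xx_j) \ge \calL(\xx'_j)$, so that $f(\xx) \ge f(\xx')$; I will then show $\widebar{\asym{f}}(\xx;\yt) \le \widebar{\asym{f}}(\xx';\yt)$ for all $\yt \in \R$, the reversed labelling covering the opposite case.

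\emph{Key domination fact.} I would first establish that $\widebar{L}_f^{\ell}(\xx) \ge \widebar{L}_f^{\ell}(\xx')$ and $\widebar{U}_f^{\ell}(\xx) \ge \widebar{U}_f^{\ell}(\xx')$ for every $\ell \ge 0$. The cleanest route is to rewrite the closed form $\widebar{L}_f^{\ell}(\xx) = \sum_{i=1}^{n-\ell}\calL(\xx_i) + \ell a$ as the sum of the $n-\ell$ smallest values among $\{\calL(\xx_i)\}_{i=1}^n$ plus the constant $\ell a$, and likewise $\widebar{U}_f^{\ell}(\xx)$ as the sum of the $n-\ell$ largest plus $\ell b$; passing from $\xx$ to $\xx'$ only replaces $\calL(\xx_j)$ by the smaller number $\calL(\xx'_j)$, which lowers the sorted list of $\calL$-values entrywise and hence lowers every such partial sum. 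This is the single step that genuinely uses $\dham(\xx,\xx') \le 1$.

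\emph{Case analysis on the position of $\yt$.} Instantiating Definition~\ref{def:approx_sensitivity_bounds} with a dataset against itself shows $\widebar{L}_f^{\ell}(\cdot)$ is non-increasing and $\widebar{U}_f^{\ell}(\cdot)$ non-decreasing in $\ell$, and $\widebar{L}_f^0(\xx) = f(\xx)$. (i) For $\yt > f(\xx) \ge f(\xx')$ the lower constraint $\widebar{L}_f^{\ell}(\cdot) \le \yt$ holds for every $\ell$, so $\widebar{\len{f}}(\cdot;\yt) = \inf\{\ell : \widebar{U}_f^{\ell}(\cdot) \ge \yt\}$ for both datasets; by the $\widebar{U}$-domination the set over which this infimum is taken is larger for $\xx$, giving $\widebar{\len{f}}(\xx;\yt) \le \widebar{\len{f}}(\xx';\yt)$, and since both reflective signs are $+1$ we get $\widebar{\asym{f}}(\xx;\yt) \le \widebar{\asym{f}}(\xx';\yt)$. (ii) For $\yt < f(\xx') \le f(\xx)$ the upper constraint is automatic, $\widebar{\len{f}}(\cdot;\yt) = \inf\{\ell : \widebar{L}_f^{\ell}(\cdot) \le \yt\}$, and the $\widebar{L}$-domination reverses the inclusion, so $\widebar{\len{f}}(\xx;\yt) \ge \widebar{\len{f}}(\xx';\yt)$; now both signs are $-1$, and the conclusion again reads $\widebar{\asym{f}}(\xx;\yt) \le \widebar{\asym{f}}(\xx';\yt)$. (iii) For $f(\xx') \le \yt \le f(\xx)$ domination is not needed: from $\yt \le f(\xx) = \widebar{L}_f^0(\xx)$, either $\yt = f(\xx)$ and $\widebar{\asym{f}}(\xx;\yt)=0$, or $\widebar{\len{f}}(\xx;\yt) \ge 1$ with reflective sign $-1$ and hence $\widebar{\asym{f}}(\xx;\yt) \le -\tfrac12$; in all cases $\widebar{\asym{f}}(\xx;\yt) \le 0$, and symmetrically $\yt \ge f(\xx')$ forces $\widebar{\asym{f}}(\xx';\yt) \ge 0$. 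Combining the three cases yields the claim.

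\emph{Expected obstacle.} None of this is deep; the care goes into boundary bookkeeping — propagating $\inf\emptyset = \infty$ and $\mathrm{sgn}(0)=0$ consistently (for instance, when $\widebar{\len{f}}(\xx;\yt)=\infty$ with $\yt > f(\xx)$, the $\widebar{U}$-domination forces $\widebar{\len{f}}(\xx';\yt)=\infty$ as well, so the inequality still holds under the usual conventions) — together with stating the "sum of the $n-\ell$ largest / smallest $\calL$-values" rewriting of the given closed forms precisely enough that the entrywise-domination step is immediate.
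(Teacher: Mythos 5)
Your proposal is correct and follows essentially the same route as the paper: a domination fact for the approximate bounds between neighbors (the paper's Lemma~\ref{lem:lin_sep_monotonic_helper}, which you prove via the equivalent ``sorted $\calL$-values dominate entrywise'' observation rather than the min-over-subsets formulation) followed by the same three-case split on the position of $\yt$ relative to $f(\xx)$ and $f(\xx')$. The only cosmetic differences are the reversed WLOG labelling and that in the middle case you argue via signs ($\widebar{\asym{f}}(\xx;\yt)\leq 0\leq\widebar{\asym{f}}(\xx';\yt)$) instead of the paper's use of the neighboring property to pin the values to $\{0,\pm\tfrac12\}$.
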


\begin{proof}

Without loss of generality, assume $f(\xx) \leq f(\xx')$ and we will show $\widebar{\asym{f}}(\xx;\yt) \geq \widebar{\asym{f}}(\xx';\yt) $ for all $\yt \in R$. 


We first consider $\yt \in [f(\xx), f(\xx')]$. Given that the datasets are neighboring, we must have ${L}_f^{1}(\xx) \leq \yt \leq {U}_f^{1}(\xx)$ and ${L}_f^{1}(\xx') \leq \yt \leq {U}_f^{1}(\xx')$. By Definition~\ref{def:approx_sensitivity_bounds} and Definition~\ref{def:approx_inv_sens_length}, we have that $\widebar{\len{f}}(\xx;\yt) \leq 1$ and $\widebar{\len{f}}(\xx';\yt) \leq 1$. Given that $\yt \in [f(\xx), f(\xx')]$ this then implies $\widebar{\asym{f}}(\xx;\yt) \in \{1/2, 0\}$ and $\widebar{\asym{f}}(\xx';\yt) \in \{-1/2, 0\}$. Therefore $\widebar{\asym{f}}(\xx;\yt) \geq \widebar{\asym{f}}(\xx';\yt) $.

Next consider the case when $\yt < f(\xx)$. This implies  $\mathrm{sgn}(\yt - f(\xx)) = \mathrm{sgn}(\yt - f(\xx')) = -1$ and also that $\widebar{\len{f}}(\xx;\yt) \geq 1$ and $\widebar{\len{f}}(\xx';\yt) \geq 1$. By Lemma~\ref{lem:lin_sep_monotonic_helper} we have that $\widebar{L}_f^{\ell}(\xx) \leq \widebar{L}_f^{\ell}(\xx')$ for all $\ell \geq 0$, which implies $\widebar{\len{f}}(\xx;\yt) \leq \widebar{\len{f}}(\xx';\yt) $ and therefore $\widebar{\asym{f}}(\xx;\yt) \geq \widebar{\asym{f}}(\xx';\yt) $.

Finally consider $\yt > f(\xx')$. This implies  $\mathrm{sgn}(\yt - f(\xx)) = \mathrm{sgn}(\yt - f(\xx')) = 1$ and also that $\widebar{\len{f}}(\xx;\yt) \geq 1$ and $\widebar{\len{f}}(\xx';\yt) \geq 1$. By Lemma~\ref{lem:lin_sep_monotonic_helper} we have that $\widebar{U}_f^{\ell}(\xx) \leq \widebar{U}_f^{\ell}(\xx')$ for all $\ell \geq 0$, which implies $\widebar{\len{f}}(\xx;\yt) \geq \widebar{\len{f}}(\xx';\yt) $ and therefore $\widebar{\asym{f}}(\xx;\yt) \geq \widebar{\asym{f}}(\xx';\yt) $.

\end{proof}

We will also require the following helper lemma in order prove Lemma~\ref{lem:lin_sep_monotonic}.

\begin{lemma}\label{lem:lin_sep_monotonic_helper}
    Given a linearly separable function $f:\cX \to \R$, along with approximate upper and lower sensitivity bounding functions $\widebar{U}_f^{\ell}: \cX^n \to \R$ and $\widebar{L}_f^{\ell}: \cX^n \to \R$ such that 

    \[
    \widebar{L}_f^{\ell}(\xx) = \sum_{i=1}^{n-\ell} \calL(\xx_i) + \ell \cdot a 
    \text{\ \ \ \ \ \ \ \ \ \ \ \ and \ \ \ \ \ \ \ \ \ \ \ \ }
    \widebar{U}_f^{\ell}(\xx) = \sum_{i=\ell}^{n} \calL(\xx_i) + \ell \cdot b 
    \]
    where $a \leq \inf_{\xx_k \in \cX} \{\calL(\xx_k)\}$ and $b \geq \sup_{\xx_k \in \cX} \{\calL(\xx_k)\}$.
    For any neighboring datasets $\xx,\xx'$, if $f(\xx) \leq f(\xx')$ then  $\widebar{L}_f^{\ell}(\xx) \leq \widebar{L}_f^{\ell}(\xx')$ and  $\widebar{U}_f^{\ell}(\xx) \leq \widebar{U}_f^{\ell}(\xx')$ for all $\ell \geq 0$

\end{lemma}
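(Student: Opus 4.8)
The plan is to reduce the statement to a fact about the sorted order statistics of the multiset of per-element loss values, and then invoke the elementary observation that weakly raising a single element of a multiset weakly raises every order statistic.

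First I would observe that both $\widebar{L}_f^{\ell}(\xx)$ and $\widebar{U}_f^{\ell}(\xx)$ depend on $\xx$ only through the multiset $A(\xx) \defeq \{\calL(\xx_1),\dots,\calL(\xx_n)\}$. Indeed, under the ordering convention $\calL(\xx_1)\le\cdots\le\calL(\xx_n)$, the term $\sum_{i=1}^{n-\ell}\calL(\xx_i)$ is precisely the sum of the $n-\ell$ smallest elements of $A(\xx)$, while $\sum_{i}\calL(\xx_i)$ appearing in $\widebar{U}_f^{\ell}(\xx)$ is the sum of a fixed number of the \emph{largest} elements of $A(\xx)$; the additive terms $\ell a$ and $\ell b$ are data-independent constants. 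The edge cases where the index ranges are empty, i.e. $\ell\ge n$, and the trivial case $\dham(\xx,\xx')=0$, are handled directly, since both sides are then equal.

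Next, since $\xx$ and $\xx'$ are neighboring they agree in all but one coordinate $j$, so $A(\xx')$ is obtained from $A(\xx)$ by replacing the single value $v\defeq\calL(\xx_j)$ by $v'\defeq\calL(\xx'_j)$ while leaving the other $n-1$ values unchanged. Because $f(\xx)=\sum_i \calL(\xx_i)$ and the $n-1$ common terms cancel, the hypothesis $f(\xx)\le f(\xx')$ is equivalent to $v\le v'$. The key step is then the monotonicity claim: if a multiset $B$ of size $n$ is obtained from a multiset $A$ of size $n$ by replacing one element $v$ with $v'\ge v$, then for every $k$ the $k$-th smallest element of $B$ is at least the $k$-th smallest element of $A$; this is immediate by writing both in sorted order and noting that raising one entry of a vector and re-sorting moves each coordinate weakly upward. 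Summing this inequality over the $k$ smallest positions shows the sum of the $k$ smallest elements of $A$ is at most that of $B$, and summing over the $k$ largest positions gives the analogous statement for the largest elements. Taking $k=n-\ell$ (respectively the corresponding count for the upper bounds) and adding back $\ell a$ and $\ell b$ yields $\widebar{L}_f^{\ell}(\xx)\le\widebar{L}_f^{\ell}(\xx')$ and $\widebar{U}_f^{\ell}(\xx)\le\widebar{U}_f^{\ell}(\xx')$ for all $\ell\ge 0$.

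I expect the only real friction to be bookkeeping: the ordering convention is applied separately to $\xx$ and to $\xx'$, so one must avoid conflating the position of the altered element in the sorted order of $\xx$ with its position in the sorted order of $\xx'$. Passing to the reorder-invariant multiset description removes this difficulty entirely, and the remaining mathematical content — the order-statistic monotonicity claim — is elementary.
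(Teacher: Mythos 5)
Your proof is correct and takes essentially the same route as the paper's: both arguments sidestep the sorted-index bookkeeping by passing to a permutation-invariant description of the bounds (the paper writes $\widebar{L}_f^{\ell}(\xx)$ as a minimum over index subsets of size $n-\ell$, you write it as a sum of the smallest order statistics of the multiset $\{\calL(\xx_i)\}$) and then reduce to the single altered coordinate, using $f(\xx)\le f(\xx')$ to get $\calL(\xx_j)\le\calL(\xx'_j)$. The only cosmetic difference is that the paper case-splits on whether the changed index lies in the minimizing subset for $\xx'$, whereas you invoke the equivalent elementary fact that raising one element weakly raises every order statistic; both close the argument identically for the upper bounds.
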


\begin{proof}
    Let $j$ be the index at which $\xx_j \neq \xx'_j$, which implies $\calL(\xx_j) \leq \calL(\xx'_j)$ because of our linearly separable property.
    We assumed an ordering to the indices for simplicity, but we equivalently have 

    \[
    \widebar{L}_f^{\ell}(\xx) = \min_{S \subseteq [n] : |S| = n-\ell} \big\{\sum_{i\in S} \calL(\xx_i) \big\} + \ell \cdot a 
    \]

    For a given $\ell$ let $S_{\xx'}$ denote the set of indices that minimizes $\widebar{L}_f^{\ell}(\xx')$. If $j \in S_{\xx'}$ then we have 

    \[
    \widebar{L}_f^{\ell}(\xx') = \calL(\xx'_j) + \sum_{i\in S_{\xx'} \setminus j} \calL(\xx_i) + \ell \cdot a \geq \calL(\xx_j) + \sum_{i\in S_{\xx'} \setminus j} \calL(\xx_i) + \ell \cdot a = \sum_{i\in S_{\xx'}} \calL(\xx_i) + \ell \cdot a \geq \widebar{L}_f^{\ell}(\xx)
    \]

    Similarly, if $j \notin S_{\xx'}$ then 
    \[
    \widebar{L}_f^{\ell}(\xx') =  \sum_{i\in S_{\xx'}} \calL(\xx_i) + \ell \cdot a \geq \widebar{L}_f^{\ell}(\xx)
    \]

The proof for $\widebar{U}_f^{\ell}(\xx) \leq \widebar{U}_f^{\ell}(\xx')$  follows equivalently.

\end{proof}

\subsection{Efficient cross-entropy loss instantiation}

We will provide the efficient instantiation for multi-class cross entropy loss as this can easily be extended to binary cross entropy loss.
Without loss of generality, assume the indices are ordered such that 

\[
-\log\left(\frac{e^{\m(\xx_i)_{y_i}}}{\sum_{j=1}^\class e^{\m(\xx_i)_j}}\right) \leq -\log\left(\frac{e^{\m(\xx_{i+1})_{y_{i+1}}}}{\sum_{j=1}^\class e^{\m(\xx_{i+1})_j}}\right)
\]

We can then provide the lower output bounds for cross entropy loss

\begin{lemma}\label{lem:ce_lower_output_bounds}
Given a dataset $(\xx,y)$ and machine learning model $\m: \calX \to \R^\class$, then 
\[
\widebar{L}_{\ce_\m}^\ell (\xx,y) = -  \sum_{i=1}^{n-\ell} \log\left(\frac{e^{\m(\xx_i)_{y_i}}}{\sum_{j=1}^\class e^{\m(\xx_i)_j}}\right)
\]
    
\end{lemma}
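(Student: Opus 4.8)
The plan is to recognize $\ce_\m$ as a linearly separable function in the sense of Section~\ref{sec:class_analysis} and then quote the output bounds established there. Set $\calL(\xx_i,y_i) \defeq -\log\!\left(\frac{e^{\m(\xx_i)_{y_i}}}{\sum_{j=1}^\class e^{\m(\xx_i)_j}}\right)$, the per-example cross-entropy term, so that $\ce_\m(\xx,y) = \sum_{i=1}^n \calL(\xx_i,y_i)$. Viewing a record as the pair $(\xx_i,y_i)$ ranging over $\cX \times [\class]$, this is exactly the additive structure of a linearly separable function with the record space $\cX \times [\class]$ playing the role of $\cX$: changing individual $i$'s data alters only $(\xx_i,y_i)$ and hence only the $i$-th summand, so Lemma~\ref{lem:lin_sep_lower} and Lemma~\ref{lem:lin_sep_approx} apply essentially verbatim.

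First I would record the elementary fact that $\calL(\xx_i,y_i) \ge 0$, which holds because the softmax value $\frac{e^{\m(\xx_i)_{y_i}}}{\sum_{j=1}^\class e^{\m(\xx_i)_j}}$ lies in $(0,1]$ and so its negative logarithm is nonnegative. Consequently $0 \le \inf_{(\xx_k,y_k)} \calL(\xx_k,y_k)$, so $a = 0$ is an admissible choice in Lemma~\ref{lem:lin_sep_approx}; importantly, we do not need to compute the exact infimum of the per-record loss (which would depend on the model and on any range restriction imposed on its outputs) — only that $0$ is a valid lower bound for it.

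Then I would instantiate Lemma~\ref{lem:lin_sep_approx} with $a = 0$. Under the stated ordering $\calL(\xx_1,y_1) \le \cdots \le \calL(\xx_n,y_n)$, it directly gives $\widebar{L}_{\ce_\m}^\ell(\xx,y) = \sum_{i=1}^{n-\ell} \calL(\xx_i,y_i) + \ell \cdot 0$, which is precisely the claimed expression, and Lemma~\ref{lem:lin_sep_approx} simultaneously certifies that this function satisfies Definition~\ref{def:approx_sensitivity_bounds}. The step requiring the most care is the mild passage from a per-record map $\cX \to \R$ in Section~\ref{sec:class_analysis} to one on pairs $\cX \times [\class] \to \R$, but this is exactly the index-agnostic generalization noted there, and the proofs of Lemma~\ref{lem:lin_sep_lower} and Lemma~\ref{lem:lin_sep_approx} use nothing beyond additivity and a finite lower bound on the per-record contribution, so no genuine obstacle arises. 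The binary-classification case is identical, taking $\class = 2$ and replacing $\calL$ by the binary cross-entropy summand appearing in the definition of $\bce_\m$.
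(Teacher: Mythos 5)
Your proof is correct and matches the paper's argument: both apply Lemma~\ref{lem:lin_sep_lower} and Lemma~\ref{lem:lin_sep_approx} to the linearly separable decomposition $\ce_\m(\xx,y) = \sum_i \calL(\xx_i,y_i)$, taking $a=0$ as a valid lower bound since each per-example cross-entropy term is nonnegative. Your extra remarks about the record space being $\cX\times[\class]$ and the binary case are correct but routine; the paper treats them as implicit.
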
 

\begin{proof}
    We apply Lemma~\ref{lem:lin_sep_lower} and Lemma~\ref{lem:lin_sep_approx} and we observe that 

    \[
    \inf_{\xx_i,y_i} \Bigg\{- \log\left(\frac{e^{\m(\xx_i)_{y_i}}}{\sum_{j=1}^\class e^{\m(\xx_i)_j}}\right)\Bigg\} \geq 0
    \]
\end{proof}

As seen with variance in Section~\ref{sec:basic_stats}, we have that $U_{\ce_\m}^1 (\xx,y) = \infty$ which implies that cross-entropy loss has inherently asymmetric sensitivities.
Similarly, we will need to restrict the range of these values in order to apply inverse sensitivity mechanism even though our method could easily handle the unbounded setting.
We provide a proof for these upper output bounds in Appendix~\ref{sec:class_analysis}.

\begin{lemma}
Given a dataset $(\xx,y)$ and machine learning model $\m: \calX \to \R^\class$ where we restrict $\m(\xx_i) \in [a,b]^\class$ for all $i$, then 

     \[
     \widebar{U}_{\ce_\m}^\ell (\xx,y) = -  \left( 
     \ell \cdot \log\left(\frac{e^{a-b}}{e^{a-b} + \class - 1}\right) + 
     \sum_{i=\ell + 1}^{n} \log\left(\frac{e^{\m(\xx_i)_{y_i}}}{\sum_{j=1}^\class e^{\m(\xx_i)_j}}\right) 
     \right)
     \]
     
\end{lemma}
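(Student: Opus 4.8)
The plan is to derive these upper output bounds by appealing to the general machinery for linearly separable functions established in Lemmas~\ref{lem:lin_sep_upper} and~\ref{lem:lin_sep_approx}. The key observation is that $\ce_\m(\xx,y) = \sum_{i=1}^n \calL(\xx_i,y_i)$ where $\calL(\xx_i,y_i) = -\log\left(e^{\m(\xx_i)_{y_i}} / \sum_{j=1}^\class e^{\m(\xx_i)_j}\right)$, which is exactly the linearly separable form (with the minor caveat that $\calL$ depends on the label $y_i$ as well, which the remark following the definition of linearly separable functions explicitly permits). Under the restriction $\m(\xx_i) \in [a,b]^\class$, the per-coordinate loss is bounded, so Lemma~\ref{lem:lin_sep_approx} applies with the supremum parameter $b$ in that lemma instantiated as the worst-case per-element cross-entropy contribution.

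First I would compute $\sup \{\calL(\xx_i,y_i) : \m(\xx_i) \in [a,b]^\class\}$, i.e. the maximum possible value of $-\log\left(e^{\m(\xx_i)_{y_i}} / \sum_j e^{\m(\xx_i)_j}\right)$ subject to every coordinate lying in $[a,b]$. The softmax probability assigned to the true class $y_i$ is minimized when $\m(\xx_i)_{y_i}$ is as small as possible (equal to $a$) and every other coordinate is as large as possible (equal to $b$); this gives softmax probability $e^{a}/(e^{a} + (\class-1)e^{b}) = e^{a-b}/(e^{a-b} + \class - 1)$, so the supremum of $\calL$ is $-\log\left(e^{a-b}/(e^{a-b}+\class-1)\right)$. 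Substituting this as the parameter ``$b$'' in the formula $\widebar{U}_f^{\ell}(\xx) = \sum_{i=\ell}^n \calL(\xx_i) + \ell \cdot b$ from Lemma~\ref{lem:lin_sep_approx}, and relabeling the summation index so that it runs over the $n-\ell$ smallest contributions (equivalently $\sum_{i=\ell+1}^n$ under the stated ordering, matching the indexing convention used for the lower bounds in Lemma~\ref{lem:ce_lower_output_bounds}), yields exactly the claimed expression. Then I would invoke Lemma~\ref{lem:lin_sep_approx} to conclude that this $\widebar{U}_{\ce_\m}^\ell$ satisfies Definition~\ref{def:approx_sensitivity_bounds}, namely $\widebar{U}_{\ce_\m}^\ell(\xx,y) \geq U_{\ce_\m}^\ell(\xx,y)$ and $\widebar{U}_{\ce_\m}^\ell(\xx,y) \leq \widebar{U}_{\ce_\m}^{\ell+1}(\xx',y')$ for neighboring datasets.

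The main obstacle — really the only nontrivial step — is the optimization over $[a,b]^\class$ to identify the worst-case per-element loss; this is a short convexity/monotonicity argument (the softmax is increasing in the true-class logit and decreasing in every other logit, so the extremal configuration is at the corner with $\m(\xx_i)_{y_i}=a$ and all other coordinates $=b$). A secondary bookkeeping point is making the index shift between ``$\sum_{i=\ell}^n$'' in the generic Lemma~\ref{lem:lin_sep_approx} statement and ``$\sum_{i=\ell+1}^n$'' here consistent with the ordering convention; this is purely cosmetic since both sum the $n-\ell$ smallest per-element contributions. Everything else is a direct citation of the linearly separable framework, so no further structural work is needed.
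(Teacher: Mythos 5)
Your proposal is correct and follows essentially the same route as the paper: invoke the linearly separable framework (Lemmas~\ref{lem:lin_sep_upper} and~\ref{lem:lin_sep_approx}) and bound the per-element loss supremum by $-\log\bigl(e^{a-b}/(e^{a-b}+\class-1)\bigr)$ via the corner configuration $\m(\xx_i)_{y_i}=a$, all other logits $=b$. You simply spell out the extremal softmax computation and the index bookkeeping more explicitly than the paper does, which is fine.
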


\begin{proof}
    We apply Lemma~\ref{lem:lin_sep_upper} and Lemma~\ref{lem:lin_sep_approx}, and we observe that with our restricted bounds we have 

    \[
    \sup_{\xx_i,y_i} \Bigg\{- \log\left(\frac{e^{\m(\xx_i)_{y_i}}}{\sum_{j=1}^\class e^{\m(\xx_i)_j}}\right)\Bigg\} \leq -\log\left(\frac{e^{a-b}}{e^{a-b} + \class - 1}\right)
    \]
\end{proof}

The corresponding algorithm for instantiating cross-entropy loss with our method is similar to Algorithm~\ref{algo:var_asymmetric}. 
The privacy guarantees from Theorem~\ref{thm:var_instant} also follow equivalently, but we can additionally improve the privacy to be $(\diffp_1 + \diffp_2)$-DP by applying Lemma~\ref{lem:lin_sep_monotonic} to achieve monotonicity.
We can also achieve $O(n \log(n) + q)$ runtime by computing all of our approximate upper and lower bounds, but we could also easily employ the same strategy of setting these bounds to be infinity and zero, respectively, for all $\ell > c$ where we set $c = 100$.
This then gives the linear runtime, where we also utilize the fact that we will never run $\AT$ from more than 50,000 queries.

\subsection{Efficient implementation for regression evaluation}

We will only provide the efficient implementation for MSE in this section as MAE will follow identically. 
Without loss of generality, assume the indices are ordered such that $(\m(\xx_{i}) - y_{i})^2 \leq (\m(\xx_{i+1}) - y_{i+1})^2$.

\begin{lemma}
Given a dataset $(\xx,y)$ and machine learning model $\m: \calX \to \R$, then 
\[
\widebar{L}_{\mse_\m}^\ell (\xx,y) = \frac{\sum_{i=1}^{n-\ell}(\m(\xx_i) - y_i)^2}{n}
\]

\end{lemma}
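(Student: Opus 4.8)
The plan is to observe that $\mse_\m$ is a linearly separable function in the sense of the preceding subsection and then apply Lemma~\ref{lem:lin_sep_lower} together with its approximate relaxation Lemma~\ref{lem:lin_sep_approx}. Concretely, taking the per-datapoint map to be $\calL(\xx_i, y_i) = \frac{1}{n}(\m(\xx_i) - y_i)^2$, we have $\mse_\m(\xx,y) = \sum_{i=1}^n \calL(\xx_i, y_i)$, so the structural results for linearly separable functions apply directly. The constant normalization $1/n$ is harmless since it is data independent, and the relevant notion of a single ``individual'' here is the pair $(\xx_i, y_i)$ as in the model evaluation setup.

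The only point requiring a moment of care is the computation of $\inf_{\xx_k, y_k} \{\calL(\xx_k, y_k)\}$. Since squared error is non-negative we have $\calL(\xx_i,y_i) \geq 0$ for every datapoint, and because the label coordinate ranges over all of $\R$, setting $y_i = \m(\xx_i)$ yields $\calL(\xx_i,y_i) = 0$. Hence $\inf_{\xx_k, y_k}\{\calL(\xx_k,y_k)\} = 0$, so we may take $a = 0$ in Lemma~\ref{lem:lin_sep_approx} (which here in fact recovers the exact lower output bound of Lemma~\ref{lem:lin_sep_lower}).

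Plugging $a = 0$ into Lemma~\ref{lem:lin_sep_approx} gives
\[
\widebar{L}_{\mse_\m}^\ell(\xx,y) = \sum_{i=1}^{n-\ell} \calL(\xx_i, y_i) + \ell \cdot 0 = \frac{\sum_{i=1}^{n-\ell}(\m(\xx_i) - y_i)^2}{n},
\]
where the stated ordering convention $(\m(\xx_{i}) - y_{i})^2 \leq (\m(\xx_{i+1}) - y_{i+1})^2$ guarantees that the indices $1,\dots,n-\ell$ are exactly those carrying the $n-\ell$ smallest contributions, matching the form of the bound. There is essentially no obstacle beyond this bookkeeping; the subtlest step is confirming that the label coordinate is unconstrained so that $\calL$ can be driven to zero, after which the conclusion is immediate. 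The identical argument with $\calL(\xx_i,y_i) = \frac{1}{n}|\m(\xx_i) - y_i|$ handles $\mae_\m$.
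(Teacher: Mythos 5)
Your proof is correct and follows the same route as the paper: identify $\mse_\m$ as a linearly separable function with per-individual loss $\calL(\xx_i,y_i) = \frac{1}{n}(\m(\xx_i)-y_i)^2$, invoke Lemma~\ref{lem:lin_sep_lower}/Lemma~\ref{lem:lin_sep_approx}, and use non-negativity of squared error to take $a=0$. The paper only observes $\inf\{\calL\}\ge 0$ (which is all that is needed for the hypothesis $a\le\inf$); your extra step exhibiting a point where the infimum is attained is fine but not required.
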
 

\begin{proof}
    We apply Lemma~\ref{lem:lin_sep_lower} and Lemma~\ref{lem:lin_sep_approx}, and we observe that 

    \[
    \inf_{\xx_i,y_i} \{\m(\xx_i) - y_i)^2\} \geq 0
    \]
\end{proof}

As seen with variance in Section~\ref{sec:basic_stats}, we have that $U_{\mse_\m}^1 (\xx,y) = \infty$ which implies that MSE has inherently asymmetric sensitivities.
Similarly, we will need to restrict the range of these values in order to apply inverse sensitivity mechanism even though our method could easily handle the unbounded setting.

\begin{lemma}
Given a dataset $(\xx,y)$ and machine learning model $\m: \calX \to \R$ where we restrict $\m(\xx_i) \in [a,b]$ and $y_i \in [a,b]$ for all $i$, then 
     \[
     \widebar{U}_{\mse_\m}^\ell (\xx,y) = \frac{\ell (b-a)^2 + \sum_{i=\ell + 1}^n(\m(\xx_i) - y_i)^2}{n}
     \]

\end{lemma}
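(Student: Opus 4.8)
The plan is to invoke the general linearly separable machinery developed in Appendix~\ref{sec:class_analysis}, just as was done for the lower output bounds and for cross-entropy loss. First I would observe that $\mse_\m(\xx,y)$ is linearly separable in the sense of the appendix: writing each datapoint as the pair $(\xx_i,y_i)$ and setting $\calL(\xx_i,y_i) = (\m(\xx_i)-y_i)^2/n$, we have $\mse_\m(\xx,y) = \sum_{i=1}^n \calL(\xx_i,y_i)$. The stated ordering assumption $(\m(\xx_i)-y_i)^2 \le (\m(\xx_{i+1})-y_{i+1})^2$ is exactly the ordering of the $\calL$ values required by Lemma~\ref{lem:lin_sep_upper} and Lemma~\ref{lem:lin_sep_approx}.

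Next I would apply Lemma~\ref{lem:lin_sep_upper} and Lemma~\ref{lem:lin_sep_approx} with the choice $b$ equal to an upper bound on $\sup \calL$. Under the restriction $\m(\xx_i) \in [a,b]$ and $y_i \in [a,b]$, the worst-case per-point loss satisfies $(\m(\xx_i)-y_i)^2 \le (b-a)^2$, hence
\[
\sup_{\xx_i,y_i}\{(\m(\xx_i)-y_i)^2/n\} \le (b-a)^2/n.
\]
Taking $b_{\text{sep}} = (b-a)^2/n$ as the constant in Lemma~\ref{lem:lin_sep_approx}, the approximate upper bound $\widebar{U}_f^\ell$ given there becomes $\sum_{i=\ell+1}^n (\m(\xx_i)-y_i)^2/n + \ell(b-a)^2/n$, which is precisely the claimed $\widebar{U}_{\mse_\m}^\ell(\xx,y)$. (The index shift to $i=\ell+1$ versus the $i=\ell$ in Lemma~\ref{lem:lin_sep_upper} is the same cosmetic reindexing already used in the cross-entropy proof, reflecting that we drop the $\ell$ smallest loss terms and replace them by the worst-case term.) Lemma~\ref{lem:lin_sep_approx} then immediately certifies that this $\widebar{U}_{\mse_\m}^\ell$ satisfies the two required inequalities of Definition~\ref{def:approx_sensitivity_bounds}, namely $\widebar{U}_{\mse_\m}^\ell(\xx) \ge U_{\mse_\m}^\ell(\xx)$ and $\widebar{U}_{\mse_\m}^\ell(\xx) \le \widebar{U}_{\mse_\m}^{\ell+1}(\xx')$ for neighboring $\xx,\xx'$.

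I do not expect any genuine obstacle here; the only point requiring a moment's care is checking that the boundedness hypothesis is being applied correctly — specifically that with both $\m(\xx_i)$ and $y_i$ confined to $[a,b]$, the largest a single squared residual can be is $(b-a)^2$ (attained at the opposite endpoints), so that the per-point worst case feeds in cleanly as the constant $b_{\text{sep}}$ in the linearly separable template. Everything else is a direct citation of Lemma~\ref{lem:lin_sep_upper} and Lemma~\ref{lem:lin_sep_approx}, exactly paralleling the cross-entropy upper-bound proof.
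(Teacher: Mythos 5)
Your proof is correct and follows essentially the same route as the paper: both invoke Lemma~\ref{lem:lin_sep_upper} and Lemma~\ref{lem:lin_sep_approx} after observing that with $\m(\xx_i), y_i \in [a,b]$ the per-point squared residual is at most $(b-a)^2$. You are slightly more explicit about the $1/n$ normalization in the linearly-separable term $\calL$ and about the cosmetic reindexing, but the substance is identical to the paper's proof.
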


\begin{proof}
    We apply Lemma~\ref{lem:lin_sep_upper} and Lemma~\ref{lem:lin_sep_approx}, and we observe that for our bounded setting 

    \[
    \sup_{\xx_i,y_i} \{\m(\xx_i) - y_i)^2\} \leq (b-a)^2
    \]
\end{proof}

The corresponding algorithm for instantiating MSE with our method is identical to Algorithm~\ref{algo:var_asymmetric}. 
The privacy guarantees from Theorem~\ref{thm:var_instant} also follow equivalently, but we can additionally improve the privacy to be $(\diffp_1 + \diffp_2)$-DP by applying Lemma~\ref{lem:lin_sep_monotonic} to achieve monotonicity.
We can also achieve $O(n \log(n) + q)$ runtime by computing all of our approximate upper and lower bounds, but we could also easily employ the same strategy of setting these bounds to be infinity and zero, respectively, for all $\ell > c$ where we set $c = 100$.
This then gives the linear runtime, where we also utilize the fact that we will never run $\AT$ from more than 50,000 queries.

\end{document}